\def\bC {\mathbf{C}}
\def\bN {\mathbf{N}}
\def\bR {\mathbf{R}}
\def\fa {\mathfrak{a}}
\def\fb {\mathfrak{b}}
\def\fH {\mathfrak{H}}
\def\fK {\mathfrak{K}}
\def\fP {\mathfrak{P}}
\def\fp {\mathfrak{p}}
\def\fQ {\mathfrak{Q}}
\def\fq {\mathfrak{q}}
\def\fv {\mathfrak{v}}
\def\fw {\mathfrak{w}}
\def\fZ {\mathfrak{Z}}
\def\cA {\mathcal{A}}
\def\cB {\mathcal{B}}
\def\cC {\mathcal{C}}
\def\cD {\mathcal{D}}
\def\cF {\mathcal{F}}
\def\cH {\mathcal{H}}
\def\cJ {\mathcal{J}}
\def\cK {\mathcal{K}}
\def\cL {\mathcal{L}}
\def\cP {\mathcal{P}}
\def\cS {\mathcal{S}}
\def\cV {\mathcal{V}}
\def\scrD{\mathscr{D}}
\def\scrH{\mathscr{H}}
\def\a {{\alpha}}
\def\b {{\beta}}
\def\g {{\gamma}}
\def\de {{\delta}}
\def\eps {{\epsilon}}
\def\ll {{\lambda}}
\def\L {{\Lambda}}
\def\Si {{\Sigma}}
\def\om {{\omega}}
\def\d {{\partial}}
\def\grad {{\nabla}}
\def\Dlt {{\Delta}}
\def\rstr {{\big |}}
\def\la {\langle}
\def\ra {\rangle}
\def \La {\bigg\langle}
\def \Ra {\bigg\rangle}
\def\bu {{\bullet}}
\newcommand{\Dom}{\operatorname{Dom}}
\newcommand{\fDom}{\operatorname{Form-Dom}}
\newcommand{\Span}{\operatorname{span}}
\newcommand{\Supp}{\operatorname{supp}}
\newcommand{\Graph}{\operatorname{graph}}
\newcommand{\Det}{\operatorname{det}}
\newcommand{\Tr}{\operatorname{trace}}
\newcommand{\Ker}{\operatorname{Ker}}
\newcommand{\Rank}{\operatorname{rank}}
\newcommand{\Ran}{\operatorname{ran}}
\newcommand{\MKd}{\operatorname{dist_{MK,2}}}
\newcommand{\Op}{\operatorname{OP}}
\def\hb {{\hbar}}
\def\wtilde {\widetilde}
\newcommand{\ba}{\begin{aligned}}
\newcommand{\ea}{\end{aligned}}
\newcommand{\be}{\begin{equation}}
\newcommand{\ee}{\end{equation}}
\newcommand{\lb}{\label}
\newtheorem{Thm}{Theorem}[section]
\newtheorem{Prop}[Thm]{Proposition}
\newtheorem{Cor}[Thm]{Corollary}
\newtheorem{Lem}[Thm]{Lemma}
\newtheorem{Def}[Thm]{Definition}
\newcommand{\vilA}{\mathcal A}
\newcommand{\vilB}{\mathcal B}
\newcommand{\valphi}{a}
\newcommand{\valPhi}{\mathcal A}
\newcommand{\nablaq}{\nabla^{\text{Q}}}
\newcommand{\tcr}{}
\newcommand{\tcb}{}
\newcommand{\tce}{}
\newcommand{\tcee}{}
\newcommand{\fpfq}{\fP\otimes\fQ}
\newcommand{\zfr}{\cF}
\newcommand{\ket}[1]{| #1\rangle}
\newcommand{\bra}[1]{\langle #1|}
\newcommand{\ketbra}[2]{\ket{#1}\bra{#2}}
\begin{document}

\title[Quantum Optimal Transport]{Towards Optimal Transport\\ for Quantum Densities}

\author[E. Caglioti]{Emanuele Caglioti}
\address[E.C.]{Dipartimento di Matematica, Sapienza Universit\`a di Roma, P.le A. Moro 5, 00185 Roma, Italy}
\email{caglioti@mat.uniroma1.it}

\author[F. Golse]{Fran\c cois Golse}
\address[F.G.]{Centre de Math\'ematiques Laurent Schwartz, \'Ecole polytechnique, route de Saclay, 91128 Palaiseau Cedex, France}
\email{francois.golse@polytechnique.edu}

\author[T. Paul]{Thierry Paul}
\address[T.P.]{Laboratoire Jacques-Louis Lions, Sorbonne Universit\'es \& CNRS, bo\^\i te courrier 187, 75252 Paris Cedex 05, France}
\email{paul@ljll.math.upmc.fr}

\begin{abstract}
An analogue of the quadratic Wasserstein (or Monge-Kantorovich) distance between Borel probability measures on $\bR^d$ has been defined in [F. Golse, C. Mouhot, T. Paul: Commun. Math. Phys. 343 (2015), 165--205] for density operators 
on $L^2(\bR^d)$, and used to estimate the convergence rate of various asymptotic theories in the context of quantum mechanics. The present work proves a Kantorovich type duality theorem for this quantum variant of the Monge-Kantorovich 
or Wasserstein distance, and discusses the structure of optimal quantum couplings. {\tcr Specifically, we prove that, 
under some boundedness and constraint hypothesis on the Kantorovich potentials, optimal quantum couplings involve a gradient type structure similar in the quantum paradigm to the Brenier transport map.
On the contrary, when the two quantum densities have finite rank, the structure involved by the optimal coupling has, in general, no classical counterpart.\tce}
\end{abstract}


\keywords{Wasserstein distance, Kantorovich duality, Quantum density operators, Quantum couplings, Optimal transport}

\subjclass{49Q22, 81C99 (35Q93)}

\maketitle





\section{Introduction}\lb{S-Intro}


Let $\mu,\nu\in\cP(\bR^d)$ (the set of Borel probability measures on $\bR^d$). Given a l.s.c. function $C:\,\bR^d\times\bR^d\to[0,+\infty]$, the Monge problem in optimal transport is to minimize the functional
$$
I_C[T]=\int_{\bR^d}C(x,T(x))\mu(dx)\in[0,+\infty]
$$
over the set of Borel maps $T:\,\bR^d\to\bR^d$ such that $\nu=T\#\mu$ (the push-forward measure of $\mu$ by $T$). Here $C(x,y)$ represents the cost of transporting the point $x$ to the point $y$, so that $I_C[T]$ represents the total cost of 
transporting the probability $\mu$ on $\nu$ by the map $T$. An optimal transport map $T$ may fail to exist in full generality, so that one considers instead the following relaxed variant of the Monge problem, known as the Kantorovich 
problem:
$$
\inf_{\pi\in\Pi(\mu,\nu)}\iint_{\bR^d\times\bR^d}C(x,y)\pi(dxdy)\,.
$$
Here, $\Pi(\mu,\nu)$ is the set of couplings of $\mu$ and $\nu$, i.e. the set of Borel probability measures on $\bR^d\times\bR^d$ such that
$$
\iint_{\bR^d\times\bR^d}(\phi(x)+\psi(y))\pi(dxdy)=\int_{\bR^d}\phi(x)\mu(dx)+\int_{\bR^d}\psi(x)\nu(dx)
$$
for all $\phi,\psi\in C_b(\bR^d)$ (where $C_b(\bR^d)$ designates the set of bounded and continuous real-valued functions defined on $\bR^d$). An optimal coupling $\pi_{opt}$ always exists, so that the inf is always attained in the Kantorovich 
problem (see Theorem 1.3 in \cite{VillaniAMS} or Theorem 4.1 in \cite{VillaniTOT}). Of course, if an optimal map $T$ exists for the Monge problem, the push-forward of the measure $\mu$ by the map $x\mapsto(x,T(x))$, which can be (informally) 
written as
\be\lb{TranspCoupl}
\pi(dxdy):=\mu(dx)\de_{T(x)}(dy)
\ee
is an optimal coupling for the Kantorovich problem.

In the special case where $C(x,y)=|x-y|^2$ (the square Euclidean distance between $x$ and $y$)
$$
\MKd(\mu,\nu):=\inf_{\pi\in\Pi(\mu,\nu)}\sqrt{\iint_{\bR^d\times\bR^d}|x-y|^2\pi(dxdy)}
$$
is a distance on
$$
\cP_2(\bR^d):=\left\{\mu\in\cP(\bR^d)\text{ s.t. }\int_{\bR^d}|x|^2\mu(dx)<\infty\right\}\,,
$$
referred to as the Monge-Kantorovich, or the Wasserstein distance of exponent $2$ (see chapter 7 in \cite{VillaniAMS}, or chapter 6 in \cite{VillaniTOT}, or chapter 7 in \cite{AmbrosioGS}). In that case, there is ``almost'' an optimal transport
map, in the following sense: $\pi\in\Pi(\mu,\nu)$ is an optimal coupling for the Kantorovich problem if and only if there exists a proper\footnote{I.e. not identically equal to $+\infty$.} convex l.s.c. function $\valphi:\,\bR^d\to\bR\cup\{+\infty\}$ such 
that
$$
\Supp(\pi)\subset\Graph(\d\valphi)
$$
(where $\d\valphi$ denotes the subdifferential of $\valphi$). This is the Knott-Smith optimality criterion \cite{KnottSmith} (Theorem 2.12 (i) in \cite{VillaniAMS}). If $\mu$ satisfies the condition
\be\lb{SmallSets}
B\text{ is Borel measurable and }\cH^{d-1}(B)<\infty\implies\mu(B)=0\,,
\ee
there exists a unique optimal coupling $\pi$ of the form \eqref{TranspCoupl} for the Kantorovich problem, with $T=\grad\valphi$, where $\valphi$ is a proper convex l.s.c. function\footnote{In particular $\grad\valphi$ is defined a.e. on $\bR^d$.} on 
$\bR^d$. (In condition \eqref{SmallSets}, the notation $\cH^{d-1}(B)$ designates the $d-1$-dimensional measure of $B$.) This is the Brenier optimal transport theorem \cite{Brenier} (stated as Theorem 2.12 (ii) in \cite{VillaniAMS}). It allows recasting 
\eqref{TranspCoupl} as 
\be\label{TranspCoupl2}
(y-\grad\valphi(x))\pi(dxdy)=0,
\ee
and the a.e. defined map $\grad\valphi$ is referred to as the ``Brenier optimal transport map''.

Integrating $\pi$ against a test function depending only on $x$ shows that $\grad\valphi$ transports the $x$-marginal $\mu$ of $\pi$ to its $y$-marginal $\nu$, i.e.
\be\label{brenier1}
\nu=\grad\valphi\#\mu\,.
\ee
(This equality can obviously be deduced from \eqref{TranspCoupl} as well.)
 
\bigskip
Recently, an analogue of the Monge-Kantorovich-Wasserstein distance $\MKd$ has been defined in \cite{FGMouPaul} on the set $\cD(\fH)$ of density operators on the Hilbert space $\fH:=L^2(\bR^d)$. (Recall that a density operator on $\fH$ 
is a linear operator $R$ on $\fH$ such that $R=R^*\ge 0$ and $\Tr(R)=1$.) This definition is based on the following well known correspondence between classical and quantum paradigms.

\noindent
(a) Bounded continuous functions $f\equiv f(q,p)$ on the phase space $\bR^d_q\times\bR^d_p$ should be put in correspondence with bounded operators on the Hilbert space $\fH=L^2(\bR^d_q)$ of square-integrable functions defined on the configuration space $\bR^d_q$.

\noindent
(b) The (Lebesgue) integral of (integrable) functions on $\bR^d_q\times\bR^d_q$ should be replaced by the trace of (trace-class) operators on $\fH$.

\noindent
(c) The coordinates $q_j$ (for $j=1,\ldots,d$) on the null section of the phase space $\bR^d_q\times\bR^d_p$ should be put in correspondence  with the (unbounded) self-adjoint operators $Q_j$ on $\fH$ defined by
$$
\Dom(Q_j):=\left\{\psi\in\fH\text{ s.t. }\int_{\bR^d}q_j^2|\psi(q)|^2dq<\infty\right\}\,,\quad (Q_j\psi)(q):=q_j\psi(q)
$$
for all $j=1,\ldots,d$.

\noindent
(d) The coordinates $p_j$ (for $j=1,\ldots,d$) on the fibers of the phase space $\bR^d_q\times\bR^d_p$ should be put in correspondence with the (unbounded) self-adjoint operators $P_j$ on $\fH$ defined by
$$
\Dom(P_j):=\left\{\psi\in\fH\text{ s.t. }\int_{\bR^d}|\d_{q_j}\psi(q)|^2dq<\infty\right\}\,,\quad (P_j\psi)(q):=-i\hb\d_{q_j}\psi(q)
$$
for all $j=1,\ldots,d$.

\noindent
(e) The first order differential operators $f\mapsto \{q_j,f\}$ and $f\mapsto\{p_j,f\}$, where $\{\cdot,\cdot\}$ is the Poisson bracket on $\bR^d_q\times\bR^d_p$ such that 
$$
\{p_j,p_k\}=\{q_j,q_k\}=0\,,\quad\{p_j,q_k\}=\de_{jk}\qquad\text{ for }j,k=1,\ldots,d
$$
should be replaced with the derivations on $\cL(\fH)$ (the algebra of bounded linear operators on $\fH$) defined by
$$
A\mapsto\tfrac{i}\hb[Q_j,A]\quad\text{ and }\quad A\mapsto\tfrac{i}\hb[P_j,A]
$$
for $j=1,\ldots,d$.

\smallskip
Following these principles, the quadratic transport cost from $(x,\xi)$ to $(y,\eta)$ in $\bR^d\times\bR^d$ should be replaced with the differential operator on $\bR^d_x\times\bR^d_y$
\be\lb{DefC}
C:=\sum_{j=1}^d((x_j-y_j)^2-\hb^2(\d_{x_j}-\d_{y_j})^2)\,.
\ee
Henceforth we denote by $H$ the Hamiltonian 
\be\lb{DefH}
H:=\sum_{j=1}^d(Q_j^2+P_j^2)=|x|^2-\hb^2\Dlt_x
\ee
of the quantum harmonic oscillator. Given $R,S\in\cD_2(\fH)$, the set of density operators $\rho$ on $\fH$ such that $\Tr(\rho^{1/2}H\rho^{1/2})<\infty$, the quantum analogue of the Monge-Kantorovich-Wasserstein distance $\MKd$ is defined 
by the quantum Kantorovich problem (see Definition 2.2 in \cite{FGMouPaul})
\be\lb{DefMKhb}
MK_\hb(R,S):=\inf_{F\in\cC(R,S)}\sqrt{\Tr_{\fH\otimes\fH}(F^{1/2}CF^{1/2})}\,,
\ee
where $\cC(R,S)$ is the set of quantum couplings of $R$ and $S$, i.e.
\be\label{defsetofcoup}
\cC(R,S):=\{F\in\cD(\fH\otimes\fH)\text{ s.t. }\Tr_{\fH\otimes\fH}((A\otimes I+I\otimes B)F)=\Tr_\fH(AR+BS)\}\,.
\ee
(See Definition 2.1 in \cite{FGMouPaul}.) The functional $MK_\hb$ is a particularly convenient tool to obtain a convergence rate for the mean-field limit in quantum mechanics that is uniform in the Planck constant $\hb$ (see Theorem 2.4 in 
\cite{FGMouPaul}, and Theorem 3.1 in \cite{FGPaulPulvi} for precise statements of these results).

The striking analogy between the Wasserstein distance $\MKd$ and the quantum functional $MK_\hb$ suggests the following questions concerning a possible Brenier type theorem in quantum mechanics,  motivated in a heuristic way 
by the following considerations.

As mentioned before the classical underlying paradigm for quantum mechanics is the classical phase space $\bR^{2d}=T^*\bR^d$ equipped with the standard symplectic structure leading to the Poisson bracket defined in item $(e)$ above. 
Therefore, in this setting and under assumption \eqref{SmallSets}, equation \eqref{TranspCoupl2} reads
\be\label {TranspCoupl3}
(z'-\grad\valphi(z))\pi(dz,dz')=0\,,
\ee
where $z:=(q,p)$ and $z':=(q',p')$ are the coordinates on the phase space $T^*\bR^d$ and $dz:=dqdp,\ dz'=dq'dp'$.
 
Defining the mapping $J:T^*\bR^d\to T^*\bR^d$ entering the definition of the symplectic form $\sigma$ of $T^*\bR^d$ as $\sigma(dz,dz')=dz\wedge dJz'$ --- in the $z=(q,p)$ coordinates 
$$
J=\begin{pmatrix}
0&I_{\bR^d}\\-I_{\bR^d}&0
\end{pmatrix}
$$
--- equation \eqref{TranspCoupl3} can be put in the form 
\be\label{TranspCoupl4}
(z'-\{Jz,\valphi(z)\})\pi(dzdz')=0\,.
\ee
This symplectic formulation of the Brenier theorem is more likely to have an analogue in quantum mechanics. Indeed, according to the items (c), (d) and (e) above, the factor $(z'-\{Jz,\valphi(z)\})$ 
should be put in correspondence with the (vector-valued) operator on  $\fH\otimes\fH$
\be\label{heur1}
I_\fH\otimes Z-\frac1{i\hbar}[JZ,\valPhi]\otimes  I_\fH=I_\fH\otimes Z-\nablaq\valPhi\otimes  I_\fH\,,
\ee
for some operator $\valPhi$ on $\fH$. In \eqref{heur1}, $Z$ designates the vector of operator-valued coordinates $(Q_1,\dots,Q_d,P_1,\dots,P_d)$, and we use the notation $\nablaq:=\frac1{i\hbar}[JZ,\cdot]$.

Having in mind that the optimal classical coupling $\pi$ should be put in correspondence with an optimal element $F_{op}$  of $\cC(R,S)$ defined in \eqref{defsetofcoup}, the only ambiguity which remains in giving a quantum 
version of  \eqref{TranspCoupl4} is the choice of an ordering for the product of the operators $I_\fH\otimes Z-{\nablaq\valPhi}\otimes  I_\fH$ and $F_{op}$. 

It happens that this ambiguity will be resolved by distributing the square-root of $F_{op}$ on both sides of the expression \eqref{heur1}, which leads us to the very symmetric equality (see Theorem \ref{T-QTransp} in the next 
section):
\be\label{heur2}
F_{op}^{1/2}(I_\fH\otimes Z-\nablaq\valPhi\otimes  I_\fH)F_{op}^{1/2}=0\,.
\ee
(Notice that one cannot define a square-root of the optimal coupling in the classical case, since such a coupling is a Dirac measure, as shown by \eqref{TranspCoupl}.)

Clearly \eqref{heur2} gives a hint on the structure of optimal quantum couplings in the definition \eqref{DefMKhb} of the $MK_\hb(R,S)$ and on an analogue of the notion of Brenier optimal transport map. Notice that we are 
missing a quantum analogue of the original variational problem considered by Monge, or, equivalently, of the coupling \eqref{TranspCoupl}, so that defining a notion of quantum optimal transport seems far from obvious.

Nevertheless, \eqref{heur2} says that, once projected on the orthogonal of the kernel of an optimal coupling, the operators $I_\fH\otimes Z$ and ${\nablaq\valPhi}\otimes I_\fH$ are equal, in agreement 
with Brenier's theorem put in the form: ``the support of the optimal coupling is the graph of the gradient of a convex function".

The presence of $F_{op}^{1/2}$ on both sides of the expression between parenthesis in the left hand side of \eqref{heur2} forbids getting a quantum equivalent to \eqref{brenier1}, whose formulation is not clear anyway. 
Indeed, changes of variables in quantum mechanics are ill defined, except for linear symplectic mappings through the metaplectic representation. However, denoting by $Z'$ the (operator-valued) vector 
$$
Z':={\nablaq\valPhi}\,,
$$
(with the same operator $\valPhi$ as in \eqref{heur1}-\eqref{heur2}) and writing the trace of the left hand side of \eqref{heur2} in terms of the marginals of $F_{op}$ shows that
\be\label{heur3}
\Tr{(ZR)}=\Tr{(Z'S)}\,.
\ee
Formula \eqref{heur3} can be interpreted in the framework of the so-called Ehrenfest correspondance principle (abusively called Ehrenfest's \textit{Theorem} sometimes) \cite{Ehr,Hepp}: in quantum mechanics, $\Tr{(ZR)}$ is known as 
the expected value of the variable $Z$ in the state $R$ (in the case where $R=|\varphi\rangle\langle\varphi|$, then $\Tr{(ZR)}=(\varphi\cdot Z\varphi)_\fH$). It is the only deterministic quantity that one can associate to a particle 
in a  given state, by taking the average of the (non-deterministic) result of (an --- in principle --- infinite number of) measurements. It is interpreted in the (statistical) Ehrenfest picture as the classical value of the coordinate $Z$ of the 
state $R$. Thus the Ehrenfest interpretation of \eqref{heur3} is clear: the deterministic information we have on the state $R$ is transported to the corresponding one on the state $S$ by the change of variables $Z\mapsto Z'$.

%
%
%
\smallskip
{\tcr 
In the present article, we first state a Kantorovich duality theorem (Theorem \ref{T-QDual}) for $MK_\hb$, i.e., for every density operators $R,S$ on $\fH$,
$$
MK_\hb(R,S)^2=\sup_{\genfrac{}{}{0pt}{3}{A=A^*,\,B=B^*\in\cL(\fH)}{A\otimes I+I\otimes B\le C}}\Tr_\fH(RA+SB)\,,
$$
where $C$ is defined in \eqref{DefC}. In Theorem \ref{T-ExOptiAB}, we prove that the $\sup$ in the right hand side of the equality above is attained for some possibly unbounded operators $\fa$ and $\fb$ defined on appropriate Gelfand
triples with $\Ker(R)^\perp$ and $\Ker(S)^\perp$ as pivot spaces. Theorem \ref{T-CSOpt} provides a criterion for the $\sup$ on the right hand side of the equality above to be attained on bounded operators $A,B$ satisfying  the inequality 
constraint on the form domain of $C$. It provides also a family of density operators $R$ and $S$ for which this criterion is satisfied.

 Theorem \ref{T-QTransp} 
 is devoted to  an analogue of Brenier's theorem for quantum optimal couplings.

\noindent 
When 
 the sup in the equality above is attained by two operators $A$ and $B$ bounded on $\fH$ 
 such that the constraint $A\otimes I+I\otimes B\le C$ is satisfied on the form-domain of $C$,  we show in Theorem \ref{T-QTransp} $(1)$  our quantum result ``\`a la Brenier", namely the formula \eqref{heur2}
already mentioned
$$
F_{op}^{1/2}(I_\fH\otimes Z-\nablaq\valPhi\otimes  I_\fH)F_{op}^{1/2}=0\,
$$
 with $\valPhi=\tfrac12(H-\fa)$. Here $H$ is the harmonic oscillator defined by \eqref{DefH} and $\nablaq$ is defined by \eqref{heur1}.
 
 \noindent{\tcb  On the other hand,\tcee} when $R$ and $S$ are of finite rank, formula \eqref{heur2} has to be replaced by the following one (Theorem \ref{T-QTransp} $(2)$)
 \be\label{heur2fr}
 F_{op}^{1/2}(\zfr-\nablaq\valPhi'\otimes  I_\fH)F_{op}^{1/2}=0\,
 \ee
 with $\valPhi'=\tfrac12(H'-\fa)$. Here $H'$ is the harmonic oscillator $H$ projected on $\Ker(R)^\perp$ and $\zfr$ is the following vector  operator valued  on $\Ker(R)^\perp\otimes\Ker(S)^\perp$:
  \be\label{heur2frbis}
 \zfr_j=\sum_{k=1}^d\tfrac1{i\hbar}[(JZ^R)_j,Z_k^R]\otimes Z_k^S,\ j=1,\dots,d.
 \ee 
 where  $Z^R$ (resp. $Z^S$) is the vector $Z$ projected, component by component, on $\Ker(R)^\perp$ (resp. $\Ker(S)^\perp$)
  (see Theorem \ref{T-QTransp} $(b)$ for explicit  expressions).
  
There is no chance that  the term $\tfrac1{i\hbar}[(JZ^R)_j,Z_k^R]$ in \eqref{heur2frbis} reduces to $\delta_{i.k}I$, leading to $\cF_j=I\otimes Z_k$ so that \eqref{heur2fr} would reduce to \eqref{heur2}. Indeed, it is well known that there is no representation of the canonical relations in finite dimension. But at the contrary, nothing prevents  $F_{op}^{1/2}\zfr_k F_{op}^{1/2}$ to be equal to  (a multiple of) $F_{op}^{1/2}(I\otimes Z^R_k) F_{op}^{1/2}$. We will show, Lemma \ref{lemlemlem} in Section \ref{bipart}, that this is indeed the case for the quantum bipartite matching problem for two one-dimensional particles with equal masses, studied extensively in \cite{CagliotiFGPaul}.

  Natural examples of  classical analogues to the finite rank (independent of the Planck constant) quantum situation are the cases where $\mu,\nu$ are singular. Therefore these cases are not covered by the Knott-Smith-Brenier result. This is the case for the bipartite problem we just mentioned for which 
  $\mu=\tfrac{1+\eta}2\delta_a+\tfrac{1-\eta}2\delta_{-a},\ \nu =\tfrac12\delta_b+\tfrac12\delta_{-b} ,\ -1<\eta<1$ in the classical situation and $R=
  \tfrac{1+\eta}2|a\rangle\langle a|+\tfrac{1-\eta}2|-a\rangle\langle- a|,\ S =\tfrac12|b\rangle\langle b|+\tfrac12|-b\rangle\langle -b|$ in the quantum one. 
  When $\eta=0$, $\mu$ is optimally transported to $\nu$ by any flow which send $\pm a$ to $\pm b$, and in this case \eqref{heur2fr} takes the form of \eqref{heur2}, see Proposition \ref{oufder}. But when $\eta>0$, the mass of $a$ has to be splited in two parts, an amount $\tfrac12$ to be send to $b$ and an amount $\tfrac\eta2$ which goes to $-b$, and $\mu$ is optimaly transported to $\nu$ by a multivalued map.
 
 Therefore, beside the fact that formula \eqref{heur2} represents a quantum analogue of the   Knott-Smith-Brenier result,  formulas \eqref{heur2fr}-\eqref{heur2frbis} have in general no analogue in terms of classical (monovalued) flow.
 \vskip 1cm
 

The main results, Theorems \ref{T-QDual}. \ref{T-ExOptiAB}, \ref{T-CSOpt} and \ref{T-QTransp}, are stated in Section \ref{S-results} and proved in Sections \ref{S-proof1}, \ref{S-proof2}, \ref{S-proof3} and \ref{S-proof4} respectively. Section \ref{S-examples} is devoted to some examples, including the finite rank and T\"oplitz situations, and the three Appendices contain some technical material, including a result on monotone convergence for trace-class operators in Apendix \ref{monconv}.

\smallskip
\tce}
To conclude this introduction, we mention other attempts at defining analogues of the Wasserstein, or Monge-Kantorovich distances in the quantum setting. For instance \.Zyczkowski and S\l omczy\'nski \cite{ZyckSlom} (see also section 7.7
in chapter 7 of \cite{BengtssonZyczkowski}) proposed to consider the original Monge distance (also called the Kantorovich-Rubinstein distance, or the Wasserstein distance of exponent $1$) between the Husimi transforms of the density 
operator (see \eqref{Husimi} for a definition of this transform).

Besides the quantity $MK_\hb$ appeared in \cite{FGMouPaul}, other analogues of the Wasserstein distance of exponent $2$ for quantum densities have been proposed by several other authors. For instance Carlen and Maas have 
defined a quantum analogue of the Benamou-Brenier formula (see \cite{BenaBrenier} or Theorem 8.1 in chapter 8 of \cite{VillaniTOT}) for the classical Wasserstein distance of exponent $2$, and their idea has been used to obtain 
a quantum equivalent of the so-called HWI inequality: see \cite{CarlMaas1,CarlMaas2,RouzDatt}. 

Other propositions for generalizing Wasserstein distances to the quantum setting have emerged more recently, such as \cite{Ikeda} (mainly focussed on pure states) or \cite{PalmTrev}, very close to our definition of $MK_\hb$, except 
that the set of couplings used in the minimization is different and based instead on the notion of ``quantum channels'' (see also \cite{PalmTrev2} for a definition of a quantum Wasserstein distance of order $1$).


\section{Main Results}\label{S-results}


The key argument in deriving the structure \eqref{TranspCoupl} of optimal couplings for the Kantorovich problem involves a min-max type result known as ``Kantorovich duality''. For each $\mu,\nu\in\cP_2(\bR^d)$, one has
\be\label{kanto1}
\MKd(\mu,\nu)^2=\sup_{\substack{\phi,\psi\in C_b(\bR^d)\\ \phi(x)+\psi(y)\leq|x-y|^2}}\left(\int_{\bR^d}\phi(x)\mu(dx)+\int_{\bR^d}\psi(y)\nu(dy)\right)\,.
\ee
When $\mu,\nu$ do not charge small sets, in the sense that they satisfy \eqref{SmallSets}, one can prove that the supremum in the r.h.s. of \eqref{kanto1} is actually attained and
\begin{eqnarray}
\MKd(\mu,\nu)^2&=&\min_{\pi\in\Pi(\mu,\nu)}\iint_{\bR^d\times\bR^d}|x-y|^2\pi(dxdy)\nonumber\\
&=&\iint_{\bR^d\times\bR^d}|x-y|^2\pi_{op}(dxdy)\nonumber\\
&=&\max_{\substack{\phi\in L^1(\mu),\,\psi\in L^1(\nu)\\ \phi(x)+\psi(y)\leq|x-y|^2\\ \mu\otimes\nu\text{\scriptsize-a.e.}}}\left(\int_{\bR^d}\phi(x)\mu(dx)+\int_{\bR^d}\psi(y)\nu(dy)\right)\nonumber\\
&=&
\int_{\bR^d}\phi_{op}(x)\mu(dx)+\int_{\bR^d}\psi_{op}(y)\nu(dy)\nonumber
\end{eqnarray}
for two proper convex l.s.c. functions $\phi_{op}$ and $\psi_{op}$ on 
$\bR^d$. 

Moreover, $\valphi(x):=\tfrac12(x^2-\phi_{op}(x))$ is precisely the function appearing  in \eqref{TranspCoupl2},  the gradient of which defines a.e. the Brenier optimal transport map of the previous section. (See Theorem 1.3, 
Proposition 2.1 and Theorem 2.9 in \cite{VillaniAMS}.) 

Likewise, the operator $\valPhi$ in \eqref{heur2} will be similarly related to an optimal operator appearing in a dual formulation of definition \eqref{DefMKhb}, to be presented below.

\bigskip
Before we state the quantum analogue of the Kantorovich duality, we need some technical preliminaries.

The quantum transport cost is the operator
\be\label{defC}
C:=\sum_{j=1}^d((x_j-y_j)-\hbar^2(\d_{x_j}-\d_{y_j})^2)\,,
\ee
viewed as an unbounded self-adjoint operator on $L^2(\bR^d\times\bR^d)$ with domain
\be\lb{DomC}
\Dom(C):=\{\psi\in L^2(\bR^d\times\bR^d)\text{ s.t. }|x-y|^2\psi\text{ and }|D_x-D_y|^2\psi\in L^2(\bR^d\times\bR^d)\}\,.
\ee
Henceforth we denote by $H$ the Hamiltonian of the quantum harmonic oscillator, i.e.
\be\lb{HarmOsc}
H:=|x|^2-\hbar^2\Dlt_x\,,
\ee
which is a self-adjoint operator on $L^2(\bR^d)$ with domain\footnote{If $u\in C_c^\infty(\bR^d)$, one has
$$
\ba
\int_{\bR^d}|x|^2u(x)Hu(x)dx=&\int_{\bR^d}(|x|^4u(x)^2+\hb^2|x|^2|\grad u(x)|^2)dx+\hb^2\int_{\bR^d}x\cdot\grad\left(u(x)^2\right)dx
\\
=&\int_{\bR^d}((|x|^4-d\hb^2)u(x)^2+\hb^2|x|^2|\grad u(x)|^2)dx\,,
\ea
$$
so that $\||x|^2u\|^2_{L^2(\bR^d)}\le\|Hu\|_{L^2(\bR^d)}\||x|^2u\|_{L^2(\bR^d)}+d\hb^2\|u\|_{L^2(\bR^d)}$. Thus, if $u\in L^2(\bR^d)$ and $Hu\in L^2(\bR^d)$, then one has $|x|^2u\in L^2(\bR^d)$, which implies in turn that $-\Dlt u\in L^2(\bR^d)$.
The same argument shows that $U\in L^2(\bR^d\times\bR^d)$ and $CU\in L^2(\bR^d\times\bR^d)$ imply that $|x-y|^2U\in L^2(\bR^d\times\bR^d)$. These observations imply that the domains of $H$ and $C$ are the spaces given above.}
\be\lb{DomH}
\Dom(H):=\{\phi\in H^2(\bR^d)\text{ s.t. }|x|^2\phi\in L^2(\bR^d)\}\,.
\ee
In the sequel, we shall also need the form-domains of the operators $H$ and $C$:
\be\lb{fDom}
\ba
\fDom(H):=&\{\phi\in H^1(\bR^d)\text{ s.t. }|x|\phi\in L^2(\bR^d)\}\,,
\\
\fDom(C):=&\{\psi\in\fH\!\otimes\!\fH\text{ s.t. }(x_j\!\!-\!y_j)\psi\text{ and }(\d_{x_j}\!\!\!-\!\d_{y_j})\psi\in\fH\!\otimes\!\fH\,,\,\,1\le j\le d\}\,.
\ea
\ee
The definition of the form-domain of a self-adjoint operator can be found for instance in section VIII.6, Example 2 of \cite{ReedSim1}. Observe that
\be\lb{fDomIncl}
\ba
\fDom(H\otimes I\!+\!I\otimes H)&=\{\psi\in H^1(\bR^d\!\!\times\!\bR^d)\text{ s.t. }(|x|\!+\!|y|)\psi\in L^2(\bR^d\!\!\times\!\bR^d)\}
\\
&\subset\fDom(C)\,.
\ea
\ee

\begin{Lem}\lb{L-QfDom}
Let $R,S\in\cD_2(\fH)$, and let $Q\in\cC(R,S)$. Each eigenfunction $\Phi$ of $Q$ such that $Q\Phi\not=0$ belongs to $\fDom(H\otimes I+\otimes H)$ and
$$
0\le\la\Phi|H\otimes I+I\otimes H|\Phi\ra\le\Tr_\fH(R^{1/2}HR^{1/2}+S^{1/2}HS^{1/2})<\infty\,.
$$
In particular $\Phi\in\Dom(C)$ with 
\be\lb{BdPhiDom}
\la\Phi|C|\Phi\ra\le 2\Tr_\fH(R^{1/2}HR^{1/2}+S^{1/2}HS^{1/2})<\infty\,.
\ee
\end{Lem}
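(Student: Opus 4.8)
The plan is to transfer the marginal constraints defining $\cC(R,S)$ to the unbounded operator $H$ by means of a monotone spectral truncation. Start from a spectral decomposition $Q=\sum_n|\Phi_n\ra\la\Phi_n|$ with the $\Phi_n$ pairwise orthogonal and $\sum_n\|\Phi_n\|^2=\Tr_{\fH\otimes\fH}Q=1$; every eigenfunction $\Phi$ of $Q$ with $Q\Phi\ne0$ is, up to a scalar, one of the $\Phi_n$ (or a combination of $\Phi_n$'s sharing an eigenvalue, which is handled the same way using that $\fDom(\,\cdot\,)$ is a vector space together with a finite-dimensional positivity argument), so it suffices to bound $\la\Phi_n|H\otimes I+I\otimes H|\Phi_n\ra$. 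Since $H\ge0$ is self-adjoint, put $H_m:=H\,\indc_{[0,m]}(H)$: this is bounded and self-adjoint, $0\le H_m\le H_{m+1}$, and by the spectral theorem $\la\phi|H_m|\phi\ra\uparrow\la\phi|H|\phi\ra\in[0,+\infty]$ for every $\phi\in\fH$, the limit being finite exactly when $\phi\in\fDom(H)$. Then $H_m\otimes I$ and $I\otimes H_m$ are bounded, non-negative and non-decreasing, with $\la\Psi|H_m\otimes I|\Psi\ra\uparrow\la\Psi|H\otimes I|\Psi\ra$ and $\la\Psi|I\otimes H_m|\Psi\ra\uparrow\la\Psi|I\otimes H|\Psi\ra$ for every $\Psi\in\fH\otimes\fH$.

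Next I would insert $(A,B)=(H_m,0)$ and $(A,B)=(0,H_m)$ into the identity in \eqref{defsetofcoup} — legitimate since $H_m$ is bounded and self-adjoint — obtaining
\be
\Tr_{\fH\otimes\fH}((H_m\otimes I)Q)=\Tr_\fH(H_mR)\,,\qquad\Tr_{\fH\otimes\fH}((I\otimes H_m)Q)=\Tr_\fH(H_mS)\,.
\ee
Because $Q$ is trace-class and $H_m\otimes I$ bounded, the left-hand side of the first equality equals $\sum_n\la\Phi_n|H_m\otimes I|\Phi_n\ra$, a series of non-negative terms; on the right-hand side, cyclicity of the trace gives $\Tr_\fH(H_mR)=\Tr_\fH(R^{1/2}H_mR^{1/2})$, which increases to $\Tr_\fH(R^{1/2}HR^{1/2})$ by the monotone-convergence principle for trace-class operators of Appendix \ref{monconv} — and this is where the hypothesis $R\in\cD_2(\fH)$, i.e. $\Tr_\fH(R^{1/2}HR^{1/2})<\infty$, is used. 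Letting $m\to\infty$ and invoking monotone convergence once more on the left yields $\sum_n\la\Phi_n|H\otimes I|\Phi_n\ra=\Tr_\fH(R^{1/2}HR^{1/2})$, and likewise $\sum_n\la\Phi_n|I\otimes H|\Phi_n\ra=\Tr_\fH(S^{1/2}HS^{1/2})$; adding,
\be
\sum_n\la\Phi_n|H\otimes I+I\otimes H|\Phi_n\ra=\Tr_\fH(R^{1/2}HR^{1/2}+S^{1/2}HS^{1/2})<\infty\,.
\ee
Each summand being non-negative and hence finite is precisely the statement that each $\Phi_n$ with $\Phi_n\ne0$ lies in $\fDom(H\otimes I+I\otimes H)$, and each summand is dominated by the whole sum, which gives $0\le\la\Phi|H\otimes I+I\otimes H|\Phi\ra\le\Tr_\fH(R^{1/2}HR^{1/2}+S^{1/2}HS^{1/2})$.

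Finally, \eqref{fDomIncl} gives $\fDom(H\otimes I+I\otimes H)\subset\fDom(C)$, so $\la\Phi|C|\Phi\ra$ is well defined, and the elementary form inequality $(a-b)^2\le2(a^2+b^2)$ for commuting self-adjoint $a,b$, applied to each pair $(x_j,y_j)$ and $(-i\hbar\d_{x_j},-i\hbar\d_{y_j})$ and summed over $j$, gives $C\le 2(H\otimes I+I\otimes H)$ as quadratic forms on $\fDom(C)$; hence $\la\Phi|C|\Phi\ra\le2\la\Phi|H\otimes I+I\otimes H|\Phi\ra\le2\Tr_\fH(R^{1/2}HR^{1/2}+S^{1/2}HS^{1/2})$, which is \eqref{BdPhiDom}. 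The one genuinely delicate step is the interchange of $\lim_{m\to\infty}$ with the traces on both sides of the coupling identity — the applicability of that identity to the bounded operators $H_m$, the cyclicity of the trace against a bounded factor, and the comparison $C\le2(H\otimes I+I\otimes H)$ being routine — and this interchange is exactly what the monotone-convergence statement for trace-class operators supplies, applied to $H_m\otimes I$ (resp. $I\otimes H_m$) against $Q$ and to $H_m$ against $R$ (resp. $S$).
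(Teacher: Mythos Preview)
Your proof is correct and follows essentially the same route as the paper's: the paper invokes its Lemma~\ref{L-UBdCyclicity} (itself proved in the appendix by the very spectral-truncation-plus-monotone-convergence argument you carry out inline, with $H(I+H/n)^{-1}$ in place of your $H\,\indc_{[0,m]}(H)$) to obtain $\Tr_{\fH\otimes\fH}(Q^{1/2}(H\otimes I+I\otimes H)Q^{1/2})=\Tr_\fH(R^{1/2}HR^{1/2}+S^{1/2}HS^{1/2})$, then expands the left side in the eigenbasis of $Q$ and concludes exactly as you do via $C\le 2(H\otimes I+I\otimes H)$.
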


\begin{proof}
Since $R,S\in\cD_2(\fH)$, one has
$$
\ba
\Tr_{\fH\otimes\fH}(Q^{1/2}(H\otimes I)Q^{1/2})=\Tr_\fH(R^{1/2}HR^{1/2})<\infty
\\
\Tr_{\fH\otimes\fH}(Q^{1/2}(I\otimes H)Q^{1/2})=\Tr_\fH(S^{1/2}HS^{1/2})<\infty
\ea
$$
for each $Q\in\cC(R,S)$ by Lemma \ref{L-UBdCyclicity}. In particular
$$
\Tr_{\fH\otimes\fH}(Q^{1/2}(H\otimes I+I\otimes H)Q^{1/2})<\infty\,.
$$
Let $(\Phi_k)_{k\ge 0}$ be a complete orthonormal system of eigenvectors of $Q$, and let $(\ll_k)_{k\ge 0}$ be the sequence of eigenvalues of $Q$ such that $Q\Phi_k=\ll_k\Phi_k$ for each $k\ge 0$. Thus
$$
\ba
\ll_k>0\implies\Phi_k\in\fDom(H\otimes I+I\otimes H)
\\
\text{ and }\sum_{k\ge 0}\ll_k\la\Phi_k|H\otimes I+I\otimes H|\Phi_k\ra=\Tr_{\fH\otimes\fH}(Q^{1/2}(H\otimes I+I\otimes H)Q^{1/2})
\\
=\Tr_\fH(R^{1/2}HR^{1/2}\!\!+\!S^{1/2}HS^{1/2})<\infty&\,,
\ea
$$
and this implies the desired inequality. Using \eqref{fDomIncl} shows that
$$
\ba
\Psi\in\fDom(H\otimes I+I\otimes H)\implies&\Psi\in\fDom(C)\text{ and }
\\
&0\le\la\Psi|C|\Psi\ra\le\la\Psi|H\otimes I+I\otimes H|\Psi\ra\,.
\ea
$$
\end{proof}

\subsection{A Quantum Analogue to the Kantorovich Duality}

The statement below is an analogue of the Kantorovich Duality Theorem (Theorem 1.3 in \cite{VillaniAMS}, or Theorem 6.1.1 in \cite{AmbrosioGS}) for the quantum transport cost operator $C$ defined by \eqref{defC}.

\begin{Thm}[Quantum duality]\lb{T-QDual}
Let $R,S\in\cD_2(\fH)$. Then
\be\lb{QDual}
\min_{F\in\cC(R,S)}\Tr_{\fH\otimes\fH}(F^{1/2}CF^{1/2})=\sup_{(A,B)\in\fK}\Tr_\fH(RA+SB)\,,
\ee
where
$$
\fK:=\{(A,B)\in\cL(\fH)\times\cL(\fH)\text{ s.t. }A=A^*\,, B=B^*\text{ and }A\otimes I+I\otimes B\le C\}\,.
$$
\end{Thm}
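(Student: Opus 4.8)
The plan is to establish the two inequalities in \eqref{QDual} separately; ``$\ge$'' (weak duality) is easy, and the bulk of the work is ``$\le$'' (strong duality), which I would reduce to a min-max identity and then prove by truncating the unbounded cost $C$.

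\textbf{Weak duality and existence of an optimal coupling.} Let $F\in\cC(R,S)$ and $(A,B)\in\fK$. The marginal conditions defining $\cC(R,S)$ give $\Tr_\fH(RA+SB)=\Tr_{\fH\otimes\fH}((A\otimes I+I\otimes B)F)$. Diagonalising $F=\sum_k\ll_k\ketbra{\Phi_k}{\Phi_k}$, only the eigenvectors with $\ll_k>0$ contribute, and by Lemma \ref{L-QfDom} each of them lies in $\fDom(C)$; since $A\otimes I+I\otimes B\le C$ as quadratic forms on $\fDom(C)$, we get $\la\Phi_k|A\otimes I+I\otimes B|\Phi_k\ra\le\la\Phi_k|C|\Phi_k\ra$, hence
\[
\Tr_\fH(RA+SB)=\sum_k\ll_k\la\Phi_k|A\otimes I+I\otimes B|\Phi_k\ra\le\sum_k\ll_k\la\Phi_k|C|\Phi_k\ra=\Tr_{\fH\otimes\fH}(F^{1/2}CF^{1/2})\,,
\]
the right-hand side being, by definition, the (nonnegative, possibly infinite) quantity $\Tr_{\fH\otimes\fH}(F^{1/2}CF^{1/2})$ since $C\ge0$. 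Taking the supremum over $\fK$ then the infimum over $\cC(R,S)$ yields ``$\ge$''. That the infimum is attained follows from the lower semicontinuity of $F\mapsto\Tr_{\fH\otimes\fH}(F^{1/2}CF^{1/2})$ (a supremum of the bounded-cost functionals $F\mapsto\Tr_{\fH\otimes\fH}(C_nF)$, $C_n:=\min(C,nI)$) together with the compactness, in trace norm, of the couplings with uniformly bounded harmonic-oscillator energy --- which is where the uniform bound of Lemma \ref{L-QfDom} enters, $H$ having compact resolvent (see also \cite{FGMouPaul}).

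\textbf{Reduction of strong duality to a min-max identity.} Introduce the Lagrangian $L(F;A,B):=\Tr_{\fH\otimes\fH}(F^{1/2}CF^{1/2})+\Tr_\fH(RA+SB)-\Tr_{\fH\otimes\fH}((A\otimes I+I\otimes B)F)$, where $F$ runs over density operators on $\fH\otimes\fH$ with $\Tr_{\fH\otimes\fH}(F^{1/2}CF^{1/2})<\infty$ and $A=A^*,\,B=B^*\in\cL(\fH)$; it is convex and lower semicontinuous in $F$ and affine in $(A,B)$. Maximising in $(A,B)$ forces the marginal constraints, so $\sup_{A,B}L(F;A,B)$ equals $\Tr_{\fH\otimes\fH}(F^{1/2}CF^{1/2})$ if $F\in\cC(R,S)$ and $+\infty$ otherwise, whence $\inf_F\sup_{A,B}L$ is the left-hand side of \eqref{QDual}. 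On the other hand $\inf_FL(F;A,B)=\Tr_\fH(RA+SB)+\inf_F\Tr_{\fH\otimes\fH}\!\big(F^{1/2}(C-A\otimes I-I\otimes B)F^{1/2}\big)=\Tr_\fH(RA+SB)+\inf\operatorname{spec}(C-A\otimes I-I\otimes B)$, using that for a semibounded self-adjoint $M$ (here $M=C-A\otimes I-I\otimes B$, semibounded since $C\ge0$ and $A,B$ are bounded) one has $\inf_F\Tr(F^{1/2}MF^{1/2})=\inf\operatorname{spec}M$, the value being approached along (approximate) ground states, which belong to $\fDom(M)=\fDom(C)$ and so have finite $C$-energy. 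Now replacing $(A,B)$ by $(A+tI,B)$ leaves this quantity unchanged (it shifts $\Tr_\fH(RA+SB)$ by $t$ and $\inf\operatorname{spec}(C-A\otimes I-I\otimes B)$ by $-t$), and a suitable $t$ brings the pair into $\fK$, on which $\inf\operatorname{spec}=0$; therefore $\sup_{A,B}\inf_FL=\sup_{(A,B)\in\fK}\Tr_\fH(RA+SB)$, the right-hand side of \eqref{QDual}. Thus \eqref{QDual} is equivalent to the min-max identity $\inf_F\sup_{A,B}L=\sup_{A,B}\inf_FL$.

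\textbf{The min-max identity via truncation; the main obstacle.} Since $L$ is convex-lsc in $F$ and affine in $(A,B)$, one would like to invoke a Sion-type minimax theorem; the difficulty --- and the crux of the proof --- is the unboundedness of $C$, which makes neither $F\mapsto\Tr_{\fH\otimes\fH}(F^{1/2}CF^{1/2})$ nor the partial-trace constraint weak-$*$ continuous on the trace class, so that the natural domain of $F$ is not visibly compact. I would circumvent this by truncating: for $n\ge1$ set $C_n:=\min(C,nI)$, a bounded operator with $C_n\le C$ on $\fDom(C)$ and $C_n\uparrow C$, and first establish the bounded-cost duality
\[
\min_{F\in\cC(R,S)}\Tr_{\fH\otimes\fH}(C_nF)=\sup\{\Tr_\fH(RA+SB):\,A=A^*,\,B=B^*\in\cL(\fH),\ A\otimes I+I\otimes B\le C_n\}=:m_n
\]
by running the Lagrangian argument above, now on the weak-$*$ compact set $\{F\ge0,\ \Tr_{\fH\otimes\fH}F\le1\}$, the uniform energy bound of Lemma \ref{L-QfDom} keeping minimising sequences tight so that their limits remain genuine couplings in $\cC(R,S)$. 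The sequence $m_n$ is nondecreasing, and since every pair feasible for $C_n$ lies in $\fK$ (because $\la\psi|C_n|\psi\ra\le\la\psi|C|\psi\ra$ on $\fDom(C)$), one has $m_n\le\sup_{(A,B)\in\fK}\Tr_\fH(RA+SB)$; on the other hand the left-hand expressions converge to $\min_{F\in\cC(R,S)}\Tr_{\fH\otimes\fH}(F^{1/2}CF^{1/2})$ by the monotone-convergence result for trace-class operators of Appendix \ref{monconv}, applied along a trace-norm-convergent sequence of near-minimisers (again using the compactness furnished by the energy bound). Hence $\min_{F\in\cC(R,S)}\Tr_{\fH\otimes\fH}(F^{1/2}CF^{1/2})=\lim_n m_n\le\sup_{(A,B)\in\fK}\Tr_\fH(RA+SB)$, which combined with weak duality proves \eqref{QDual}. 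The expected main obstacle is precisely this passage to the limit $n\to\infty$ on the primal side, i.e. the interplay between the monotone-convergence lemma and the trace-norm compactness of finite-energy couplings; the truncated duality $m_n$ itself is a routine minimax on a compact convex set.
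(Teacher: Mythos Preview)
Your overall architecture is different from the paper's and, in outline, sound: weak duality, a bounded-cost duality for the truncated cost $C_n$, and a monotone passage $m_n\uparrow m_\infty$ on the primal side. The weak-duality paragraph is correct, and your limit argument on the primal side can indeed be made rigorous using that $\cC(R,S)$ is trace-norm compact (the uniform energy bound $\Tr(F^{1/2}(H\!\otimes\!I+I\!\otimes\!H)F^{1/2})=\Tr(R^{1/2}HR^{1/2})+\Tr(S^{1/2}HS^{1/2})$ together with the compact resolvent of $H\otimes I+I\otimes H$ gives tightness, and then a diagonal argument yields $m_n\to m_\infty$).

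The gap is in the step you call ``routine'': the bounded-cost duality $m_n=d_n$. On the weak-$*$ compact set $\{F\ge 0,\ \Tr F\le 1\}\subset\cL^1(\fH\otimes\fH)=\cK(\fH\otimes\fH)'$, the map $F\mapsto\Tr\big((C_n-A\otimes I-I\otimes B)F\big)$ is \emph{not} weak-$*$ lower semicontinuous, because $C_n-A\otimes I-I\otimes B$ is bounded but not compact; so Sion's theorem does not apply as you state it. Nor does the tightness from Lemma~\ref{L-QfDom} help here: that energy bound holds for $F\in\cC(R,S)$, whereas your Lagrangian runs over all $F$ with $\Tr F\le 1$, and it is precisely the possibility that the minimax ``optimizer'' escapes $\cL^1$ into a singular positive functional on $\cL(\fH\otimes\fH)$ that has to be excluded. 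Truncating the cost does not remove this obstruction: the same issue arises for any bounded, non-compact cost. In other words, you have misplaced the main obstacle --- it is not the passage $n\to\infty$, but the infinite-dimensional strong duality itself.

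By contrast, the paper attacks the full problem in one stroke: it applies the Fenchel--Rockafellar theorem on $E=\cL(\fH\otimes\fH)$ (the Heisenberg inequality $C\ge 2d\hb I$ supplying the interior point), which produces a dual optimizer $\Lambda\in E'$ that is \emph{a priori} only a positive continuous linear functional on bounded operators. The substantive step is then Lemma~\ref{L-L=0}: using the marginal constraints $\la\Lambda,A\otimes I+I\otimes B\ra=\Tr_\fH(RA+SB)$ and a clever projection argument, one shows that the singular part of $\Lambda$ (its component annihilating $\cK(\fH\otimes\fH)$, cf.\ Lemma~\ref{L-ReprQ}) vanishes, so that $\Lambda$ is represented by some $Q\in\cC(R,S)$. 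This is exactly the quantum analogue of showing, in classical Kantorovich duality, that a positive functional on $C_b$ with prescribed tight marginals is a genuine probability measure. If you want to keep your truncation route, you will need an argument of this type (or a finite-dimensional spatial approximation) for the bounded-cost step; it is not a free consequence of compactness of $\{F\ge0,\ \Tr F\le1\}$.
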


In the definition of $\fK$, the inequality
$$
A\otimes I+I\otimes B\le C
$$
means that
$$
\la\psi|A\otimes I+I\otimes B|\psi\ra\le\la\psi|C|\psi\ra
$$
for all $\psi\in\fDom(C)$.

Notice that the $\inf$ on the left hand side of the duality formula is attained --- in other words, there always exists an optimal coupling $F\in\cC(R,S)$. On the contrary, the $\sup$ in the right hand side of the duality formula is in general not attained
--- at least not attained in the class $\fK$ in general.

\subsection{Existence of Optimal Operators $A,B$}

In this section, we explain how the $\sup$ in the right hand side of the duality formula is attained in a class of operators $(A,B)$ larger than $\fK$.

\subsubsection{Gelfand triple associated to a nonnegative trace-class operator}

We shall use repeatedly the following construction. Given a separable Hilbert space $\scrH$ and $T\in\cL^1(\scrH)$ such that $T=T^*\ge 0$, let $(\xi_n)_{n\ge 1}$ be a complete orthonormal basis of $\scrH$ of eigenvectors of $T$. Set 
\be\lb{DefJ0}
\cJ_0[T]:=\Span\{\xi_n\text{ s.t. }\la\xi_n|T|\xi_n\ra>0\}\,,
\ee
and
\be\lb{Def()T}
(\phi|\psi)_T:=\la\phi|T^{-1}|\psi\ra\,,\quad\phi,\psi\in\cJ_0[T]\,.
\ee
Let $\cJ[T]$ designate the completion of $\cJ_0[T]$ for the inner product $(\cdot|\cdot)_T$. Obviously
\be\lb{JKJ'}
\cJ[T]\subset\overline{\cJ_0[T]}=\Ker(T)^\perp\subset\cJ[T]'
\ee
(where $\overline{\cJ_0[T]}$ is the closure of $\cJ_0[T]$ in $\fH$). The first inclusion is continuous since, for each $\phi\in\cJ_0[T]$, one has
$$
\|\phi\|^2_\fH\le\|T\|(\phi|\phi)_T=\|T\|\la\phi|T^{-1}|\phi\ra\,.
$$
The operator $T^{-1/2}$, which is a priori defined on $\cJ_0[T]$ only, has a unique continuous extension which is the unitary transformation
\be\lb{UnitTransfo}
T^{-1/2}:\,\cJ[T]\to\Ker(T)^\perp\quad\text{ with adjoint }\quad T^{-1/2}:\,\Ker(T)^\perp\to\cJ[T]'\,.
\ee
In other words, one has a Gelfand triple
\be\lb{Gelfand}
\cJ[T]\mathop{\subset}_c\Ker(T)^\perp\subset\cJ[T]'\,.
\ee
(Notice that the embedding $\cJ[T]\subset\Ker(T)^\perp$ is compact since $T^{1/2}$ is a Hilbert-Schmidt, and therefore compact, operator on $\fH$.) With the unitary transformation \eqref{UnitTransfo}, one defines the isometric isomorphism
\be\lb{Isom2}
\cL(\cJ[T],\cJ[T]')\ni\fZ\mapsto T^{1/2}\fZ T^{1/2}=Z\in\cL(\Ker(T)^\perp)\,.
\ee
Under this isomorphism, $\fZ^*$ is obviously mapped to $Z^*$. 

\subsubsection{The optimality class $\tilde\fK(R,S)$}

While the original class $\fK$ is independent of the quantum density operators $R$ and $S$, the optimality class $\fK(R,S)$ significantly depends on $R,S$. 

\begin{Def}\label{deftildek} For each $R,S\in\cD_2(\fH)$, let $\tilde\fK(R,S)$ be the set of $(\fv,\fw)$ with $\fv\in\cL(\cJ[R],\cJ[R]')$ and $\fw\in\cL(\cJ[S],\cJ[S]')$ such that

\noindent
(a) the operators $V=R^{1/2}\fv R^{1/2}$ and $W=S^{1/2}\fw S^{1/2}$ satisfy
$$
\ba
2R^{1/2}HR^{1/2}&\ge V=V^*\in\cL^1(\Ker(R)^\perp)
\\
2R^{1/2}HR^{1/2}&\ge W\!=\!W^*\!\in\cL^1(\Ker(S)^\perp)\,;
\ea
$$
(b) for each $\Phi\in\cJ_0[R]\otimes\cJ_0[S]$, one has
$$
\la\Phi|\fv\otimes I+I\otimes\fw|\Phi\ra\le\la\Phi|C|\Phi\ra\,.
$$
\end{Def}

Notice that the left hand side of the inequality in condition (b) is well defined, since $\cJ_0[R]\subset\cJ[R]$, so that $\fv\cJ_0[R]\subset\cJ[R]'$. Hence any element of $\fv\cJ_0[R]$ is a linear functional which can be evaluated on any element
of $\cJ_0[R]\subset\cJ[R]$, and likewise any element of $\fw\cJ_0[S]$ is a linear functional which can be evaluated on any element of $\cJ_0[S]\subset\cJ[S]$. 

As for the right hand side, let $(e_j)_{j\ge 1}$ and $(f_k)_{k\ge 1}$ be complete orthonormal systems of eigenvectors of $R$ and $S$ respectively in $\fH$. By the implication in \eqref{J0infDom} (see Lemma \ref{L-Energ} in the Appendix)
$$
\ba
e_j\in\Ker(R)^\perp\implies e_j\in\fDom(H)\,,\\ f_k\in\Ker(S)^\perp\implies f_k\in\fDom(H)\,.
\ea
$$
In particular
\be\lb{J0xJ0in}
\cJ_0[R]\otimes\cJ_0[S]\subset\fDom(H\otimes I+I\otimes H)\subset\fDom(C)
\ee
so that the right hand side of the inequality in (b) is finite.

\subsubsection{The $\sup$ is attained in $\tilde\fK(R,S)$}

Passing from $\fK$ to $\tilde\fK(R,S)$ is equivalent to seeking the optimal Kantorovich potential in $L^1(\bR^d\!,\mu)$ as in Theorems 1.3 or Theorem 2.9 of \cite{VillaniAMS}, instead of $C_b(\bR^d)$ --- see the last sentence in Theorem 1.3 
of \cite{VillaniAMS}, together with Remark 1.6 in that same reference.

\begin{Thm} [Existence of optimal duality potentials]\lb{T-ExOptiAB} For all $R,S\in\cD_2(\fH)$,
$$
\min_{F\in\cC(R,S)}\Tr_{\fH\otimes\fH}(F^{1/2}CF^{1/2})=\max_{(\fa,\fb)\in\tilde\fK(R,S)}\Tr_\fH(R^{1/2}\fa R^{1/2}+S^{1/2}\fb S^{1/2})\,.
$$
If $R$ and $S$ are of finite rank, $\tilde\fK(R,S)\subset\cL(\Ker(R)^\perp)\times\cL(\Ker(S)^\perp)$, so that any optimal pair $(\fa,\fb)$ for the $\max$ in the right hand side of the equality above consists of operators $\fa$ and $\fb$ defined on 
the finite-dimensional linear spaces $\Ker(R)^\perp$ and $\Ker(S)^\perp$.
\end{Thm}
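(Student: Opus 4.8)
The plan is to establish the duality by the usual two-inequality scheme, using Theorem \ref{T-QDual} (the quantum Kantorovich duality with bounded potentials) as the known input and upgrading the $\sup$ to a $\max$ over the enlarged class $\tilde\fK(R,S)$. The direction ``$\ge$'' is the easy one: given any $(\fv,\fw)\in\tilde\fK(R,S)$, I want to show that $\Tr_\fH(R^{1/2}\fv R^{1/2}+S^{1/2}\fw S^{1/2})$ is dominated by the left-hand side. The natural way is to produce, for a given optimal coupling $F\in\cC(R,S)$, a comparison $\Tr_\fH(V R_{\mathrm{rel}})+\ldots$; concretely, one decomposes $F^{1/2}CF^{1/2}$ against the eigenbasis of $F$ and uses Lemma \ref{L-QfDom}, which guarantees that every eigenvector $\Phi$ of $F$ with $F\Phi\ne 0$ lies in $\fDom(H\otimes I+I\otimes H)\subset\fDom(C)$, hence in particular in (the appropriate algebraic tensor product sitting inside) $\cJ_0[R]\otimes\cJ_0[S]$ after a density argument; then condition (b) of Definition \ref{deftildek} gives $\la\Phi|\fv\otimes I+I\otimes\fw|\Phi\ra\le\la\Phi|C|\Phi\ra$ termwise, and summing against the eigenvalues of $F$ and using that $F\in\cC(R,S)$ converts $\la\Phi|\fv\otimes I+I\otimes\fw|\Phi\ra$-sums into $\Tr_\fH(R^{1/2}\fv R^{1/2}+S^{1/2}\fw S^{1/2})$. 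Condition (a), namely $V=V^*\in\cL^1(\Ker(R)^\perp)$ with $2R^{1/2}HR^{1/2}\ge V$ (and similarly for $W$), is exactly what makes all these traces absolutely convergent, so the rearrangement is legitimate.

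The direction ``$\le$'' together with attainment is the crux. Here I would take a $\sup$-achieving sequence $(A_n,B_n)\in\fK$ from Theorem \ref{T-QDual}, set $\fv_n:=R^{-1/2}A_nR^{-1/2}$ and $\fw_n:=S^{-1/2}B_nS^{-1/2}$ read as elements of $\cL(\cJ[R],\cJ[R]')$ and $\cL(\cJ[S],\cJ[S]')$ via the isomorphism \eqref{Isom2} (note $R^{1/2}\fv_nR^{1/2}=$ the compression of $A_n$ to $\Ker(R)^\perp$, which does not change $\Tr_\fH(RA_n)$), and then pass to a limit. The point is that the objects $V_n:=R^{1/2}\fv_nR^{1/2}$ live in a weak-$*$ compact set: they are selfadjoint trace-class on $\Ker(R)^\perp$, and the constraint $A_n\otimes I+I\otimes B_n\le C$ forces a two-sided bound $-2R^{1/2}HR^{1/2}\lesssim V_n\lesssim 2R^{1/2}HR^{1/2}$ (upper bound by testing the coupling inequality on product states and using that $\Tr_\fH(R^{1/2}HR^{1/2})<\infty$; lower bound by symmetry in $R\leftrightarrow S$ and the fact that $B_n$ is bounded below once $A_n$ is bounded above along an optimizing sequence — or by a standard replacement of $(A_n,B_n)$ by the $c$-transform pair so that both are controlled). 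Extract a subsequence with $V_n\to V$ weak-$*$ in $\cL^1(\Ker(R)^\perp)$ and $W_n\to W$ likewise; lower semicontinuity of the constraint $\fv\otimes I+I\otimes\fw\le C$ under this convergence (tested on the fixed vectors $\Phi\in\cJ_0[R]\otimes\cJ_0[S]$, which is a pointwise limit of scalars) gives $(\fv,\fw)\in\tilde\fK(R,S)$, while $\Tr_\fH(RA_n)+\Tr_\fH(SB_n)=\Tr(V_n)+\Tr(W_n)\to\Tr(V)+\Tr(W)$ because the test functionals $\Phi\mapsto\la\Phi|\cdot|\Phi\ra$ used to read off the trace are controlled by the $H$-energy, which appendix Lemma \ref{L-Energ} (via \eqref{J0infDom}) keeps finite. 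This yields a maximizer, so the $\sup$ is a $\max$.

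The main obstacle I anticipate is the compactness/lower-semicontinuity step: one must be careful that weak-$*$ convergence of $V_n$ in trace norm on the (possibly infinite-dimensional) pivot space $\Ker(R)^\perp$ really preserves both the inequality constraint (b) and the value of the trace, since the trace is only weak-$*$ \emph{lower} semicontinuous in general. The resolution is that the two-sided energy domination $|V_n|\le 2R^{1/2}HR^{1/2}$ with $R^{1/2}HR^{1/2}\in\cL^1$ gives equi-integrability of the ``tails'' (a uniform Dini/monotone-convergence control, which is presumably the purpose of the monotone convergence result cited for Appendix \ref{monconv}), forcing trace-norm, not merely weak-$*$, convergence along a further subsequence — or at least continuity of the trace functional on this dominated set. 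The final sentence of the statement is then immediate: if $R,S$ have finite rank, $\cJ[R]=\cJ_0[R]=\Ker(R)^\perp$ is finite-dimensional with the $(\cdot|\cdot)_R$-norm equivalent to the $\fH$-norm, so $\cL(\cJ[R],\cJ[R]')=\cL(\Ker(R)^\perp)$ (a finite-dimensional matrix algebra), and likewise for $S$; hence any $(\fa,\fb)\in\tilde\fK(R,S)$ automatically consists of genuine bounded operators on these finite-dimensional spaces, and the $\max$ is attained there by the compactness argument applied in finite dimension, where it is completely elementary.
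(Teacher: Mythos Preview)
Your overall architecture matches the paper's: a two-inequality scheme upgrading the $\sup$ in Theorem~\ref{T-QDual} to a $\max$ over $\tilde\fK(R,S)$ by extracting a weak-$*$ limit of a maximizing sequence. The finite-rank coda is fine. But there is a genuine gap in your ``easy'' direction, and your handling of the semicontinuity issue in the ``hard'' direction, while plausible, differs from the paper's mechanism.

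\textbf{The gap in the inequality $\min\ge\sup$.} You assert that eigenvectors $\Phi$ of an optimal $F$ with $F\Phi\neq 0$ lie ``in (the appropriate algebraic tensor product sitting inside) $\cJ_0[R]\otimes\cJ_0[S]$ after a density argument'', so that condition (b) of Definition~\ref{deftildek} applies termwise. This fails: Lemma~\ref{L-QfDom} puts $\Phi$ in $\fDom(H\otimes I+I\otimes H)$ and Lemma~\ref{L-SuppCoupling} puts it in $\Ker(R)^\perp\otimes\Ker(S)^\perp$, but neither places $\Phi$ in the algebraic span $\cJ_0[R]\otimes\cJ_0[S]$ or even in $\cJ[R]\otimes\cJ[S]$. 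Because $\fv\in\cL(\cJ[R],\cJ[R]')$ is generally \emph{unbounded} in the $\fH$-norm, approximating $\Phi$ in $\fH\otimes\fH$ by vectors of $\cJ_0[R]\otimes\cJ_0[S]$ does not let you pass to the limit in $\la\Phi|\fv\otimes I|\Phi\ra$; the density argument is illusory. The paper bypasses eigenvectors of $F$ entirely. It switches to the complementary variables $\bar v,\bar w$ (so the constraint reads $\bar v\otimes I+I\otimes\bar w\ge\Si$ on $\cJ_0[R]\otimes\cJ_0[S]$, with $\bar v,\bar w\ge 0$), sandwiches by finite-rank spectral projections $\fp_m\otimes\fq_n$ of $R\otimes S$ and replaces $\Si$ by its bounded truncation $\Si_N$, so that all operators are bounded; then Lemma~\ref{L-SuppCoupling} converts $\Tr_{\fH\otimes\fH}(F((\fp_m\bar v\fp_m)\otimes\fQ))$ into $\Tr_\fH(\fp_mR^{1/2}\bar vR^{1/2}\fp_m)$, and monotone limits in $m,n,N$ (via Corollary~\ref{C-Energ}) finish.

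\textbf{The semicontinuity issue.} You correctly flag that the trace is only weak-$*$ lower semicontinuous and propose two-sided domination $|V_n|\le 2R^{1/2}HR^{1/2}$ to force equi-integrability. That would work if established, but you do not actually derive the two-sided bound (the sketch ``$B_n$ is bounded below once $A_n$ is bounded above'' is circular as stated). The paper instead normalizes: set $\hat a_k=2H-A_k-\a_kI+d\hbar I$ with $\a_k$ the infimum of $2H-A_k$; using $\Si\ge 2d\hbar I$ one gets simultaneously $\hat a_k\ge d\hbar I$ and $\hat b_k\ge d\hbar I$. Now $R^{1/2}\hat a_kR^{1/2}\ge 0$ and the \emph{lower} semicontinuity of the trace on nonnegative operators gives $\Tr(V)\le\liminf_k\Tr(R^{1/2}\hat a_kR^{1/2})$, which is precisely the right direction since minimizing in the $\hat a_k$-picture is maximizing in the $A_k$-picture. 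No two-sided domination is required.
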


\subsection{Structure of optimal couplings}

In the classical setting, pick a proper convex l.s.c. function $\phi:\,\bR^d\mapsto\bR\cup\{+\infty\}$, and let $\mu\in\cP(\bR^d)$ satisfy condition \eqref{SmallSets} and
\be\lb{CSatur}
\int_{\bR^d}(|x|^2+|\grad\phi(x)|^2+|\phi(x)|+|\phi^*(\grad\phi(x))|)\mu(dx)<\infty\,.
\ee
Then 
\be\lb{COpti}
\pi(dxdy):=\mu(dx)\de(y-\grad\phi(x))
\ee
is an optimal coupling of the measures $\mu$ and $\nu:=\grad\phi\#\mu$ for the Kantorovich problem with the cost $C(x,y)=|x-y|^2$.

\smallskip
{\tcb 
We begin with a necessary and sufficient condition on density operators $R,S\in\cD_2(\fH)$ to have the $\sup$ in \eqref{QDual} attained in 
$\fK$, along with an optimality criterion for the couplings of such density operators.
This is the quantum analogue of the sufficient condition in Theorem 6.1.4 of \cite{AmbrosioGS}.

\tcee}
\begin{Thm}[Optimality criterion]\lb{T-CSOpt}
Let $(A,B)\in\fK$ be such that 
$$
\Ker(C-A\otimes I-I\otimes B)\not=\{0\}\,.
$$
Let $(\Phi_j)$ be a complete orthonormal system in $\Ker(C-A\otimes I-I\otimes B)$, and let 
\be\lb{FormOptiCoupl}
F:=\sum_j\ll_j|\Phi_j\ra\la\Phi_j|\,,\quad\text{ with }\quad\ll_j\ge 0\quad\text{ and }\quad\sum_j\ll_j=1\,.
\ee
Call $F_1:=\Tr_2(F)$ and $F_2:=\Tr_1(F)$ the partial traces of $F$ on the second and first factor in $\fH\otimes\fH$ respectively. Then $F$ is an optimal coupling of $F_1$ and $F_2$:
$$
\ba
\Tr_{\fH\otimes\fH}(F^{1/2}CF^{1/2})=&\min_{Q\in\cC(F_1,F_2)}\Tr_{\fH\otimes\fH}(Q^{1/2}CQ^{1/2})
\\
=&\sup_{(a,b)\in\fK}\Tr_\fH(F_1a+F_2b)=\Tr_\fH(F_1A+F_2B)\,.
\ea
$$
Conversely, if $(A,B)\in\fK$ is an optimal pair for $R,S\in\cD_2(\fH)$, i.e. if
\be\lb{MK=Dual}
MK_\hb(R,S)^2=\Tr_\fH(RA+SB)\,,
\ee
then $\Ker(C-A\otimes I-I\otimes B)\not=\{0\}$ and any optimal coupling of $R$ and $S$, i.e. any $F\in\cC(R,S)$ such that $MK_\hb(R,S)^2=\Tr_{\fH\otimes\fH}(F^{1/2}CF^{1/2})$ is of the form \eqref{FormOptiCoupl}.

\end{Thm}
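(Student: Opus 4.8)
\textbf{Proof plan for Theorem \ref{T-CSOpt}.}

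The plan is to establish the forward direction by a direct computation exploiting the duality (Theorem \ref{T-QDual}) applied to the pair $(F_1,F_2)$, and the converse by analyzing the equality case in the chain of inequalities that underlies the duality. For the forward direction, first I would check that $F$ as defined in \eqref{FormOptiCoupl} is indeed a density operator on $\fH\otimes\fH$ (it is a nonnegative trace-class operator with trace $1$ by construction) and that it is an element of $\cC(F_1,F_2)$ --- this last point is immediate from the definition \eqref{defsetofcoup} of the coupling set, since $F_1=\Tr_2(F)$ and $F_2=\Tr_1(F)$ are precisely its marginals, so the marginal constraints hold tautologically. Next, since each $\Phi_j$ lies in $\Ker(C-A\otimes I-I\otimes B)$, one has $(A\otimes I+I\otimes B)\Phi_j=C\Phi_j$ as quadratic forms, and in particular $\la\Phi_j|C|\Phi_j\ra=\la\Phi_j|A\otimes I+I\otimes B|\Phi_j\ra$ for each $j$ (here one should invoke Lemma \ref{L-QfDom} to guarantee $\Phi_j\in\fDom(C)$ so that these quantities are finite; note $F_1,F_2\in\cD_2(\fH)$ needs checking, which again follows from the energy bound in Lemma \ref{L-QfDom} applied to the coupling $F$). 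Multiplying by $\ll_j$ and summing over $j$ gives
\be
\Tr_{\fH\otimes\fH}(F^{1/2}CF^{1/2})=\sum_j\ll_j\la\Phi_j|C|\Phi_j\ra=\sum_j\ll_j\la\Phi_j|A\otimes I+I\otimes B|\Phi_j\ra=\Tr_{\fH\otimes\fH}((A\otimes I+I\otimes B)F)=\Tr_\fH(F_1A+F_2B)\,.
\nonumber
\ee
On the other hand, for \emph{any} $Q\in\cC(F_1,F_2)$ and any $(a,b)\in\fK$, the constraint $a\otimes I+I\otimes b\le C$ on $\fDom(C)$ together with the spectral decomposition of $Q$ (whose nonzero eigenvectors lie in $\fDom(C)$ by Lemma \ref{L-QfDom}) yields $\Tr_{\fH\otimes\fH}(Q^{1/2}CQ^{1/2})\ge\Tr_{\fH\otimes\fH}((a\otimes I+I\otimes b)Q)=\Tr_\fH(F_1a+F_2b)$. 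Taking the infimum over $Q$ and the supremum over $(a,b)$ and using Theorem \ref{T-QDual}, all four quantities are squeezed between $\Tr_\fH(F_1A+F_2B)$ (attained by our $F$ on the left and by $(A,B)\in\fK$ on the right) and are therefore equal; this proves the displayed chain of equalities and the optimality of $F$.

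For the converse, suppose $(A,B)\in\fK$ is optimal for $R,S$, i.e. \eqref{MK=Dual} holds, and let $F\in\cC(R,S)$ be any optimal coupling, so $\Tr_{\fH\otimes\fH}(F^{1/2}CF^{1/2})=\Tr_\fH(RA+SB)$. Writing $F=\sum_j\ll_j|\Phi_j\ra\la\Phi_j|$ in its spectral decomposition (with $\ll_j>0$), and using $\Tr_\fH(RA+SB)=\Tr_{\fH\otimes\fH}((A\otimes I+I\otimes B)F)=\sum_j\ll_j\la\Phi_j|A\otimes I+I\otimes B|\Phi_j\ra$ (valid since the nonzero eigenvectors of $F$ lie in $\fDom(C)\supset\fDom(H\otimes I+I\otimes H)$ by Lemma \ref{L-QfDom}), optimality forces
\be
\sum_j\ll_j\bigl(\la\Phi_j|C|\Phi_j\ra-\la\Phi_j|A\otimes I+I\otimes B|\Phi_j\ra\bigr)=0\,.
\nonumber
\ee
Each summand is nonnegative because $(A,B)\in\fK$, hence $\la\Phi_j|C-A\otimes I-I\otimes B|\Phi_j\ra=0$ for every $j$ with $\ll_j>0$. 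Since $C-A\otimes I-I\otimes B\ge 0$ as a quadratic form on $\fDom(C)$, vanishing of this form on $\Phi_j$ implies (by Cauchy--Schwarz for the nonnegative form, i.e. the form-theoretic statement that a nonnegative form vanishing at a vector annihilates that vector) that $\Phi_j\in\Ker(C-A\otimes I-I\otimes B)$. In particular this kernel is nontrivial, and $F=\sum_j\ll_j|\Phi_j\ra\la\Phi_j|$ has all its eigenvectors in that kernel, which is exactly the assertion that $F$ is of the form \eqref{FormOptiCoupl}.

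The main obstacle I anticipate is the functional-analytic bookkeeping around unbounded operators: one must be careful that $C-A\otimes I-I\otimes B$, a priori only defined as a quadratic form on $\fDom(C)$, genuinely has a self-adjoint (or at least nonnegative self-adjoint form) realization whose ``kernel'' in the statement makes sense, and that the eigenvectors $\Phi_j$ of the trace-class operators $F$ and $Q$ actually lie in $\fDom(C)$ so that all the pairings $\la\Phi_j|C|\Phi_j\ra$ and $\la\Phi_j|A\otimes I+I\otimes B|\Phi_j\ra$ are finite and the term-by-term manipulations of the traces are justified; this is precisely what Lemma \ref{L-QfDom} is designed to supply, so the argument reduces to invoking it at each step. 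A secondary point requiring care is the passage from ``the nonnegative quadratic form $q:=\langle\cdot|C-A\otimes I-I\otimes B|\cdot\rangle$ satisfies $q(\Phi_j)=0$'' to ``$\Phi_j$ is in the kernel of the associated operator'': this needs the Cauchy--Schwarz inequality $|q(\Phi_j,\Psi)|^2\le q(\Phi_j)q(\Psi)=0$ for the closed nonnegative form, which identifies $\Phi_j$ as a null vector of the form and hence, by standard form theory, as an element of the kernel of the self-adjoint operator it defines on $\fDom(C)$.
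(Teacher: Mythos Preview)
Your overall strategy matches the paper's, but there is a circularity in the forward-direction bookkeeping. You invoke Lemma~\ref{L-QfDom} (and later Theorem~\ref{T-QDual}) for the pair $(F_1,F_2)$, both of which require $F_1,F_2\in\cD_2(\fH)$; you then claim this follows from Lemma~\ref{L-QfDom} applied to the coupling $F$ --- but that lemma already \emph{assumes} its marginals lie in $\cD_2(\fH)$, so the reasoning loops. In fact $F_1,F_2\in\cD_2(\fH)$ need not hold here: membership in $\Dom(C)$ controls only the ``$x-y$'' variables and says nothing about $H\otimes I$ or $I\otimes H$ separately. The same circularity bites when you apply Lemma~\ref{L-QfDom} to a generic $Q\in\cC(F_1,F_2)$.

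The paper avoids this entirely. Since $A\otimes I+I\otimes B$ is bounded, $\Ker(C-A\otimes I-I\otimes B)\subset\Dom(C)$ by definition of the operator domain, so $\Phi_j\in\Dom(C)$ is automatic and $\|C\Phi_j\|=\|(A\otimes I+I\otimes B)\Phi_j\|\le\|A\|+\|B\|$; this bounds $\Tr(F^{1/2}CF^{1/2})$ directly with no appeal to $\cD_2$. For an arbitrary $Q\in\cC(F_1,F_2)$ one restricts to those with $\Tr(Q^{1/2}CQ^{1/2})<\infty$ (otherwise the inequality is vacuous), and then $\sum_k\g_k\la\Psi_k|C|\Psi_k\ra<\infty$ alone forces $\Psi_k\in\fDom(C)$ whenever $\g_k>0$. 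Your squeeze then closes \emph{without} Theorem~\ref{T-QDual}: you already have the weak-duality inequality $\inf_Q\ge\sup_{(a,b)}$ from the constraint, and the identity $\Tr(F^{1/2}CF^{1/2})=\Tr(F_1A+F_2B)$ pins both ends. Your converse is correct and matches the paper; your caveat about upgrading ``form-null vector'' to ``operator-kernel element'' is exactly the delicate point, handled (in the paper's proof of Theorem~\ref{T-QTransp}) by testing against the explicit eigenbasis of $C$ to show $\Phi_j\in\Dom(C)$.
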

{\tcb 
In the previous theorem (Theorem \ref{T-CSOpt}), we have obtained a complete description of the densities $R,S\in\cD_2(\fH)$ such that the $\sup$ in \eqref{QDual} is attained in $\fK$. Next, we give necessary conditions on the structure 
of the optimal couplings $F\in\cC(R,S)$ for such density operators $R$ and $S$.

\tcee}
\vskip 0.5cm
\smallskip
In the classical setting, the structure \eqref{COpti} of optimal couplings is a straightforward consequence of \eqref{CSatur}. Indeed, the set of points where the Young inequality
$$
\phi(x)+\phi^*(y)\ge x\cdot y
$$
becomes an equality is included in $\Graph(\d\phi)$. This suggests the idea of looking for a quantum analogue of the Brenier optimal transport map in the optimality criterion in Theorem \ref{T-CSOpt}.

\smallskip

We shall need the following basic functional analytic considerations. The linear space $\Dom(C)$ endowed with the inner product
$$
(\Phi,\Psi)\mapsto(\Phi|\Psi)_{\Dom(C)}=(\Phi|\Psi)_{\fH\otimes\fH}+(C\Phi|C\Psi)_{\fH\otimes\fH}
$$
is a Hilbert space. Hence $C\in\cL(\Dom(C),\fH\otimes\fH)$ (with norm at most $1$). Since $C$ is symmetric on $\Dom(C)$, it has a unique extension as an element of $\cL(\fH,\Dom(C)')$ (where $\Dom(C)'$ designates 
the topological dual of $\Dom(C)$), which is defined by the formula
\be\lb{ExtC}
\la C\Phi,\Psi\ra_{\Dom(C)',\Dom(C)}:=(\Phi|C\Psi)_{\fH\otimes\fH}
\ee
for all $\Phi\in\fH\otimes\fH$ and $\Psi\in\Dom(C)$. 

On the other hand, the linear space $\fDom(H\otimes I+I\otimes H)$ endowed with the inner product
$$
(\Phi,\Psi)\mapsto(\Phi|\Psi)_{\fDom(H\otimes I+I\otimes H)}=(\Phi|\Psi)_{\fH\otimes\fH}+\la\Phi|H\otimes I+I\otimes H|\Psi\ra
$$
is a Hilbert space. If $T\in\cL(\fDom(H\otimes I+I\otimes H),\fH\otimes\fH)$ is a symmetric operator, it has a unique extension as an element of $\cL(\fH,\fDom(H\otimes I+I\otimes H)')$. This extension is defined by the formula
\be\lb{ExtT}
\la T\Phi,\Psi\ra_{\fDom(H\otimes I+I\otimes H)',\fDom(H\otimes I+I\otimes H)}=(\Phi|T\Psi)_{\fH\otimes\fH}
\ee
(where $\fDom(H\otimes I+I\otimes H)'$ is the topological dual of $\fDom(H\otimes I+I\otimes H)$) for all $\Phi\in\fH\otimes\fH$ and $\Psi\in\fDom(H\otimes I+I\otimes H)$.

In particular 
\be\lb{[TC]}
[T,C]:\Dom(C)\cap\fDom(H\!\otimes\!I\!+\!I\!\otimes\!H)\!\to\!\Dom(C)'\!+\!\fDom(H\!\otimes\!I\!+\!I\!\otimes\!H)'
\ee
is a continuous linear map. Since 
$$
\Dom(C)'\!+\!\fDom(H\!\otimes\!I\!+\!I\!\otimes\!H)'\subset\left(\Dom(C)\cap\fDom(H\!\otimes\!I\!+\!I\!\otimes\!H)\right)'\,,
$$
the bilinear functional
$$
(\Phi,\Psi)\mapsto\la\Phi|[T,C]|\Psi\ra:=(T\Phi|C\Psi)_{\fH\otimes\fH}-(C\Phi|T\Psi)_{\fH\otimes\fH}
$$
is continuous on $\Dom(C)\cap\fDom(H\!\otimes\!I\!+\!I\!\otimes\!H)$.

Henceforth we use the following notation:
$$
q_j\psi(x_1,\ldots,x_d):=x_j\psi(x_1,\ldots,x_N)\,,\quad p_j\psi(x_1,\ldots,x_d):=-i\hb\d_{x_j}\psi(x_1,\ldots,x_d)
$$
for all $\psi\in\fDom(H)$ and all $j=1,\ldots,d$, and
$$
\scrD_{q_j}S:=\tfrac{i}{\hb}[p_j,S]\,,\qquad\scrD_{p_j}S:=-\tfrac{i}{\hb}[q_j,S]\,.
$$

\begin{Thm}\lb{T-QTransp}
Let $R,S\in\cD_2(\fH)$, let $F\in\cC(R,S)$ be an optimal coupling, i.e.
$$
\Tr_{\fH\otimes\fH}(F^{1/2}CF^{1/2})=\min_{Q\in\cC(R,S)}\Tr_{\fH\otimes\fH}(Q^{1/2}CQ^{1/2})\,.
$$
{\tcr and $(A,B)\in\tilde\fK(R,S)$ a pair of optimal operators such that
$$
\Tr_\fH(RA+SB)=\sup_{(a,b)\in\fK}\Tr_\fH(Ra+Sb)\,,
$$

Then,\tce}
\begin{enumerate}
\item {\tcr if 
$A\in\cL(\cJ(R),\fH)$ (resp. $B\in\cL(\cJ(S),\fH)$), let us denote by the same letters  $ A, B$ two extensions of $A,B$ to $\cL(\fH)$ such that 
$$
 A\otimes I+I\otimes  B\leq C
$$
on $\fDom(C)$ (in other words, $( A, B)\in\fK$ defined in Theorem \ref{T-QDual})\footnote{Note that even when $\Ker(R)=\Ker(S)=\{0\}$, so that $\cJ_0(E)\otimes\cJ_0(S)$ is dense in $\fH\otimes\fH$, Theorem \ref{T-ExOptiAB} provides optimal operators $A,B$ satisfying the constraint inequality only on $\cJ_0(E)\otimes\cJ_0(S)$  and not automatically on $\fDom(C)$. This is why the preceding constraint inequality has to be supposed to hold true.}.

\tce}  
Then

\smallskip
\noindent
(a) any eigenvector $\Phi$ of $F$ such that $F\Phi\not=0$ satisfies 
$$
\Phi\in\Dom(C)\text{ and }C\Phi=(A\otimes I+I\otimes B)\Phi\,;
$$
(b)  Let us denote
$$
\vilA:=\tfrac12(H-A)\text{ and } \vilB:=\tfrac12(H-B)\,,
$$
where $H$ is the harmonic oscillator in \eqref{HarmOsc}. 

\noindent Then, for each $j=1,\ldots,d$, one has
$$
\ba
F^{1/2}(I\otimes q_j-\scrD_{q_j}\vilA\otimes I)F^{1/2}=F^{1/2}(I\otimes p_j-\scrD_{p_j}\vilA\otimes I)F^{1/2}=0\,,
\\
F^{1/2}(q_j\otimes I-I\otimes\scrD_{q_j}\vilB)F^{1/2}=F^{1/2}(p_j\otimes I-I\otimes\scrD_{p_j}\vilB))F^{1/2}=0\,.
\ea
$$
\item{\tcr  if $R$ and $S$ have finite rank, one knows  by Theorem \ref{T-ExOptiAB} that $A\in\cL(\Ker(R)^\perp)$ and $B\in\cL(\Ker(S)^\perp)$. Let us denote by $\fP$ and $\fQ$ be the orthogonal projections on $\Ker(R)^\perp$ and $\Ker(S)^\perp$ respectively.

 Then,  by identifying $F$ with its projection on $\Ker(R)^\perp\otimes\Ker(S)^\perp$ thanks to Lemma \ref{L-SuppCoupling}, one has (on $\Ker(R)^\perp\otimes\Ker(S)^\perp$),

\noindent
(c) 
$ 
(\fP\otimes\fQ C\fP\otimes\fQ 
-A\otimes I_{\ker(S)^\perp}+I_{\ker(R)^\perp}\otimes B)F=0$.

\noindent (d)  Let us denote
$$
\vilA':=\tfrac12(\fP H\fP-A)\text{ and } \vilB':=\tfrac12(\fQ H\fQ-B)\,,
$$
and, for each $j=1,\ldots,d$,  $Q_j^R=\fP Q_j\fP,P_j^R=\fP P_j\fP,Q_j^S=\fQ Q_j\fQ,P_j^S=~\fQ P_j\fQ$. One has
$$
\ba
F^{1/2}(\sum_{k=1}^d(\tfrac i\hb[P_j^R,Q_k^R]\otimes Q_k^S
+
\tfrac i\hb[P_j^R,P_k^R]\otimes P_k^S
)-\scrD_{Q_j^R}\vilA'\otimes I)F^{1/2}=0\\
F^{1/2}(\sum_{k=1}^d(\tfrac i\hb[Q_j^R,Q_k^R]\otimes Q_k^S
+
\tfrac i\hb[Q_j^R,P_k^R]\otimes P_k^S
)-\scrD_{P_j}\vilA'\otimes I)F^{1/2}=0\\
F^{1/2}(\sum_{k=1}^d( Q_k^R\otimes \tfrac i\hb[P_j^S,Q_k^S]
+
 P_k^R\otimes \tfrac i\hb[P_j^s,P_k^s]
)-I\otimes\scrD_{Q_j^R}\vilB')F^{1/2}=0\\
F^{1/2}(\sum_{k=1}^d( Q_k^R\otimes \tfrac i\hb[Q_j^S,Q_k^S]
+
 P_k^R\otimes \tfrac i\hb[Q_j^s,P_k^s]
)-I\otimes\scrD_{P_j^S}\vilB')F^{1/2}=0\\
\ea
$$
\tce}
\end{enumerate}
\end{Thm}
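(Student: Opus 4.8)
The plan is to exploit the duality equality together with the variational characterization of optimal couplings, in exact analogy with the classical fact that the optimal coupling concentrates on the set where the Young inequality is saturated. First I would prove part (1)(a). Since $(A,B)\in\fK$ we have $A\otimes I+I\otimes B\le C$ on $\fDom(C)$, hence $C-A\otimes I-I\otimes B\ge 0$ as a quadratic form. Writing $F=\sum_j\ll_j\ketbra{\Phi_j}{\Phi_j}$ with $\ll_j>0$, Lemma~\ref{L-QfDom} gives $\Phi_j\in\fDom(H\otimes I+I\otimes H)\subset\fDom(C)$, and by optimality together with Theorem~\ref{T-ExOptiAB} (or the duality of Theorem~\ref{T-QDual}) one has
$$
\sum_j\ll_j\la\Phi_j|C-A\otimes I-I\otimes B|\Phi_j\ra=\Tr(F^{1/2}CF^{1/2})-\Tr(RA+SB)=0 .
$$
Each summand is nonnegative, so $\la\Phi_j|C-A\otimes I-I\otimes B|\Phi_j\ra=0$ for every $j$ with $\ll_j>0$. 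Since $C-A\otimes I-I\otimes B$ is a nonnegative self-adjoint operator (its form domain contains $\fDom(C)$), a vector in its form domain with zero expectation lies in its kernel; hence $\Phi_j\in\Dom(C)$ and $C\Phi_j=(A\otimes I+I\otimes B)\Phi_j$. This is the quantum analogue of ``$\Supp(\pi)\subset\{$Young saturated$\}$''. The one subtlety here is justifying that zero form-expectation forces membership in the operator domain, which follows from the standard fact that for $0\le K$ self-adjoint, $\la\phi|K|\phi\ra=0$ with $\phi\in\fDom(K)$ implies $K^{1/2}\phi=0$, hence $\phi\in\Ker K\subset\Dom(K)$.

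Next I would derive (1)(b) from (1)(a). The idea is to commute $C$ with the position/momentum operators on the two factors. Using the notation $\scrD_{q_j},\scrD_{p_j}$ and $\vilA=\tfrac12(H-A)$, a direct computation with $C=\sum_j((x_j-y_j)^2-\hb^2(\d_{x_j}-\d_{y_j})^2)$ and $H=|x|^2-\hb^2\Dlt_x$ shows the operator identities (on the appropriate form/operator domains, legitimized by the functional-analytic setup around \eqref{[TC]})
$$
\tfrac{1}{2}[q_j\otimes I,C]=(q_j-I\otimes q_j)\otimes\cdots,\qquad\text{more precisely}\qquad
\tfrac{i}{\hb}[p_j\otimes I,\,\tfrac12 C]=q_j\otimes I-I\otimes q_j ,
$$
and similarly $\scrD_{p_j}$ applied to $\tfrac12 C$ recovers $p_j\otimes I-I\otimes p_j$, etc. Combining such a commutator identity with (1)(a) in the form $CF^{1/2}\Phi=(A\otimes I+I\otimes B)F^{1/2}\Phi$ (valid on eigenvectors, then extended), and using that $H$ acting on the $x$-factor is $q_j\otimes I$-commutator-generated, one obtains after sandwiching between $F^{1/2}$ on both sides the desired $F^{1/2}(I\otimes q_j-\scrD_{q_j}\vilA\otimes I)F^{1/2}=0$ and its three companions. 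The precise bookkeeping is: start from $C-A\otimes I-I\otimes B$ annihilating $\Ran F^{1/2}$, apply the relevant derivation $\scrD$ in the first variable, use $\scrD_{q_j}(A\otimes I)=\scrD_{q_j}A\otimes I$ and $\scrD_{q_j}(I\otimes B)=0$, and identify $\scrD_{q_j}C$ with a linear combination of $q_j\otimes I-I\otimes q_j$; the cross term between the two factors is what survives and gets rewritten via $\vilA$. I expect the main obstacle to be exactly the domain issues in manipulating these unbounded commutators: $\scrD_{q_j}\vilA$ is a priori only defined as a quadratic form, so one must use the continuity statement \eqref{[TC]} and the fact that $\Ran F^{1/2}$ lands in $\Dom(C)\cap\fDom(H\otimes I+I\otimes H)$ (guaranteed by Lemma~\ref{L-QfDom} and (1)(a)), so that all the bilinear expressions $\la F^{1/2}\psi|\cdots|F^{1/2}\chi\ra$ are well defined; the ``$F^{1/2}\cdots F^{1/2}$'' sandwiching is precisely the device that tames this.

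For part (2), when $R,S$ have finite rank, Theorem~\ref{T-ExOptiAB} gives $A\in\cL(\Ker(R)^\perp)$, $B\in\cL(\Ker(S)^\perp)$ and Lemma~\ref{L-SuppCoupling} lets us regard $F$ as living on $\Ker(R)^\perp\otimes\Ker(S)^\perp$, which is finite-dimensional, so no domain problems remain. Statement (c) is the finite-rank version of (a): projecting the saturation identity with $\fP\otimes\fQ$ gives $(\fP\otimes\fQ\,C\,\fP\otimes\fQ-A\otimes I+I\otimes B)F=0$ (here $A,B$ already live on the projected spaces). Then (d) is obtained exactly as (b), except that every operator must be replaced by its compression to $\Ker(R)^\perp$ resp. $\Ker(S)^\perp$: the harmonic oscillator becomes $\fP H\fP$, $q_j$ becomes $Q_j^R=\fP Q_j\fP$, and crucially the commutators $\tfrac{i}{\hb}[P_j^R,Q_k^R]$ no longer collapse to $\de_{jk}I$ because the canonical commutation relations have no finite-dimensional representation — this is why the $\zfr_j$ of \eqref{heur2frbis} appears instead of $I\otimes Z$. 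Concretely one computes $\scrD_{Q_j^R}(\fP H\fP)=\scrD_{Q_j^R}(\fP|x|^2\fP)-\hb^2\scrD_{Q_j^R}(\fP\Dlt_x\fP)$ and $\scrD_{Q_j^R}(\fP\otimes\fQ\,C\,\fP\otimes\fQ)$, reads off the cross terms $\sum_k(\tfrac{i}{\hb}[P_j^R,Q_k^R]\otimes Q_k^S+\tfrac{i}{\hb}[P_j^R,P_k^R]\otimes P_k^S)$, and lands the four displayed identities after sandwiching by $F^{1/2}$. The only genuinely new computational content relative to (b) is tracking how $\fP$ fails to commute with $q_j$ and $\d_{x_j}$, which produces the extra $[\cdot,\cdot]$ factors; since everything is finite-dimensional this is a routine but lengthy matrix calculation, not an analytic difficulty.
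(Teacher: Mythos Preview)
Your proposal is correct and follows essentially the same route as the paper. The one noteworthy deviation is in (1)(a): the paper, after obtaining $\la\Phi_j|C-A\otimes I-I\otimes B|\Phi_j\ra=0$, uses Cauchy--Schwarz and then tests against the explicit Hermite eigenbasis $\Psi_{m_1,\ldots,n_d}$ of $C$ to show $\|C\Phi_j\|\le\|A\|+\|B\|$, hence $\Phi_j\in\Dom(C)$; you instead observe that $K:=C-A\otimes I-I\otimes B$ is a nonnegative self-adjoint operator (bounded perturbation of $C$, so $\Dom(K)=\Dom(C)$ and $\fDom(K)=\fDom(C)$), so zero form-expectation forces $K^{1/2}\Phi_j=0$, i.e.\ $\Phi_j\in\Ker K\subset\Dom(C)$. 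Your argument is cleaner and avoids the spectral-basis computation; the paper's route has the minor advantage of producing the explicit bound $\|C\Phi_j\|\le\|A\|+\|B\|$. For (1)(b) and (2) your sketch matches the paper's argument, including the commutator identity $(\scrD_{q_j}\otimes I)(C-A\otimes I-I\otimes B)=2(\scrD_{q_j}\vilA\otimes I-I\otimes q_j)$, the use of \eqref{[TC]} to make the bilinear forms well defined on eigenvectors, and the series representation \eqref{SeriesF12TF12} that justifies the sandwich $F^{1/2}TF^{1/2}\in\cL(\fH\otimes\fH)$ --- a step you gesture at but which the paper carries out explicitly via an absolute-convergence estimate.
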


As mentioned in the introduction, {\tcr the two last identities of $(b)$\tce}  are  analogous to the condition \eqref{TranspCoupl3}
$$
(z'-\grad\valphi(z))\pi(dz,dz')=0
$$
obtained in the setting of classical optimal transport in the case where the convex function $\valphi$ is smooth, so that $\d\valphi(z)=\{\grad\valphi(z)\}$ (see the Brenier or the Knott-Smith theorems, stated as Theorem 2.12 (i)-(ii) in \cite{VillaniAMS}.
Indeed, using the (vector-valued) operator 
$$
\nablaq=(\scrD_{q_1},\dots,\scrD_{q_d},\scrD_{p_1},\dots,\scrD_{p_d})
$$
together with the vector of operators $Z$ defined right after \eqref{heur1}, statement (b) of Theorem \ref{T-QTransp} reads
\begin{eqnarray}
F^{\frac12}(Z\otimes I-I\otimes\nablaq\valPhi)F^{\frac12}=0.
\end{eqnarray}
Notice that the quantum analogue of the function $\valphi$ is the operator $\tfrac12(H-A)$ (equivalently, the classical analogue of $A$ is $(|q|^2+|p|^2)-2\phi(q,p)$): see Remark 2.13 (iii) following Theorem 2.12 in \cite{VillaniAMS}, where the relation 
between $\phi$ and the optimal pair in the Kantorovich duality theorem is described in detail.

{\tcr Concerning $(d)$, it is a straightforward computation to show that the two first equalities can be synthesized as formulas  \eqref{heur2fr}-\eqref{heur2frbis} in the introduction.\tce}


\section{Proof of Theorem \ref{T-QDual}}\label{S-proof1}


Set $E:=\cL(\fH\otimes\fH)$. Define $f,g:\,E\to\bR\cup\{+\infty\}$ by the formulas
$$
f(T):=\left\{\ba &0&&\qquad\hbox{ if }T=T^*\ge -C\,,\\&+\infty&&\qquad\hbox{ otherwise,}\ea\right.
$$
and
$$
g(T):=\left\{\ba &\Tr_\fH(RA+SB)&&\qquad\hbox{ if }T=T^*=A\otimes I+I\otimes B\,,\\&+\infty&&\qquad\hbox{ otherwise.}\ea\right.
$$
For each $T=T^*\in\cL(\fH\otimes\fH)$, the constraint $T\ge -C$ in the definition of $f$ is to be understood as follows: 
$$
\la\phi|T|\phi\ra\ge-\la\phi|C|\phi\ra\quad\hbox{ for each }\phi\in\fDom(C)\,.
$$
On the other hand, the nullspace of the linear map
$$
\Gamma:\,\cL(\fH)\times\cL(\fH)\ni(A,B)\mapsto A\otimes I+I\otimes B\in\cL(\fH\otimes\fH)
$$
is $\Ker(\Gamma)=\{(tI, -tI)\text{ s.t. }t\in\bC\}$. Since $\Tr_\fH(R)=\Tr_{\fH}(S)=1$, one has 
$$
\Tr_\fH(RA+SB)=t\Tr_\fH(R-S)=0\quad\text{ for all }(A,B)=(tI,-tI)\in\Ker(\Gamma)
$$
so that
$$
A\otimes I+I\otimes B\mapsto\Tr_\fH(RA+SB)
$$
defines a unique linear functional on $\Ran(\Gamma)$. Besides 
$$
(A\otimes I+I\otimes B)^*=A^*\otimes I+I\otimes B^*\,,
$$
so that, by cyclicity of the trace,
$$
\ba
T=A\otimes I+I\otimes B\text{ and }T=T^*\implies A=A^*\text{ and }B=B^*
\\
\implies g(T)=\Tr_\fH(RA^*+SB^*)=\overline{\Tr_\fH(AR+BS)}=\overline{g(T)}&\,.
\ea
$$
Therefore, the prescription above defines indeed a unique function $g$ on $E$ with values in $(-\infty,+\infty]$.

One easily checks that $f$ and $g$ are convex.  Indeed, $f$ is the indicator function (in the sense of the definition in \S 4 of \cite{Rockafellar} on p. 28) of the convex set 
$$
\{T=T^*\in E\text{ s.t. }T\ge -C\}\,,
$$
while $g$ is the extension by $+\infty$ of a real-valued linear functional defined on the linear subspace $\Ran(\Gamma)$ of $E$. Clearly,
$$
f(0)=g(0)=0\,.
$$
Moreover $f$ is continuous at $0$. Indeed, the Heisenberg uncertainty inequality implies that 
\be\lb{HeisenIneq}
C\ge 2d\hbar I\,,
\ee
so that 
$$
T=T^*\hbox{ and }\|T\|<d\hbar\implies T\ge -2d\hbar I\ge -C\,.
$$
Hence
$$
T=T^*\hbox{ and }\|T\|<d\hbar\implies f(T)=0\,,
$$
so that $f$ is continuous at $0$. 

By the Fenchel-Rockafellar duality theorem (Theorem 1.12 in \cite{Brezis})
$$
\inf_{T\in E}(f(T)+g(T))=\max_{\L\in E'}(-f^*(-\L)-g^*(\L))\,.
$$

Let us compute $f^*$ and $g^*$. First
$$
f^*(-\L)=\sup_{T\in E}(\la-\L,T\ra-f(T))=\sup_{\genfrac{}{}{0pt}{2}{T\in E}{T=T^*\ge -C}}\la-\L,T\ra\,.
$$
If $\L\in E'$ is not $\ge 0$, there exists $T_0=T_0^*\ge 0$ such that $\la\L,T_0\ra=-\a<0$. In particular, $nT_0=nT_0^*\ge -C$ for each $n\ge 0$, so that
$$
f^*(-\L)\ge\sup_{n\ge 1}\la-\L,nT_0\ra=\sup_{n\ge 1}n\a=+\infty\,.
$$
For $\L\in E'$ such that $\L\ge 0$, define
$$
\la\L,C\ra:=\sup_{\genfrac{}{}{0pt}{2}{T\in E}{T=T^*\le C}}\la\L,T\ra\in[0,+\infty]\,.
$$
(That $\la\L,C\ra\ge 0$ comes from observing that $T=0$ satisfies the constraints.) With this definition
$$
f^*(-\L)=\left\{\ba &\la\L,C\ra&&\qquad\hbox{ if }\L\ge 0\,,\\&+\infty&&\qquad\hbox{ otherwise.}\ea\right.
$$

Next
$$
g^*(\L)=\sup_{T\in E}(\la\L,T\ra-g(T))=\sup_{\genfrac{}{}{0pt}{2}{T=T^*\in E}{T=A\otimes I+I\otimes B}}(\la\L,T\ra-\Tr(RA+SB))\,.
$$
If there exists $A=A^*\in\cL(\fH)$ and $B=B^*\in\cL(\fH)$ such that 
$$
\la\L,A\otimes I+I\otimes B\ra>\Tr(RA+SB)\,,
$$
then
$$
g^*(\L)\ge\sup_{n\ge 1}\left(n\la\L,A\otimes I+I\otimes B\ra-n\Tr_\fH(RA+SB)\right)=+\infty\,.
$$
Likewise, if
$$
\la\L,A\otimes I+I\otimes B\ra<\Tr_\fH(RA+SB)\,,
$$
then
$$
g^*(\L)\ge\sup_{n\ge 1}\left(\la\L,-n(A\otimes I+I\otimes B)\ra-\Tr_\fH(-n(RA+SB))\right)=+\infty\,.
$$
Hence
$$
g^*(\L)=\left\{\ba &0&&\qquad\hbox{ if }\la\L,A\otimes I+I\otimes B\ra=\Tr_\fH(RA+SB)\,,\\&+\infty&&\qquad\hbox{ otherwise.}\ea\right.
$$
Notice that the prescription $\la\L,T\ra=\Tr_\fH(RA+SB)\text{ whenever }T=T^*\in\Ran(\Gamma)$ defines a unique linear functional on $\Ran(\Gamma)$ since $\Ker(\Gamma)=\{0\}$ as explained above.

By the Fenchel-Rockafellar duality theorem recalled above, 
$$
\ba
\inf_{T\in E}(f(T)+g(T))=\inf_{\genfrac{}{}{0pt}{2}{A=A^*\,,\,\, B=B^*\in\cL(\fH)}{A\otimes I+I\otimes B\ge -C}}\Tr_\fH(RA+SB)
\\
=\max_{\L\in E'}(-f^*(-\L)-g^*(\L))=\max_{\genfrac{}{}{0pt}{2}{0\le\L\in E'}{\la\L,A\otimes I+I\otimes B\ra=\Tr_\fH(RA+SB)}}-\la\L,C\ra 
\ea
$$
or equivalently, after exchanging the signs,
$$
\sup_{(A,B)\in\fK}\Tr_\fH(RA+SB)=\min_{\genfrac{}{}{0pt}{2}{0\le\L\in E'}{\la\L,A\otimes I+I\otimes B\ra=\Tr_\fH(RA+SB)}}\la\L,C\ra\,.
$$
(We recall that the constraint $A\otimes I+I\otimes B\le C$ in the definition of $\fK$ is to be understood as explained immediately after the statement of Theorem \ref{T-QDual}.)

\smallskip
One can further restrict the $\min$ on the right hand side with the following observations.

\begin{Lem}\lb{L-L>0}
Let $V=\cL(\scrH)$ where $\scrH$ is a separable Hilbert space. If $\ell\in V'$ satisfies $\ell\ge 0$, then 
$$
T=T^*\in V\implies\la\ell,T\ra\in\bR\qquad\text{ and }\quad\|\ell\|=\la\ell,I_{\scrH}\ra\,.
$$ 
\end{Lem}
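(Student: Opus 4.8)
The plan is to treat the two assertions separately; both are standard facts about positive linear functionals on the unital C$^*$-algebra $V=\cL(\scrH)$, and the only point requiring genuine care is the Hermitian symmetry of the sesquilinear form attached to $\ell$.

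First I would prove the reality statement. Since $I_\scrH\ge 0$, the hypothesis $\ell\ge 0$ gives in particular that $\la\ell,I_\scrH\ra$ is a nonnegative real number. Now let $T=T^*\in V$. The operators $\|T\|I_\scrH\pm T$ are both $\ge 0$ (they have nonnegative spectrum), so $\la\ell,\|T\|I_\scrH\pm T\ra\ge 0$; subtracting the real quantity $\|T\|\la\ell,I_\scrH\ra$ shows that $\pm\la\ell,T\ra\ge-\|T\|\la\ell,I_\scrH\ra$ and, in particular, that $\la\ell,T\ra\in\bR$. (Equivalently one may write $T=T_+-T_-$ with $T_\pm\ge 0$ via the continuous functional calculus and use $\la\ell,T\ra=\la\ell,T_+\ra-\la\ell,T_-\ra$.)

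Next I would prove the norm identity. The inequality $\la\ell,I_\scrH\ra=|\la\ell,I_\scrH\ra|\le\|\ell\|\,\|I_\scrH\|=\|\ell\|$ is immediate. For the reverse inequality, introduce the sesquilinear form $\beta(A,B):=\la\ell,A^*B\ra$ on $V\times V$, conjugate-linear in $A$ and linear in $B$. It is positive semidefinite because $\beta(A,A)=\la\ell,A^*A\ra\ge 0$ by hypothesis. To apply the Cauchy--Schwarz inequality I also need $\beta$ to be Hermitian, i.e.\ $\beta(B,A)=\overline{\beta(A,B)}$; I would obtain this by polarization: expanding $0\le\beta(A+\lambda B,A+\lambda B)=\beta(A,A)+\lambda\beta(A,B)+\bar\lambda\beta(B,A)+|\lambda|^2\beta(B,B)$ and using that this expression, together with $\beta(A,A)$ and $\beta(B,B)$, is real forces $\lambda\beta(A,B)+\bar\lambda\beta(B,A)\in\bR$ for every $\lambda$; taking $\lambda=1$ and $\lambda=i$ then yields $\beta(B,A)=\overline{\beta(A,B)}$. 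Cauchy--Schwarz for this Hermitian positive semidefinite form gives $|\la\ell,A^*B\ra|^2\le\la\ell,A^*A\ra\,\la\ell,B^*B\ra$; choosing $A=I_\scrH$ gives $|\la\ell,B\ra|^2\le\la\ell,I_\scrH\ra\,\la\ell,B^*B\ra$. Finally, if $\|B\|\le 1$ then $B^*B\le I_\scrH$, so by the first part $\la\ell,B^*B\ra\le\la\ell,I_\scrH\ra$, whence $|\la\ell,B\ra|\le\la\ell,I_\scrH\ra$ for every contraction $B$, i.e.\ $\|\ell\|\le\la\ell,I_\scrH\ra$.

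The main (and essentially only) obstacle is the Hermitian symmetry of $\beta$, without which Cauchy--Schwarz is not directly available; everything else is routine. Note that separability of $\scrH$ plays no role here — the argument uses only that $\cL(\scrH)$ is a unital C$^*$-algebra — so the same reasoning will serve equally for $V=\cL(\fH)$ and $V=\cL(\fH\otimes\fH)$ as needed later.
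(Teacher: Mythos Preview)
Your proof is correct. The reality statement is handled identically to the paper (via $-\|T\|I_{\scrH}\le T\le\|T\|I_{\scrH}$). For the norm identity, however, the paper takes a different route: rather than introducing the sesquilinear form $\beta(A,B)=\la\ell,A^*B\ra$ and invoking Cauchy--Schwarz, it argues by a phase rotation. Given $T\in V$ with $\la\ell,T\ra\not=0$, one picks $\a\in\bC$ with $|\a|=1$ so that $\la\ell,\a T\ra=|\la\ell,T\ra|$; then $\la\ell,\Im(\a T)\ra$, which is real by the first part, must vanish, and hence $|\la\ell,T\ra|=\la\ell,\Re(\a T)\ra\le\la\ell,I_{\scrH}\ra\|\Re(\a T)\|\le\la\ell,I_{\scrH}\ra\|T\|$. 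Your Cauchy--Schwarz argument is the standard C$^*$-algebraic one (essentially the GNS inequality) and has the advantage of being reusable whenever one has a positive functional on a C$^*$-algebra; the paper's phase-rotation argument is slightly more elementary in that it avoids establishing the Hermitian symmetry of $\beta$, using only the already-proved reality of $\ell$ on self-adjoint elements together with the trivial bound $\|\Re(\a T)\|\le\|T\|$.
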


\begin{proof}
Indeed, for all $T=T^*\in V$, one has $-\|T\| I_{\scrH}\le T\le\|T\| I_{\scrH}$, so that 
$$
-\|T\| I_{\scrH}\le T\le\|T\| I_{\scrH}\,,\quad\text{ so that }-\|T\|\la\ell,I_{\scrH}\ra\le\la\ell,T\ra\le\|T\|\la\ell,I_{\scrH}\ra\,.
$$
In particular, for all $T=T^*\in V$, one has $\la\ell,T\ra\in\bR$.  For all $T\in V$ (not necessarily self-adjoint), write 
$$
\Re(T)=\tfrac12(T+T^*)\text{ and }\Im(T):=\tfrac12i(T^*-T)\,.
$$
If $\la\ell,T\ra\not=0$, there exists $\a\in\bC$ s.t. $|\a|=1$ and $\la\ell,\a T\ra=|\la\ell,T\ra|$. These considerations show immediately that $\la\ell,\Im(\a T)\ra=0$ so that 
$$
\ba
|\la\ell,T\ra|=\la\ell,\Re(\a T)\ra\le&\la\ell,I_{\scrH}\ra\|\Re(\a T)\|
\\
\le&\tfrac12\la\ell,I_{\scrH}\ra(\|\a T\|+\|(\a T)^*\|)=\la\ell,I_{\scrH}\ra\|T\|\,.
\ea
$$
Hence $\|\ell\|\le\la\ell,I_{\scrH}\ra$, while it is obvious that $\la\ell,I_{\scrH}\ra\le\|\ell\|$. This concludes the proof of Lemma \ref{L-L>0}.
\end{proof}

\begin{Lem}\lb{L-ReprQ}
Let $0\le \L\in E'$. Then there exists $Q\in\cL^1(\fH\otimes\fH)$ such that
$$
Q=Q^*\ge 0\,,\quad\hbox{ and }\quad\|Q\|_{\cL^1}\le\|\L\|\,,
$$
and $L\in E'$ such that
$$
L\ge 0\,,\quad L\rstr_{\cK(\fH\otimes\fH)}=0\,,\quad\hbox{ and }\quad\|L\|\le\|\L\|\,,
$$
satisfying
$$
\L=\Tr_{\fH\otimes\fH}(Q\bu)+L\,.
$$
\end{Lem}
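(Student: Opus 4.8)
The plan is to obtain $Q$ as the operator representing the restriction of $\L$ to the finite–rank part of $E$, and to take $L:=\L-\Tr_{\fH\otimes\fH}(Q\bu)$ for the remainder. First I would introduce the bounded sesquilinear form on $\fH\otimes\fH$ given by $b(\phi,\psi):=\L(|\psi\ra\la\phi|)$. It is bounded, $|b(\phi,\psi)|\le\|\L\|\,\||\psi\ra\la\phi|\|_{\cL(\fH\otimes\fH)}=\|\L\|\,\|\phi\|\,\|\psi\|$, and nonnegative, $b(\phi,\phi)=\L(|\phi\ra\la\phi|)\ge 0$, since $\L\ge 0$ and $|\phi\ra\la\phi|\ge 0$. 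By the Riesz representation theorem for bounded forms there is a unique $Q=Q^*\ge 0$ in $\cL(\fH\otimes\fH)$ with $\la\phi|Q|\psi\ra=b(\phi,\psi)$ for all $\phi,\psi\in\fH\otimes\fH$. Because $\la\phi|Q|\psi\ra=\Tr_{\fH\otimes\fH}(Q|\psi\ra\la\phi|)$, the functionals $\L$ and $\Tr_{\fH\otimes\fH}(Q\bu)$ coincide on every rank–one operator, hence on their linear span, i.e. on every finite–rank operator of $E$.

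Next I would show that $Q\in\cL^1(\fH\otimes\fH)$ with $\|Q\|_{\cL^1}\le\|\L\|$. Fix a complete orthonormal system $(f_k)_{k\ge 1}$ of $\fH\otimes\fH$ and set $P_n:=\sum_{k=1}^n|f_k\ra\la f_k|$, so that $P_n$ increases to $I$ strongly. Then $\sum_{k=1}^n\la f_k|Q|f_k\ra=\sum_{k=1}^n\L(|f_k\ra\la f_k|)=\L(P_n)$, and $0\le\L(P_n)\le\|\L\|\,\|P_n\|=\|\L\|$. Letting $n\to\infty$ gives $\Tr_{\fH\otimes\fH}(Q)\le\|\L\|$, hence $Q\in\cL^1(\fH\otimes\fH)$ and $\|Q\|_{\cL^1}=\Tr_{\fH\otimes\fH}(Q)\le\|\L\|$. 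In particular $\Tr_{\fH\otimes\fH}(Q\bu)$ is a bounded functional on $E$, so $L:=\L-\Tr_{\fH\otimes\fH}(Q\bu)$ belongs to $E'$; it vanishes on finite–rank operators by the previous paragraph, and since these are norm–dense in $\cK(\fH\otimes\fH)$ and $L$ is continuous, $L\rstr_{\cK(\fH\otimes\fH)}=0$.

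The step I expect to be the main obstacle is proving $L\ge 0$ on all of $E$, not merely that it annihilates the compacts. I would fix $T=T^*\ge 0$ in $\cL(\fH\otimes\fH)$ and, with $P_n$ as above, set $T_n:=T^{1/2}P_nT^{1/2}$; each $T_n$ is finite rank and $0\le T_n\le T$, so $\L(T_n)\le\L(T)$ by positivity of $\L$. On the other hand $\L(T_n)=\Tr_{\fH\otimes\fH}(QT_n)=\Tr_{\fH\otimes\fH}\big(T^{1/2}QT^{1/2}P_n\big)=\sum_{k=1}^n\la f_k|T^{1/2}QT^{1/2}|f_k\ra$, and since $T^{1/2}QT^{1/2}\ge 0$ is trace class this last quantity increases to $\Tr_{\fH\otimes\fH}(T^{1/2}QT^{1/2})=\Tr_{\fH\otimes\fH}(QT)$. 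Hence $\Tr_{\fH\otimes\fH}(QT)\le\L(T)$, i.e. $L(T)\ge 0$. Finally, the norm bound follows from Lemma \ref{L-L>0}: applied to $L\ge 0$ it gives $\|L\|=L(I_{\fH\otimes\fH})=\L(I_{\fH\otimes\fH})-\Tr_{\fH\otimes\fH}(Q)$, and applied to $\L\ge 0$ it gives $\L(I_{\fH\otimes\fH})=\|\L\|$; since $\Tr_{\fH\otimes\fH}(Q)\ge 0$ this yields $\|L\|\le\|\L\|$ (in fact $\|Q\|_{\cL^1}+\|L\|=\|\L\|$), which completes the argument.
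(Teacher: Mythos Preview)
Your proof is correct and follows essentially the same strategy as the paper: extract $Q$ from the restriction of $\L$ to the finite-rank/compact operators, set $L:=\L-\Tr(Q\,\bu)$, and prove $L\ge 0$ via a finite-rank approximation. The only differences are cosmetic --- you build $Q$ directly from the sesquilinear form $b(\phi,\psi)=\L(|\psi\ra\la\phi|)$ via Riesz (which gives $Q=Q^*\ge 0$ for free) rather than invoking the duality $\cK'\simeq\cL^1$, and for positivity of $L$ you approximate $T$ by $T^{1/2}P_nT^{1/2}$ instead of the paper's $(I-\Pi_n)T(I-\Pi_n)$ expansion with spectral projections of $Q$; both choices streamline the argument slightly but do not change its substance.
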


\begin{proof}
Since $\cL^1(\fH\otimes\fH)=\cK(\fH\otimes\fH)'$, one has
$$
\L\rstr_{\cK(\fH\otimes\fH)}=\Tr_{\fH\otimes\fH}(Q\bu)\,,
$$
for some $Q\in\cL^1(\fH\otimes\fH)$. 

First, observe that 
$$
\L\ge 0\implies Q=Q^*\ge 0\,.
$$
Indeed, since $Q\in\cL^1(\fH\otimes\fH)$, then $Q$ is compact. Writing
$$
\Re(Q)=\tfrac12(Q+Q^*)\quad\text{ and }\quad\Im(Q)=-\tfrac{i}2(Q-Q^*)\,,
$$
one has $\Re(Q)=\Re(Q)^*$ and $\Im(Q)=\Im(Q)^*$, so that 
$$
\la\L,\Im(Q)\ra=\Tr_{\fH\otimes\fH}(\Re(Q)\Im(Q))+i\Tr_{\fH\otimes\fH}(\Im(Q)^2)\in\bR\,.
$$
Since 
$$
\ba
\Tr_{\fH\otimes\fH}(\Re(Q)\Im(Q))=&\Tr_{\fH\otimes\fH}(\Im(Q)\Re(Q))
\\
=&\Tr_{\fH\otimes\fH}((\Re(Q)\Im(Q))^*)=\overline{\Tr_{\fH\otimes\fH}(\Re(Q)\Im(Q))}\in\bR\,,
\ea
$$
one concludes that 
$$
\Tr_{\fH\otimes\fH}(\Im(Q)^2)=0\,,\quad\text{ so that }\quad\Im(Q)^2)=0\,.
$$
Thus $Q=Q^*$. Next observe that, for each $\xi\in\fH\otimes\fH$, 
$$
|\xi\ra\la\xi|=(|\xi\ra\la\xi|)^*\ge 0\quad\text{ so that }\quad\la\L,|\xi\ra\la\xi|\ra=\Tr_{\fH\otimes\fH}(Q|\xi\ra\la\xi|)=\la\xi|Q|\xi\ra\ge 0\,.
$$

Let $(\phi_n)_{n\ge 0}$ be a complete orthonormal sequence of eigenvectors of $Q$ in $\fH\otimes\fH$, and let $\ll_n$ be the eigenvalue of $Q$ associated to $\phi_n$. Then
$$
\|Q\|_{\cL^1}=\Tr_{\fH\otimes\fH}(Q)=\sup_{n\ge 1}\sum_{k=1}^n\ll_n=\sup_{n\ge 1}\La\L,\sum_{k=1}^n|\phi_n\ra\la \phi_n|\Ra\le\la\L,I_{\fH\otimes\fH}\ra=\|\L\|\,.
$$

Define 
$$
L:=\L-\Tr_{\fH\otimes\fH}(Q\bu)\,,
$$
so that
$$
L\rstr_{\cK(\fH\otimes\fH)}=0
$$
by construction. Let $\Pi_n$ be the orthogonal projection on $\Span(\phi_0,\ldots,\phi_n)$. Obviously $\Pi_nQ=Q\Pi_n$. Then, for each $T=T^*\ge 0$ in $E$, 
$$
\ba
0\le\la\L,(I_{\fH\otimes\fH}-\Pi_n)T(I_{\fH\otimes\fH}-\Pi_n)\ra=\la\L,T\ra-\la\L,T\Pi_n\ra-\la\L,\Pi_nT\ra+\la\L,\Pi_nT\Pi_n\ra
\\
=\la\L,T\ra-\Tr_{\fH\otimes\fH}(Q(T\Pi_n+\Pi_nT-\Pi_nT\Pi_n))=\la\L,T\ra-\Tr_{\fH\otimes\fH}(\Pi_nQ\Pi_nT)
\\
\to\la\L,T\ra-\Tr_{\fH\otimes\fH}(QT)=\la L,T\ra
\ea
$$
as $n\to\infty$, since $Q\in\cL^1(\fH\otimes\fH)$, so that $\Pi_nQ\Pi_n\to Q$ in $\cL^1(\fH\otimes\fH)$ as $n\to\infty$. This shows that $L\ge 0$. In particular (see footnote above), one has 
$$
\|L\|=\la L,I_{\fH\otimes\fH}\ra=\la\L,I_{\fH\otimes\fH}\ra-\Tr_{\fH\otimes\fH}(Q)\le\la\L,I_{\fH\otimes\fH}\ra=\|\L\|\,.
$$
This conlcudes the proof of Lemma \ref{L-ReprQ}.\end{proof}

\smallskip
\begin{Lem}\lb{L-L=0}
Let $0\le \L\in E'$ satisfy
$$
\la\L,A\otimes I+I\otimes B\ra=\Tr_\fH(RA+SB)\,,\quad\hbox{ for all }A=A^*\hbox{ and }B=B^*\in\cL(\fH)\,.
$$
Then $\L$ is of the form
$$
\L=\Tr_{\fH\otimes\fH}(Q\bu)\,,\quad\hbox{ with }Q=Q^*\ge 0\hbox{ and }\Tr_{\fH\otimes\fH}(Q)=1\,.
$$
In particular, $Q$ is a coupling of $R$ and $S$.
\end{Lem}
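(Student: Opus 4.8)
The plan is to start from the decomposition furnished by Lemma~\ref{L-ReprQ}, namely $\L=\Tr_{\fH\otimes\fH}(Q\bu)+L$ with $Q=Q^*\ge0$ trace-class and $L\ge0$ annihilating $\cK(\fH\otimes\fH)$, and then to prove that the singular part $L$ must vanish; all remaining assertions follow at once.

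I would first normalize. Testing the hypothesis on $A=B=I_\fH$ gives $2\langle\L,I_{\fH\otimes\fH}\rangle=\Tr_\fH(R)+\Tr_\fH(S)=2$, so Lemma~\ref{L-L>0}, applied to $\L\ge0$, yields $\|\L\|=\langle\L,I_{\fH\otimes\fH}\rangle=1$; evaluating the decomposition at $I_{\fH\otimes\fH}$ and using $\langle L,I_{\fH\otimes\fH}\rangle=\|L\|$ (Lemma~\ref{L-L>0} again) gives $\Tr_{\fH\otimes\fH}(Q)+\|L\|=1$. Next let $Q_1:=\Tr_2(Q)$ and $Q_2:=\Tr_1(Q)$ be the partial traces of $Q$; testing the hypothesis on $(A,0)$ and on $(0,B)$ I obtain, for all self-adjoint $A,B\in\cL(\fH)$,
$$
\langle L,A\otimes I\rangle=\Tr_\fH((R-Q_1)A)\,,\qquad\langle L,I\otimes B\rangle=\Tr_\fH((S-Q_2)B)\,.
$$
Choosing $A\ge0$, resp. $B\ge0$, and using $L\ge0$ shows $R-Q_1\ge0$ and $S-Q_2\ge0$ as trace-class operators; moreover $\Tr_\fH(R-Q_1)=\Tr_\fH(S-Q_2)=1-\Tr_{\fH\otimes\fH}(Q)=\|L\|$.

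The crux of the argument will be to kill $L$ by combining its singularity with the factorized shape of the test operators. Fix an increasing sequence of finite-rank orthogonal projections $\Pi_N$ on $\fH$ with $\Pi_N\to I_\fH$ strongly. As $\Pi_N\otimes\Pi_N$ is of finite rank, $\langle L,\Pi_N\otimes\Pi_N\rangle=0$; hence, using $L\ge0$ together with $I\otimes(I-\Pi_N)-\Pi_N\otimes(I-\Pi_N)=(I-\Pi_N)\otimes(I-\Pi_N)\ge0$,
$$
\langle L,\Pi_N\otimes I\rangle=\langle L,\Pi_N\otimes(I-\Pi_N)\rangle\le\langle L,I\otimes(I-\Pi_N)\rangle=\|L\|-\langle L,I\otimes\Pi_N\rangle\,.
$$
By the formulas of the previous paragraph, $\langle L,\Pi_N\otimes I\rangle=\Tr_\fH((R-Q_1)\Pi_N)\to\Tr_\fH(R-Q_1)=\|L\|$ and $\langle L,I\otimes\Pi_N\rangle=\Tr_\fH((S-Q_2)\Pi_N)\to\|L\|$ as $N\to\infty$, because $R-Q_1$ and $S-Q_2$ are positive trace-class operators and $\Pi_N\to I_\fH$ strongly. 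Letting $N\to\infty$ in the displayed inequality gives $\|L\|\le0$, hence $L=0$ and $\Tr_{\fH\otimes\fH}(Q)=1$.

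It then remains only to identify the marginals: with $L=0$ the hypothesis reads $\Tr_\fH(Q_1A)=\langle\L,A\otimes I\rangle=\Tr_\fH(RA)$ for every self-adjoint $A$, forcing $Q_1=R$, and symmetrically $Q_2=S$. Thus $Q=Q^*\ge0$, $\Tr_{\fH\otimes\fH}(Q)=1$, $\Tr_2(Q)=R$, $\Tr_1(Q)=S$, so $\L=\Tr_{\fH\otimes\fH}(Q\bu)$ and $Q$ is a coupling of $R$ and $S$, as claimed. I expect the one genuinely delicate point to be the limiting step for $L$: one must keep in mind that $\Pi_N\to I_\fH$ only strongly whereas $L$ is merely norm-continuous, which is exactly why the argument has to be routed through the compact operators $\Pi_N\otimes\Pi_N$ and the trace-class convergences $\Tr_\fH((R-Q_1)\Pi_N)\to\Tr_\fH(R-Q_1)$, rather than attempting to take a limit of $A\otimes I$ directly.
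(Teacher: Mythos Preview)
Your proof is correct and follows the same overall strategy as the paper: decompose $\L$ via Lemma~\ref{L-ReprQ} as $\Tr_{\fH\otimes\fH}(Q\,\bu)+L$ and then exploit the tension between the singularity of $L$ (it vanishes on compacts, hence on $\Pi_N\otimes\Pi_N$) and the marginal constraints to force $\|L\|=0$. The paper carries this out by evaluating $\la\L,T_n\ra$ in two ways for the test operator $T_n=(I-P_n)\otimes P_n+P_n\otimes(I-P_n)$, arriving at $2(1-\Tr Q)\le 1-\Tr Q$; you instead first isolate the identities $\la L,A\otimes I\ra=\Tr_\fH((R-Q_1)A)$ and $\la L,I\otimes B\ra=\Tr_\fH((S-Q_2)B)$, and then combine them via $(I-\Pi_N)\otimes(I-\Pi_N)\ge 0$ to get $\|L\|\le\|L\|-\|L\|$. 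The two arguments are equivalent (note $\|L\|=1-\Tr Q$), and your organization has the minor advantage of making explicit that $L$ restricted to operators of the form $A\otimes I$ or $I\otimes B$ is represented by the nonnegative trace-class defects $R-Q_1$ and $S-Q_2$.
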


\begin{proof}
Let $(e_1,e_2,\ldots)$ be a complete orthonormal system in $\fH$, and let $P_n$ be the orthogonal projection on $\Span(e_1,\ldots,e_n)$. Consider
$$
T_n:=(I_{\fH}-P_n)\otimes P_n+P_n\otimes(I_{\fH}-P_n)\ge 0\,,\qquad n\ge 1\,.
$$
Since $P_n\otimes P_n\ge 0$, one has
$$
0\le T_n\le I_{\fH}\otimes P_n+P_n\otimes I_{\fH}\le I_{\fH\otimes\fH}\,.
$$
Hence
$$
\ba
0\le\la\L,T_n\ra\le\Tr_{\fH\otimes\fH}(Q((I_{\fH}-P_n)\otimes I_{\fH}+I_{\fH}\otimes(I_{\fH}-P_n)))+\la L,T_n\ra&
\\
\le\Tr_{\fH}((Q_1+Q_2)(I_{\fH}-P_n))+\la L,I_{\fH\otimes\fH}\ra
\\
\to\la L,I_{\fH\otimes\fH}\ra=\la\L,I_{\fH\otimes\fH}\ra-\Tr_{\fH\otimes\fH}(Q)
\ea
$$
as $n\to+\infty$. In the formula above, $Q_1,Q_2$ are the partial traces of $Q$, defined as follows:
$$
\ba
Q_1\in\cL^1(\fH)\text{ and }\Tr_\fH(Q_1A)=\Tr_{\fH\otimes\fH}(Q(A\otimes I_\fH))\,,
\\
Q_2\in\cL^1(\fH)\text{ and }\Tr_\fH(Q_2A)=\Tr_{\fH\otimes\fH}(Q(I_\fH\otimes A))\,,
\ea
$$
for each $A\in\cL(\fH)$.

Thus
$$
\varlimsup_{n\to\infty}\la\L,T_n\ra\le\la\L,I_{\fH\otimes\fH}\ra-\Tr_{\fH\otimes\fH}(Q)=1-\Tr_{\fH\otimes\fH}(Q)\,.
$$
Taking $A=I$ and $B=0$ shows indeed that $\la\L,I_{\fH\otimes\fH}\ra=\Tr_\fH(R)=1$.

On the other hand, $(I-P_n)\otimes (I-P_n)\ge 0$, so that
$$
T_n=I_\fH\otimes P_n+P_n\otimes I_\fH-2P_n\otimes P_n\,,
$$
and hence
$$
\ba
\la\L,T_n\ra=&\Tr_\fH((R+S)P_n)-2\la\L,P_n\otimes P_n\ra
\\
=&\Tr_\fH((R+S)P_n)-2\Tr_{\fH\otimes\fH}(Q(P_n\otimes P_n))
\ea
$$
since $P_n\otimes P_n$ is a finite-rank operator (and therefore a compact operator). Thus
$$
\lim_{n\to\infty}\la\L,T_n\ra=\Tr_\fH(R+S)-2\Tr_{\fH\otimes\fH}(Q)=2(1-\Tr_{\fH\otimes\fH}(Q))\,.
$$

Therefore
$$
0\le 2(1-\Tr_{\fH\otimes\fH}(Q))=\lim_{n\to\infty}\la\L,T_n\ra=\varlimsup_{n\to\infty}\la\L,T_n\ra\le 1-\Tr_{\fH\otimes\fH}(Q)\,,
$$
so that
$$
1=\Tr_{\fH\otimes\fH}(Q)\quad\hbox{ and }\quad\| L\|=\la\L,I\ra-\Tr_{\fH\otimes\fH}(Q)=1-\Tr_{\fH\otimes\fH}(Q)=0\,.
$$

Summarizing, we have proved that $\L$ is represented by $Q\in\cL^1(\fH\otimes\fH)$ such that $\Tr_{\fH\otimes\fH}(Q)=1$, and the condition $\L\ge 0$ implies that $Q=Q^*\ge 0$ according to Lemma \ref{L-ReprQ}.
Finally, the definition of $\L$ implies that
$$
\ba
\la\L,A\otimes I_\fH\ra=\Tr_{\fH\otimes\fH}(Q(A\otimes I_\fH))=\Tr_\fH(RA)\,,
\\
\la\L,I_\fH\otimes B\ra=\Tr_{\fH\otimes\fH}(Q(I_\fH\otimes B))=\Tr_\fH(SB)\,,
\ea
$$
so that the partial traces of $Q$ are $Q_1=R$ and $Q_2=S$, meaning that $Q\in\cC(R,S)$. This concludes the proof of Lemma \ref{L-L=0}.
\end{proof}

\smallskip
At this point, we have proved that the minimizing linear functional $\L$ in the duality formula above is represented by $Q\in\cC(R,S)$. In other words,
$$
\sup_{(A,B)\in\fK}\Tr_\fH(RA+SB)=\min_{Q\in\cC(R,S)}\Tr_{\fH\otimes\fH}(QC)\,,
$$
with the notation 
$$
\Tr_{\fH\otimes\fH}(QC):=\sup_{\genfrac{}{}{0pt}{2}{T=T^*\in E}{T\le C}}\Tr_{\fH\otimes\fH}(QT)\,,
$$
where the constraint $T\le C$ has the meaning recalled above. Let us prove that
\be\lb{sup=TrQC}
\sup_{\genfrac{}{}{0pt}{2}{T=T^*\in E}{T\le C}}\Tr_{\fH\otimes\fH}(QT)=\Tr_{\fH\otimes\fH}(Q^{1/2}CQ^{1/2})\,.
\ee

Let $(\Phi_k)_{k\ge 0}$ be a complete orthonormal system of eigenvectors of $Q$, and let $(\ll_k)_{k\ge 0}$ be the sequence of eigenvalues of $Q$ such that $Q\Phi_k=\ll_k\Phi_k$ for each $k\ge 0$. With the notation 
in Appendix \ref{S-QTCost}, one has, by Lemma \ref{L-QfDom},
$$
\ba
2\hb\sum_{k\ge 0}\sum_{\genfrac{}{}{0pt}{2}{m_1,\ldots,m_d\ge 0}{n_1,\ldots,n_d\ge 0}}\ll_k(2(n_1+\ldots+n_d)+d)|\la\Psi_{m_1,\ldots,n_d,n_1,\ldots,n_d}|\Phi_k\ra|^2
\\
=\sum_{k\ge 0}\ll_k\la\Phi_k|C|\Phi_k\ra\le 2\sum_{k\ge 0}\ll_k\la\Phi_k|H\otimes I+I\otimes H|\Phi_k\ra&<\infty\,.
\ea
$$
By Corollary \ref{C-Energ}, one has $C_N:=(I_{\fH\otimes\fH}+\tfrac1NC)^{-1}C=C_N^*\in\cL(\fH\otimes\fH)$ for each $N\ge 1$, and
\be\lb{TrCNQ}
\Tr_{\fH\otimes\fH}(QC_N)\to\Tr_{\fH\otimes\fH}(Q^{1/2}CQ^{1/2})\quad\text{ as }N\to\infty\,.
\ee
Since $0\le C_N=C_N^*\le C$ for each $N\ge 1$
\be\lb{TrCNQ<}
\ba
\lim_{N\to\infty}\Tr_{\fH\otimes\fH}(C_NQ)
\le\sup_{\genfrac{}{}{0pt}{2}{T=T^*\in E}{T\le C}}\Tr_{\fH\otimes\fH}(QT)=\sup_{(A,B)\in\fK}\Tr_\fH(RA+SB)\,.
\ea
\ee
On the other hand, since $\ll_k>0\!\implies\!\Psi_k\!\in\!\fDom(H\otimes I\!+\!I\otimes H)\!\subset\!\fDom(C)$, for each $(A,B)\in\fK$, one has
$$
\sum_{k\ge 0}\ll_k\la\Psi_k|A\otimes I_\fH+I_\fH\otimes B|\Psi_k\ra\le\sum_{k\ge 0}\ll_k\la\Psi_k|C|\Psi_k\ra\,,
$$
or equivalently, since $Q\in\cC(R,S)$, 
\be\lb{TrRA+SB<}
\ba
\Tr_\fH(RA+SB)=&\Tr_{\fH\otimes\fH}(Q^{1/2}(A\otimes I_\fH+I_\fH\otimes B)Q^{1/2})
\\
\le&\Tr_{\fH\otimes\fH}(Q^{1/2}CQ^{1/2})\,.
\ea
\ee
The inequalities \eqref{TrCNQ}, \eqref{TrCNQ<} and \eqref{TrRA+SB<} obviously imply \eqref{sup=TrQC}, and this concludes the proof if Theorem \ref{T-QDual}.


\section{Proof of Theorem \ref{T-ExOptiAB}}\label{S-proof2}


Let $(A_k,B_k)\in\fK$ be a maximizing sequence, i.e.
$$
\Tr(RA_k+SB_k)\to\sup_{(A,B)\in\fK}\Tr(RA+SB)=:\tau\in[0,+\infty)\quad\text{ as }k\to\infty\,.
$$
That $\tau<+\infty$ comes from the fact that the $\inf$ in Theorem \ref{T-QDual} is attained by some optimal coupling $F\in\cC(R,S)$, and that $R$ and $S$ both belong to $\cD_2(\fH)$. Indeed, using Lemma \ref{L-UBdCyclicity} shows that
$$
\ba
F\in\cC(R,S)\implies 0\le\Tr(F^{1/2}CF^{1/2})\le&2\Tr(F^{1/2}(H\otimes I+I\otimes H)F^{1/2})
\\
=&2\Tr(R^{1/2}HR^{1/2}+S^{1/2}HS^{1/2})<+\infty\,.
\ea
$$

\subsection{Step 1: normalizing the maximizing sequence.}

For each $k\ge 1$, set
$$
a_k:=2H-A_k\quad\text{ and }\quad b_k=2H-B_k\,.
$$
Thus
$$
\ba
a_k\otimes I+I\otimes b_k&\ge 2(H\otimes I+I\otimes H)-C
\\
&=\sum_{j=1}^d((-i\hb\d_{x_j}-i\hb\d_{y_j})^2+(x_j+y_j)^2)=:\Si\ge 0\,.
\ea
$$
The operator $\Si$ satisfies the same uncertainty inequality as $C$:
\be\lb{DefSi}
\ba
\Si=\sum_{j=1}^d((x_j+y_j)+i(-i\hb\d_{x_j}-i\hb\d_{y_j}))((x_j+y_j)-i(-i\hb\d_{x_j}-i\hb\d_{y_j}))
\\
+\sum_{j=1}^di([-i\hb\d_{x_j},x_j]+[-i\hb\d_{y_j},y_j])\ge 2d\hbar I\otimes I&\,,
\ea
\ee
and
\be\lb{fDomSi}
\fDom(\Si)=\{\psi\in\fH\!\otimes\!\fH\text{ s.t. }(x_j\!\!+\!y_j)\psi\text{ and }(\d_{x_j}\!\!\!+\!\d_{y_j})\psi\in\fH\!\otimes\!\fH\,,\,\,1\le j\le d\}\,.
\ee

Set $\a_k:=\sup\{\a\in\bR\hbox{ s.t. }a_k\ge\a I\}$ for each $k\ge 1$. Since $H=H^*\ge 0$, one has $a_k\ge-A_k\ge-\|A_k\|I$, so that $\a_k\ge-\|A_k\|$. On the other hand, let $e_0$ be a normalized eigenvector of $R$ such that $Re_0\not=0$. 
Since $R\in\cD_2(\fH)$, one has $0\le\la e_0|H|e_0\ra<+\infty$ by \eqref{J0infDom} (see Lemma \ref{L-Energ} in the Appendix), so that
$$
a_k\ge\a I\implies\a\le\la e_0|a_k|e_0\ra\le 2\la e_0|H|e_0\ra+\|A_k\|\,.
$$
Hence $\a_k\in[-\|A_k\|,2\la e_0|H|e_0\ra+\|A_k\|]$. By definition of $\a_k$, there exists $\phi_n\in\Dom(H)$ such that 
$$
\|\phi_n\|_\fH=1\quad\hbox{ and }\la\phi_n|a_k|\phi_n\ra\to\a_k\hbox{ as }n\to\infty\quad\hbox{ for each }k\ge 1\,.
$$
Thus
$$
\la\phi_n|a_k|\phi_n\ra I+b_k\ge 2d\hbar I\quad\hbox{ for each }n\ge 1\,,
$$
so that
$$
\a_kI+b_k\ge 2d\hbar I\,.
$$
On the other hand, again by definition of $\a_k$, one has
$$
a_k-\a_kI\ge 0\,.
$$
Setting
$$
\hat a_k:=a_k-\a_kI+d\hbar I\,,\quad\hat b_k:=b_k+\a_kI-d\hbar I\,,
$$
one has
$$
\ba
\hat a_k\otimes I+I\otimes\hat b_k=a_k\otimes I+I\otimes b_k\ge\Si&\,,
\\
\hat a_k=\hat a_k^*\ge d\hbar I\,,\qquad\hat b_k=\hat b_k^*\ge d\hbar I&\,.
\ea
$$
Finally
$$
\ba
\\
0\le\Tr_\fH(R^{1/2}\hat a_kR^{1/2}+S^{1/2}\hat b_kS^{1/2})=&\Tr_\fH(R^{1/2}a_kR^{1/2}+S^{1/2}b_kS^{1/2})
\\
=&2\Tr_\fH(R^{1/2}HR^{1/2}+S^{1/2}HS^{1/2})
\\
&-\Tr_\fH(RA_k+SB_k)
\\
\to&2\Tr_\fH(R^{1/2}HR^{1/2}+S^{1/2}HS^{1/2})-\tau
\ea
$$
as $k\to\infty$.

\subsection{Step 2: defining the unbounded operators $\fa$ and $\fb$.}

With the minimizing sequence $(a_k,b_k)$ replaced with its normalized variant $(\hat a_k,\hat b_k)$ as explained in the previous section, one has
$$
\ba
0\le\Tr(R^{1/2}\hat a_kR^{1/2})\le\sup_k\Tr(R^{1/2}\hat a_kR^{1/2})<+\infty\,,
\\
0\le\Tr(S^{1/2}\hat b_kS^{1/2})\le\sup_k\Tr(S^{1/2}\hat b_kS^{1/2})<+\infty\,,
\ea
$$
since both these sequences are converging as $k\to\infty$. Therefore, the sequences of operators $R^{1/2}\hat a_kR^{1/2}$ and $S^{1/2}\hat b_kS^{1/2}$ are bounded in $\cL^1(\fH)$. Since $\cL^1(\fH)$ is the topological dual of $\cK(\fH)$
(the algebra of compact operators on $\fH$), the Banach-Alaoglu theorem implies that there exists a subsequence of $(\hat a_k,\hat b_k)$ (abusively denoted $(\hat a_k,\hat b_k)$ for simplicity) such that
$$
R^{1/2}\hat a_kR^{1/2}\to V\quad\hbox{ and }\quad S^{1/2}\hat b_kS^{1/2}\to W\quad\hbox{ in }\cL^1(\fH)\text{ weak-* as }k\to\infty\,.
$$
Since 
$$
\hat a_k=\hat a_k^*\ge d\hbar I\quad\hbox{ and }\quad\hat b_k=\hat b_k^*\ge d\hbar I\,,
$$
one has 
$$
V=V^*\ge d\hbar R\quad\hbox{ and }\quad W=W^*\ge d\hbar S\,.
$$
In particular
$$
\Ker(V)\subset\Ker(R)\quad\text{ and }\quad\Ker(W)\subset\Ker(S)\,.
$$
On the other hand
$$
\Ran(V)\subset\overline{\Ran(R^{1/2})}=\overline{\Ran(R)}\quad\text{ and }\quad\Ran(W)\subset\overline{\Ran(S^{1/2})}=\overline{\Ran(S)}\,.
$$
(To check the first inclusion, pick $\xi=Vx$, and observe that 
$$
\la y|R^{1/2}\hat a_kR^{1/2}|x\ra=\Tr(R^{1/2}\hat a_kR^{1/2}|x\ra\la y|)\to\la y|V|x\ra
$$
so that $\xi_k=R^{1/2}\hat a_kR^{1/2}x\in\Ran(R^{1/2})$ satisfies $\xi_k\to\xi$ weakly in $\fH$. Hence $\xi$ belongs to the weak closure of $\Ran(R^{1/2})$, which is equal to its strong closure $\overline{\Ran(R^{1/2})}$ since $\Ran(R^{1/2})$ 
is a convex subset of $\fH$: see Theorem 3.7 in \cite{Brezis}.) Since
$$
\ba
\Ker(V)^\perp=\overline{\Ran(V)}\subset\overline{\Ran(R)}=\Ker(R)^\perp\,,
\\
\Ker(W)^\perp=\overline{\Ran(W)}\subset\overline{\Ran(S)}=\Ker(S)^\perp\,,
\ea
$$
(see Corollary 2.18 (iv) in \cite{Brezis}) one has
$$
\ba
\Ker(V)=\Ker(R)\quad\text{ and }\quad\overline{\Ran(V)}=\Ker(R)^\perp\,,
\\
\Ker(W)=\Ker(S)\quad\text{ and }\quad\overline{\Ran(W)}=\Ker(S)^\perp\,.
\ea
$$
In particular
$$
V\in\cL^1(\Ker(R)^\perp)\quad\text{ and }\quad W\in\cL^1(\Ker(S)^\perp)\,.
$$
Let $\fv\in\cL(\cJ[R],\cJ[R]')$ and $\fw\in\cL(\cJ[S],\cJ[S]')$ be the operators associated to $V$ and $W$ by \eqref{Isom2}; since $V=V^*$ and $W=W^*$, one has
$$
\fv^*=\fv\quad\text{ and }\quad\fw^*=\fw\,.
$$
Next 
$$
\cJ_0[R]\otimes\cJ_0[S]\subset\fDom(H\otimes I+I\otimes H)\subset\fDom(\Si)
$$
where the first inclusion comes from \eqref{J0xJ0in}, and the second from \eqref{fDom} and \eqref{fDomSi}. By construction, the sequences $(\hat a_k)_{k\ge 1}$ and $(\hat b_k)_{k\ge 1}$ satisfy
$$
\la\Phi|\hat a_k\otimes I+I\otimes\hat b_k-\Si|\Phi\ra\ge 0\quad\hbox{ for all }\Phi\in\cJ_0[R]\otimes\cJ_0[S]\,.
$$
For each $\phi\in\cJ_0[R]$, there exists a unique $\wtilde\phi\in\cJ_0[R]$ such that $R^{1/2}\wtilde\phi=\phi$, so that
$$
\la\phi|\hat a_k|\phi\ra=\la\wtilde\phi|R^{1/2}\hat a_kR^{1/2}|\wtilde\phi\ra\to\la\wtilde\phi|V|\wtilde\phi\ra=\la\phi|\fv|\phi\ra\qquad\text{ as }k\to\infty\,.
$$
Likewise $\la\psi|\hat b_k|\psi\ra\to\la\psi|\fw|\psi\ra$ as $k\to\infty$ for each $\psi\in\cJ_0[R]$. Passing to the limit in the last inequality implies that
$$
\la\Phi|\fv\otimes I+I\otimes\fw-\Si|\Phi\ra\ge 0\quad\hbox{ for all }\Phi\in\cJ_0[R]\otimes\cJ_0[S]\,.
$$
Let $\fa=\fa^*\in\cL(\cJ[R],\cJ[R]')$ and $\fb=\fb^*\in\cL(\cJ[S],\cJ[S]')$ be the operators associated to $2R^{1/2}HR^{1/2}-V\in\cL^1((\Ker(R)^\perp))$ and to $2S^{1/2}HS^{1/2}-W\in\cL^1((\Ker(S)^\perp))$ respectively. The last inequality on 
$\fv$ and $\fw$ implies that $(\fa,\fb)\in\tilde\fK(R,S)$.

\subsection{Step 3: relaxing the constraint.}

In this step we prove the following: for each $(\bar a,\bar b)\in\tilde\fK(R,S)$ and each $F\in\cC(R,S)$, one has
\be\lb{RelaxIneq}
\Tr_{\fH\otimes\fH}(F^{1/2}CF^{1/2})\ge\Tr_\fH(R^{1/2}\bar aR^{1/2}+S^{1/2}\bar bS^{1/2})\,.
\ee

Let $(e'_j)$ and $(f'_l)$ be orthonormal sequences of eigenvectors of $R$ and $S$ belonging to $\cJ_0[R]$ and $\cJ_0[S]$, and assumed to be complete in $\Ker(R)^\perp$ and $\Ker(S)^\perp$ respectively. Call $\fp_m$ and $\fq_n$
the orthogonal projections on the $m$ first elements of $(e'_j)$ and on the $n$ first elements of $(f'_l)$ respectively, so that
$$
\ba
0\le\fp_1\le\ldots\le\fp_m\le\sup_{m}\fp_m=\fP=\text{ orthogonal projection on }\ker(R)^\perp\,,
\\
0\le\fq_1\le\ldots\le\fq_n\le\sup_{n}\fq_n=\fQ=\text{ orthogonal projection on }\ker(S)^\perp\,.
\ea
$$

We shall argue instead in terms of the operators 
$$
\bar v=\bar v^*\in\cL(\cJ[R],\cJ[R]')\quad\text{ and }\quad\bar w=\bar w^*\in\cL(\cJ[S],\cJ[S]')
$$
associated by \eqref{Isom2} to the operators 
$$
\ba
R^{1/2}HR^{1/2}\!-\!R^{1/2}\bar aR^{1/2}\in\cL^1(\Ker(R)^\perp)\,,
\\
S^{1/2}HS^{1/2}-\,\,S^{1/2}\bar bS^{1/2}\in\cL^1(\Ker(S)^\perp)\,.
\ea
$$

Since $(\bar a,\bar b)\in\fK(R,S)$, one has $\la\Phi|C-\bar a\otimes I-I\otimes\bar b|\Phi\ra\ge 0$ for each $\Phi\in\cJ_0[R]\otimes\cJ_0[S]$, so that, for each $m,n$, one has
$$
\ba
\fp_m\otimes\fq_n\Si_N\fp_m\otimes\fq_n\le\fp_m\otimes\fq_n\Si\fp_m\otimes\fq_n\le&(\fp_m\bar v\fp_m)\otimes\fq_n+\fp_m\otimes(\fq_n\bar w\fq_n)
\\
\le&(\fp_m\bar v\fp_m)\otimes\fQ+\fP\otimes(\fq_n\bar w\fq_n)\,,
\ea
$$
where $\Si_N:=(I_{\fH\otimes\fH}+\tfrac1N\Si)^{-1}\Si=\Si_N^*\in\cL(\fH\otimes\fH)$ and $0\le\Si_N\le\Si$ for each $N\ge 1$.

That $(\fp_m\bar v\fp_m)\otimes\fq_n\le(\fp_m\bar v\fp_m)\otimes\fQ$ is seen easily, for instance by the following argument. Let $\Phi$ be any element of $\Ker(R)^\perp\otimes\Ker(S)^\perp$, which we decompose on the complete orthonormal system $(e'_j\otimes f'_l)$:
$$
\Phi=\sum_{j,l}\Phi_{jl}e'_j\otimes f'_l\,,\quad\sum_{j,l}|\Phi_{jl}|^2=\|\Phi\|_{\fH\otimes\fH}^2<\infty\,.
$$
Then
$$
\ba
\la\Phi|(\fp_m\bar v\fp_m\otimes\fq_n|\Phi\ra&=\sum_{\genfrac{}{}{0pt}{2}{1\le j,k\le m}{1\le l\le n}}\overline{\Phi_{jl}}\Phi_{kl}\la\bar ve'_j|e'_k\ra_{\cV',\cV}
\\
&\le\sum_l\sum_{1\le j,k\le m}\overline{\Phi_{jl}}\Phi_{kl}\la\bar ve'_j|e'_k\ra_{\cV',\cV}=\la\Phi|\fp_n\bar v\fp_n\otimes\fQ|\Phi\ra\,,
\ea
$$
since the matrix $(\la\bar ve'_j|e'_k\ra_{\cV',\cV})_{1\le j,k\le m}$ is Hermitian nonnegative. The analogous inequality for $\bar w$ is proved similarly. 

Thus, for each $F\in\cC(R,S)$, one has
$$
\Tr_{\fH\otimes\fH}(F(\fp_m\otimes\fq_n)\Si_N(\fp_m\otimes\fq_n))\le\Tr_{\fH\otimes\fH}(F((\fp_m\bar v\fp_m)\otimes\fQ+\fP\otimes(\fq_n\bar w\fq_n)))\,.
$$

\begin{Lem}\lb{L-SuppCoupling}
Let $R,S\in\cD(\fH)$ and let $\fP$ and $\fQ$ be the orthogonal projections on $\Ker(R)^\perp$ and $\Ker(S)^\perp$ repectively. For each $F\in\cC(R,S)$, one has 
$$
F=(\fP\otimes\fQ)F(\fP\otimes\fQ)\,.
$$
\end{Lem}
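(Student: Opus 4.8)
The plan is to show that the optimal coupling $F$ is left‑ and right‑invariant under the two commuting orthogonal projections $\fP\otimes I_\fH$ and $I_\fH\otimes\fQ$; since their product is $\fP\otimes\fQ$, this gives $(\fP\otimes\fQ)F(\fP\otimes\fQ)=(\fP\otimes I_\fH)\big[(I_\fH\otimes\fQ)F(I_\fH\otimes\fQ)\big](\fP\otimes I_\fH)=F$, which is the assertion. By the symmetry between the two tensor factors (and between $R$ and $S$) it is enough to prove $(\fP\otimes I_\fH)F=F$, and then to take adjoints.

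First I would fix a unit vector $\phi\in\Ker(R)$ and set $\Pi_\phi:=|\phi\rangle\langle\phi|\otimes I_\fH\ge 0$. Choosing $A=|\phi\rangle\langle\phi|$ and $B=0$ in the marginal constraint defining $\cC(R,S)$ in \eqref{defsetofcoup} gives $\Tr_{\fH\otimes\fH}(\Pi_\phi F)=\Tr_\fH(|\phi\rangle\langle\phi|R)=\la\phi|R|\phi\ra=0$, since $R\phi=0$. As $\Pi_\phi$ is an orthogonal projection, cyclicity of the trace yields $\Tr_{\fH\otimes\fH}(\Pi_\phi F\Pi_\phi)=0$, and $\Pi_\phi F\Pi_\phi\ge 0$ is trace class, so $\Pi_\phi F\Pi_\phi=0$. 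Writing $F=F^{1/2}F^{1/2}$ and using $\Pi_\phi^*=\Pi_\phi$, this reads $(F^{1/2}\Pi_\phi)^*(F^{1/2}\Pi_\phi)=0$, hence $F^{1/2}\Pi_\phi=0$, and therefore $F\Pi_\phi=F^{1/2}(F^{1/2}\Pi_\phi)=0$; taking adjoints, $\Pi_\phi F=0$ as well.

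Next I would pick a complete orthonormal system $(\phi_n)_{n\ge 1}$ of $\Ker(R)$ (a separable space, $\fH=L^2(\bR^d)$ being separable). The partial sums $\sum_{n\le N}|\phi_n\rangle\langle\phi_n|$ converge strongly to $I_\fH-\fP$, so, $F$ being bounded, $\sum_{n\le N}(|\phi_n\rangle\langle\phi_n|\otimes I_\fH)F$ converges strongly to $((I_\fH-\fP)\otimes I_\fH)F$; each summand vanishes by the previous step, so $((I_\fH-\fP)\otimes I_\fH)F=0$, i.e.\ $(\fP\otimes I_\fH)F=F$, and hence $F(\fP\otimes I_\fH)=F$ by taking adjoints. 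Running the identical argument with $S$ in place of $R$ on the second factor gives $(I_\fH\otimes\fQ)F=F=F(I_\fH\otimes\fQ)$, and combining the four relations as indicated in the first paragraph gives $F=(\fP\otimes\fQ)F(\fP\otimes\fQ)$.

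The only step that is not pure operator algebra is the passage from the "one vector at a time" identity $\Pi_\phi F=0$ to the projection‑level identity, which is the strong‑convergence interchange above; this is entirely routine, so I do not anticipate any real obstacle in this lemma. It is, after all, the quantum counterpart of the elementary fact that a coupling of $\mu$ and $\nu$ is concentrated on $\Supp(\mu)\times\Supp(\nu)$.
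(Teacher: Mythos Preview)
Your proof is correct and follows the same overall strategy as the paper's: use the marginal constraint together with positivity of $F$ to show that $(I-\fP)\otimes I$ and $I\otimes(I-\fQ)$ annihilate $F$ from both sides, then combine. The paper differs only tactically: it applies the argument directly to the full projection $(I-\fP)\otimes I$ (the choice $A=I-\fP$ is already admissible in \eqref{defsetofcoup}, so your rank-one-plus-strong-convergence detour is unnecessary), and it kills the off-diagonal block $(\fP\otimes I)F((I-\fP)\otimes I)$ via the Cauchy--Schwarz inequality for the positive sesquilinear form $\la\cdot|F|\cdot\ra$ rather than via your square-root factorization $F=F^{1/2}F^{1/2}$; both devices are equivalent here.
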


Taking this lemma for granted, we conclude the proof of \eqref{RelaxIneq}. First
$$
\ba
\Tr_{\fH\otimes\fH}(F((\fp_m\bar v\fp_m)\otimes\fQ))=&\Tr_{\fH\otimes\fH}(F(\fP\otimes\fQ)((\fp_m\bar v\fp_m)\otimes I)(\fP\otimes\fQ))
\\
=&\Tr_{\fH\otimes\fH}((\fP\otimes\fQ)F(\fP\otimes\fQ)((\fp_m\bar v\fp_m)\otimes I))
\\
=&\Tr_{\fH\otimes\fH}(F((\fp_m\bar v\fp_m)\otimes I))
\\
=&\Tr_\fH(R(\fp_m\bar v\fp_m))=\Tr_\fH(R^{1/2}(\fp_m\bar v\fp_m)R^{1/2})
\\
=&\Tr_\fH(\fp_mR^{1/2}\bar vR^{1/2}\fp_m)\le\Tr_\fH(R^{1/2}\bar vR^{1/2})
\ea
$$
where the first equality comes from the fact that $\fP\fp_m=\fp_m=\fp_m\fP$, the second and the fifth equality follow by cyclicity of the trace, the third equality from the lemma above, the fourth equality from the fact that $F\in\cC(R,S)$,
and the sixth equality from the fact that $\fp_m$ is a spectral projection of $R$ and therefore commutes with $R$. The last inequality is obtained by computing the trace of $R^{1/2}\bar vR^{1/2}\in\cL^1(\fH)$ on a complete orthonormal
system in $\fH$ whose $m$ first vectors span $\Ran(\fp_m)$. By the same token,
$$
\Tr_{\fH\otimes\fH}(F(\fP\otimes(\fq_n\bar w\fq_n)))\le\Tr_\fH(S^{1/2}\bar wS^{1/2})\,.
$$

On the other hand
$$
\ba
\Tr_{\fH\otimes\fH}(F(\fp_m\otimes\fq_n)\Si_N(\fp_m\otimes\fq_n))\to\Tr_{\fH\otimes\fH}(F(\fP\otimes\fQ)\Si_N(\fP\otimes\fQ))
\\
=\Tr_{\fH\otimes\fH}(F\Si_N)
\ea
$$
passing to the limit in $m,n$ for each $N\ge 1$. Indeed
$$
(\fp_m\otimes\fq_n)\Si_N(\fp_m\otimes\fq_n)\to(\fP\otimes\fQ)\Si_N(\fP\otimes\fQ)\quad\text{ strongly in }\cL(\fH\otimes\fH)
$$
for each $N\ge 1$, since for each $\Psi\in\fH\otimes\fH$
$$
\ba
\|(\fP\otimes\fQ)\Si_N(\fP\otimes\fQ)\Psi-(\fp_m\otimes\fq_n)\Si_N(\fp_m\otimes\fq_n)\Psi\|
\\
\le\|(\fP\otimes\fQ-\fp_m\otimes\fq_n)\Si_N(\fP\otimes\fQ)\Psi\|
\\
+\|(\fp_m\otimes\fq_n)\Si_N(\fP\otimes\fQ-\fp_m\otimes\fq_n)\Psi\|
\\
\le\|(\fP\otimes\fQ-\fp_m\otimes\fq_n)\Si_N(\fP\otimes\fQ)\Psi\|
\\
+\|\Si_N\|\|(\fP\otimes\fQ-\fp_m\otimes\fq_n)\Psi\|\to 0
\ea
$$
in $m,n$ for each $N\ge 1$. Then, one concludes as in Example 3 of chapter 2 in \cite{Simon}.

Thus, we have proved that
$$
\Tr_{\fH\otimes\fH}(F\Si_N)\le\Tr_\fH(R^{1/2}\bar vR^{1/2}+S^{1/2}\bar wS^{1/2})\,,\quad\text{ for each }N\ge 1\,.
$$
By Corollary \ref{C-Energ}, $\Tr_{\fH\otimes\fH}(F\Si_N)\to\Tr_{\fH\otimes\fH}(F^{1/2}\Si F^{1/2})$ as $N\to\infty$, so that
$$
\Tr_{\fH\otimes\fH}(F^{1/2}\Si F^{1/2})\le\Tr_\fH(R^{1/2}\bar vR^{1/2}+S^{1/2}\bar wS^{1/2})\,,
$$
which is equivalent to the sought inequality \eqref{RelaxIneq}.

\subsection{Step 4: the squeezing argument}

Pick an optimal coupling $F_{opt}\in\cC(R,S)$. (We recall that the existence of such a coupling is one of the conclusions of Theorem \ref{T-QDual}, and follows from the Fenchel-Rockafellar duality theorem.) One has the following chain of 
inequalities:
\be\lb{SqueezIneqChain}
\ba
\sup_{(A,B)\in\fK}\Tr_\fH(RA+SB)\le\sup_{(\bar a,\bar b)\in\tilde\fK(R,S)}\Tr_\fH(R^{1/2}\bar aR^{1/2}+S^{1/2}\bar bS^{1/2})
\\
\le\Tr_{\fH^{\otimes 2}}(F_{opt}^{1/2}CF_{opt}^{1/2})&\,.
\ea
\ee
The second inequality has been proved in Step 3.

As for the first inequality, observe first that
\be\lb{SqueezIneqChain1}
\sup_{(A,B)\in\fK}\Tr_\fH(RA+SB)=\sup_{(A,B)\in\hat\fK}\Tr_\fH(RA+SB)\,,
\ee
with the notation
$$
\hat\fK:=\{(A,B)\in\fK\text{ s.t. }A\le 2H-d\hbar I\text{ and }B\le 2H-d\hbar I\}\,.
$$
This is proved by the normalization argument in Step 1: pick
$$
\rho:=\sup\{\a\in\bR\text{ s.t. }2H-A\ge\a I\}\,.
$$
Then $\rho\in[-\|A\|,2\la e_0|H|e_0\ra+\|A\|]$, where $e_0$ is a normalized eigenvector of $R$ such that $Re_0\not=0$, and one has
$$
A+(\rho-d\hb)I\le 2H-d\hb I\quad\text{ and }\quad B-(\rho-d\hb)I\le 2H-d\hb I
$$
by the same argument as in Step 1. (Indeed, by definition of $\rho$, there exists a sequence $\phi_n\in\Dom(H)$ such that $\|\phi_n\|_\fH=1$ and $\la\phi_n|2H-A|\phi_n\ra\to\rho$ as $n\to\infty$. With the inequality $A\otimes I+I\otimes B\le C$,
this implies that, for each $\psi\in\Dom(H)$, one has
$$
\rho\|\psi\|^2_\fH+\la\psi|2H-B|\psi\ra\ge\la\phi_n\otimes\psi|2(H\otimes I+I\otimes H)-C|\phi_n\otimes\psi\ra\ge 2d\hb\|\psi\|^2_\fH\,,
$$
since $2(H\otimes I+I\otimes H)-C\ge 2d\hb I\otimes I$.) Observing that
$$
(A,B)\in\fK\implies(A+(\rho-d\hb)I,B-(\rho-d\hb)I)\in\hat\fK\,,
$$
and that
$$
\Tr_\fH(RA+SB)=\Tr_\fH(R(A+(\rho-d\hb)I)+S(B-(\rho-d\hb)I))
$$
leads to \eqref{SqueezIneqChain1}.

Let $\fP$ and $\fQ$ be the $\fH$-orthogonal projections on $\Ker(R)^\perp$ and $\Ker(S)^\perp$ respectively, as in the previous section. We claim that
$$
(A,B)\in\hat\fK\implies(\fP A\fP,\fQ B\fQ)\in\tilde\fK(R,S)\,.
$$
Indeed 
$$
\ba
\fP A\fP=(\fP A\fP)^*\in\cL(\Ker(T)^\perp)\subset\cL(\cJ[R],\cJ[R]')
\\
\fQ B\fQ=(\fQ B\fQ)^*\in\cL(\Ker(S)^\perp)\subset\cL(\cJ[S],\cJ[S]')&\,.
\ea
$$
because of the double continuous embedding \eqref{JKJ'}. Then
$$
\ba
2H\ge A\implies 2R^{1/2}HR^{1/2}\ge R^{1/2}AR^{1/2}=R^{1/2}\fP A\fP R^{1/2}
\\
2H\ge B\implies 2S^{1/2}HS^{1/2}\ge S^{1/2}BS^{1/2}=S^{1/2}\fQ B\fQ S^{1/2}
\ea
$$
since $\fP R^{1/2}=R^{1/2}\fP=R^{1/2}$ and $\fQ S^{1/2}=S^{1/2}\fQ=S^{1/2}$, and
$$
\ba
\Tr_\fH(2R^{1/2}HR^{1/2}-R^{1/2}\fP A\fP R^{1/2})&
\\
\le\Tr_\fH(2R^{1/2}HR^{1/2})+\|A\|&<+\infty\,,
\\
\Tr_\fH(2S^{1/2}HS^{1/2}-S^{1/2}\fQ B\fQ S^{1/2})&
\\
\le\Tr_\fH(2S^{1/2}HS^{1/2})+\|B\|&<+\infty\,,
\ea
$$
so that
$$
\ba
2R^{1/2}HR^{1/2}-R^{1/2} \fP A\fP R^{1/2}\in\cL^1(\Ker(R)^\perp)\,,
\\
2S^{1/2}HS^{1/2}-S^{1/2}\fQ B\fQ S^{1/2}\in\cL^1(\Ker(S)^\perp)\,.
\ea
$$
Finally, the inequality 
$$
\la\Phi|A\otimes I+I\otimes B|\Phi\ra\le\la\Phi|C|\Phi\ra
$$
holds for all $\Phi\in\cJ_0[R]\otimes\cJ_0[S]\subset\fDom(C)$. Observe that
$$
\Phi\in\cJ_0[R]\otimes\cJ_0[S]\implies((I-\fP)\otimes I)\Phi=(I\otimes(I-\fQ))\Phi=0\,,
$$
and therefore
$$
\la\Phi|(\fP A\fP)\otimes I+I\otimes(\fQ B\fQ)|\Phi\ra\le\la\Phi|C|\Phi\ra
$$
for all $\Phi\in\cJ_0[R]\otimes\cJ_0[S]$, so that $(\fP A\fP,\fQ B\fQ)\in\wtilde\fK(R,S)$. 

Since $\fP R^{1/2}=R^{1/2}\fP=R^{1/2}$ and $\fQ S^{1/2}=S^{1/2}\fQ=S^{1/2}$, one has
$$
\Tr_\fH(RA+SB)=\Tr_\fH(R^{1/2}\fP A\fP R^{1/2}+S^{1/2}\fQ B\fQ S^{1/2})\,,
$$
we conclude that
\be\lb{SqueezIneqChain2}
\sup_{(A,B)\in\hat\fK}\Tr(RA+SB)\le\sup_{(\bar a,\bar b)\in\hat\fK(R,S)}\Tr(R^{1/2}\bar aR^{1/2}+S^{1/2}BS^{1/2})\,.
\ee

Then \eqref{SqueezIneqChain1} and \eqref{SqueezIneqChain2} imply the chain of inequalities \eqref{SqueezIneqChain}. By the quantum duality theorem (Theorem \ref{T-QDual}), all the inequalities in \eqref{SqueezIneqChain}
are equalities:
\be\lb{SqueezEqChain}
\ba
\sup_{(A,B)\in\fK}\Tr_\fH(RA+SB)=&\sup_{(\bar a,\bar b)\in\tilde\fK(R,S)}\Tr_\fH(R^{1/2}\bar a+S^{1/2}\bar bS^{1/2})
\\
=&\Tr_{\fH\otimes\fH}(F_{opt}^{1/2}CF_{opt}^{1/2})\,.
\ea
\ee

\subsection{Step 5: the pair $(\fa,\fb)\in\wtilde\fK(R,S)$ is optimal.}

For each finite rank orthogonal projection $P=P^*=P^2\in\cL(\fH)$, one has
$$
\ba
\Tr_\fH(PR^{1/2}\fv R^{1/2}P)=\Tr_\fH(PR^{1/2}\fv R^{1/2})
\\
=\lim_{k\to\infty}\Tr_\fH(PR^{1/2}\hat a_kR^{1/2})=\lim_{k\to\infty}\Tr_\fH(PR^{1/2}\hat a_kR^{1/2}P)&\,,
\\
\Tr_\fH(PS^{1/2}\fw S^{1/2}P)=\Tr_\fH(PS^{1/2}\fw S^{1/2})
\\
=\lim_{k\to\infty}\Tr_\fH(PR^{1/2}\hat a_kR^{1/2})=\lim_{k\to\infty}\Tr_\fH(PR^{1/2}\hat a_kR^{1/2}P)&\,,
\ea
$$
since
$$
R^{1/2}\hat a_kR^{1/2}\to V=R^{1/2}\fv R^{1/2}\text{ and }S^{1/2}\hat b_kS^{1/2}\to W=S^{1/2}\fw S^{1/2}
$$
in $\cL^1(\fH)$ weak$-*$ by construction. 

Since $\hat a_k=\hat a_k^*\ge 0$ and $\hat b_k=\hat b_k^*\ge 0$ for each $k\ge 0$ (by construction), one has
$$
\ba
\Tr_\fH(PR^{1/2}\hat a_kR^{1/2}P)=\|PR^{1/2}\hat a_kR^{1/2}P\|_1
\\
\le\|R^{1/2}\hat a_kR^{1/2}\|_1=\Tr_\fH(R^{1/2}\hat a_kR^{1/2})&\,,
\\
\Tr_\fH(PS^{1/2}\hat b_kS^{1/2}P)=\|PS^{1/2}\hat b_kS^{1/2}P\|_1
\\
\le\|S^{1/2}\hat b_kS^{1/2}\|_1=\Tr_\fH(S^{1/2}\hat b_kS^{1/2})&\,.
\ea
$$
Thus, for each finite rank $P=P^*=P^2\in\cL(\fH)$, one has
$$
\ba
\Tr_\fH(P(R^{1/2}\fv R^{1/2}+S^{1/2}\fw S^{1/2})P)\le\lim_{k\to\infty}\Tr_\fH(R^{1/2}\hat a_kR^{1/2}+S^{1/2}\hat b_kS^{1/2})
\\
=\Tr_\fH(2R^{1/2}HR^{1/2}+2S^{1/2}HS^{1/2})-\tau&\,.
\ea
$$
Indeed
$$
\hat a_k=2H-A_k-\a_kI+d\hb I\text{ and }\hat b_k=2H-B_k+\a_kI-d\hb I\,,
$$
so that
$$
\ba
\Tr_\fH(R^{1/2}\hat a_kR^{1/2}+S^{1/2}\hat b_kS^{1/2})
\\
=2\Tr_\fH(R^{1/2}HR^{1/2}+S^{1/2}HS^{1/2})-\Tr_\fH(R^{1/2}A_kR^{1/2}+S^{1/2}B_kS^{1/2})
\\
\to 2\Tr_\fH(R^{1/2}HR^{1/2}+S^{1/2}HS^{1/2})-\tau
\ea
$$
by definition of the sequence $(A_k,B_k)$ (which is a maximizing sequence for the right hand side of \eqref{QDual}).

Since $R^{1/2}\fv R^{1/2}\in\cL^1(\fH)$ and $S^{1/2}\fw S^{1/2}\in\cL^1(\fH)$, one has
$$
\ba
\Tr_\fH(R^{1/2}\fv R^{1/2}+S^{1/2}\fw S^{1/2})=\sup_{\genfrac{}{}{0pt}{2}{P^2=P=P^*}{\Rank(P)<\infty}}\Tr_\fH(P(R^{1/2}\fv R^{1/2}+S^{1/2}\fw S^{1/2})P)
\\
\le 2\Tr_\fH(R^{1/2}HR^{1/2}+S^{1/2}HS^{1/2})-\tau&\,.
\ea
$$
Equivalently, in terms of $\fa$ and $\fb$, one has
$$
\Tr_\fH(R^{1/2}\fa R^{1/2}+S^{1/2}\fb S^{1/2})\ge\tau
$$
and we deduce from the first equality in \eqref{SqueezEqChain} that
$$
\Tr_\fH(R^{1/2}\fa R^{1/2}+S^{1/2}\fb S^{1/2})\ge\sup_{(\bar a,\bar b)\in\tilde\fK(R,S)}\Tr_\fH(R^{1/2}\bar aR^{1/2}+S^{1/2}\bar bS^{1/2}).
$$
Since $(\fa,\fb)\in\tilde\fK(R,S)$ as proved at the end of Step 2, the inequality above is an equality and the pair $(\fa,\fb)$ is optimal. 

\smallskip
Finally, if $R$ and $S$ have finite ranks $\cJ_0[R]=\cJ[R]=\Ker(R)^\perp$ and $\cJ_0[S]=\cJ[S]=\Ker(S)^\perp$. Since these spaces are finite-dimensional, their dual spaces are finite dimensional with the same dimension. Thus the inclusions 
$\Ker(R)^\perp\subset\cJ[R]'$ and $\Ker(S)^\perp\subset\cJ[S]'$ in \eqref{Gelfand} are equalities. Any optimal pair $(\fa_0,\fb_0)\in\tilde\fK(R,S)$ such that
$$
\Tr_\fH(R^{1/2}\fa_0R^{1/2}+S^{1/2}\fb_0S^{1/2})=MK_\hb(R,S)^2
$$
satisfies $\fa_0\in\cL(\Ker(R)^\perp)$ and $\fb_0\in\cL(\Ker(S)^\perp)$. 

This concludes the proof of Theorem \ref{T-ExOptiAB}.

\bigskip
It remains to prove Lemma \ref{L-SuppCoupling}

\begin{proof}[Proof of Lemma \ref{L-SuppCoupling}]
One has
$$
\ba
\Tr_{\fH\otimes\fH}(((I-\fP)\otimes I)F((I-\fP)\otimes I))=&\Tr_{\fH\otimes\fH}(((I-\fP)\otimes I)F)
\\
=&\Tr_{\fH}((I-\fP)R)=0
\ea
$$
since $I-\fP$ is the orthogonal projection on $\Ker(R)$, so that 
$$
((I-\fP)\otimes I)F((I-\fP)\otimes I)=0
$$
since $((I-\fP)\otimes I)F((I-\fP)\otimes I)=(((I-\fP)\otimes I)F((I-\fP)\otimes I))^*\ge 0$. Next observe that
$$
\ba
|\la\phi\otimes\psi|(\fP\otimes I)F((I-\fP)\otimes I)|\phi'\otimes\psi'\ra|^2
\\
\le\la\phi\otimes\psi|(\fP\otimes I)F(\fP\otimes I)|\phi\otimes\psi\ra\la\phi'\otimes\psi'|((I-\fP)\otimes I)F((I-\fP)\otimes I)|\phi'\otimes\psi'\ra
\ea
$$
for each $\phi,\phi',\psi,\psi'\in\fH$ by the Cauchy-Schwarz inequality since $F=F^*\ge 0$, so that
$$
\la\phi\otimes\psi|(\fP\otimes I)F((I-\fP)\otimes I)|\phi'\otimes\psi'\ra=0\,.
$$
Hence
$$
(\fP\otimes I)F((I-\fP)\otimes I)=((\fP\otimes I)F((I-\fP)\otimes I))^*=((I-\fP)\otimes I)F(\fP\otimes I)=0\,,
$$
so that
$$
F=(\fP\otimes I)F(\fP\otimes I)\,.
$$
The same argument shows that
$$
F=(I\otimes\fQ)F(I\otimes\fQ)\,,
$$
so that
$$
F=(I\otimes\fQ)F(I\otimes\fQ)=(I\otimes\fQ)(\fP\otimes I)F(\fP\otimes I)(I\otimes\fQ)=(\fP\otimes\fQ)F(\fP\otimes\fQ)\,,
$$
which is precisely the desired equality.
\end{proof}


\section{Proof of Theorems \ref{T-CSOpt}} \label{S-proof3}



\begin{proof}[Proof of Theorem \ref{T-CSOpt}]
Since $\Phi_j\in\Ker(C-A\otimes I-I\otimes B)$, one has in particular $\Phi_j\in\Dom(C)$ with
$$
\|C\Phi_j\|\le(\|A\|+\|B\|)\|\Phi_j\|=\|A\|+\|B\|\quad\text{ for all }j\,.
$$
Therefore 
$$
\sum_m\ll_m\la\Phi_m|C|\Phi_m\ra\le(\|A\|+\|B\|)\sum_m\ll_m=\|A\|+\|B\|\,,
$$
so that
$$
F^{1/2}CF^{1/2}:=\sum_{j,k}\sqrt{\ll_j\ll_k}\la\Phi_j|C|\Phi_k\ra|\Phi_j\ra\la\Phi_k|\in\cL^1(\fH\otimes\fH)
$$
by Lemma \ref{L-Energ}. Since $\Phi_k\in\Ker(C-A\otimes I-I\otimes B)$ for all $k$, one has
$$
\ba
F^{1/2}CF^{1/2}:=&\sum_{j,k}\sqrt{\ll_j\ll_k}\la\Phi_j|A\otimes I+I\otimes B|\Phi_k\ra|\Phi_j\ra\la\Phi_k|
\\
=&F^{1/2}(A\otimes I+I\otimes B)F^{1/2}\,,
\ea
$$
and thus
$$
\ba
\Tr_{\fH\otimes\fH}(F^{1/2}CF^{1/2})=&\Tr_{\fH\otimes\fH}(F^{1/2}(A\otimes I+I\otimes B)F^{1/2})
\\
=&\Tr_{\fH\otimes\fH}(F(A\otimes I+I\otimes B))=\Tr_\fH(F_1A+F_2B)\,,
\ea
$$
where the second equality follows from cyclicity of the trace, while the third equality comes from the definition of $F_1$ and $F_2$ as the partial traces of $F$.
Therefore
\be\lb{ReverseIneq}
\ba
\inf_{G\in\cC(F_1,F_2)}\Tr_{\fH\otimes\fH}(G^{1/2}CG^{1/2})\le&\Tr_{\fH\otimes\fH}(F^{1/2}CF^{1/2})
\\
=&\Tr_\fH(F_1A+F_2B)\le\sup_{(a,b)\in\fK}\Tr_\fH(F_1a+F_2b)\,.
\ea
\ee
For each $G\in\cC(R,S)$, let $(\Psi_k)_{k\ge 1}$ be a complete orthonormal system of eigenvectors of $G$, and let $(\g_k)_{k\ge 1}$ be the sequence of eigenvalues of $G$, so that $G\Psi_k=\g_k\Psi_k$ for each $k\ge 1$.
Then
$$
\Tr_{\fH\otimes\fH}(G^{1/2}CG^{1/2})<\infty\iff\sum_{k\ge 1}\g_k\la\Psi_k|C|\Psi_k\ra<\infty\,.
$$
Thus, if $\Tr_{\fH\otimes\fH}(G^{1/2}CG^{1/2})<\infty$, one has, as explained in Lemma \ref{L-QfDom}
$$
\Psi_k\in\fDom(C)\text{ for each }k\ge 0\text{ s.t. }\g_k>0\,.
$$
For all $(a,b)\in\fK$, one has therefore
$$
\g_k>0\implies\la\Psi_k|C-a\otimes I-I\otimes b|\Psi_k\ra\ge 0\,,
$$
so that
$$
\ba
0\le&\sum_{k\ge 0}\g_k\la\Psi_k|C-a\otimes I-I\otimes b|\Psi_k\ra
\\
=&\Tr_{\fH\otimes\fH}(G^{1/2}CG^{1/2})-\Tr_{\fH\otimes\fH}(G^{1/2}(a\otimes I+I\otimes b)G^{1/2})
\\
=&\Tr_{\fH\otimes\fH}(G^{1/2}CG^{1/2})-\Tr_{\fH\otimes\fH}(G(a\otimes I+I\otimes b))
\\
=&\Tr_{\fH\otimes\fH}(G^{1/2}CG^{1/2})-\Tr_\fH(F_1a+F_2b)
\ea
$$
Therefore
\be\lb{DirectIneq}
\sup_{(a,b)\in\fK}\Tr_\fH(F_1a+F_2b)\le\inf_{G\in\cC(R,S)}\Tr_{\fH\otimes\fH}(G^{1/2}CG^{1/2})\,.
\ee
Putting together \eqref{ReverseIneq} and \eqref{DirectIneq} leads to the announced result.

Conversely, if $R,S\in\cD_2(\fH)$ satisfies \eqref{MK=Dual}, let $F$ be any optimal coupling of $R$ and $S$. Then
$$
MK_\hb(R,S)^2=\Tr_{\fH\otimes\fH}(F^{1/2}CF^{1/2})=\Tr_\fH(RA+SB)\,,
$$
Since $R,S\in\cD_2(\fH)$, the quantity $MK_\hb(R,S)^2=\Tr_{\fH\otimes\fH}(F^{1/2}CF^{1/2})$ is finite, so that all the eigenvectors of $F$ corresponding to positive eigenvalues belong to $\fDom(C)$.
The second equality above can be equivalently recast as
$$
\Tr_{\fH\otimes\fH}(F^{1/2}(C-A\otimes I-I\otimes B)F^{1/2})=0\,.
$$
Since $\la\Phi|C-A\otimes I-I\otimes B|\Phi\ra\ge 0$ for all $\Phi\in\fDom(C)$, this implies that all the eigenvectors of $F$ corresponding to positive eigenvalues belong to $\Ker(C-A\otimes I-I\otimes B)$. In particular, this nullspace is not equal to $\{0\}$
and $F$ is of the form \eqref{FormOptiCoupl}.
\end{proof}
\section{Proof of Theorem \ref{T-QTransp}}\label{S-proof4}
\begin{proof}[{\tcr Proof of $(1)$ when $A,B\in\fK$\tce}]\ 

\noindent Since $F\in\cC(R,S)$ with $R,S\in\cD_2(\fH)$, any eigenvector $\Phi$ of $F$ such that $F\Phi\not=0$ satisfies 
$$
\Phi\in\fDom(H\otimes I+I\otimes H)\subset\fDom(C)
$$
by Lemma \ref{L-QfDom}. By cyclicity of the trace
$$
\ba
\Tr_{\fH\otimes\fH}(F^{1/2}(A\otimes I+I\otimes B)F^{1/2})=&\Tr_{\fH\otimes\fH}(F(A\otimes I+I\otimes B))
\\
=&\Tr_\fH(RA+SB)\,.
\ea
$$
Since $F^{1/2}(H\otimes I+I\otimes H)F^{1/2}\in\cL^1(\fH\otimes\fH)$ and since $C\le 2(H\otimes I+I\otimes H)$ on $\fDom(H\otimes I+I\otimes H)$, one has
$$
\Tr_{\fH\otimes\fH}(F^{1/2}(C-A\otimes I-I\otimes B)F^{1/2})=0\,.
$$
Let $(\Phi_j)_{j\ge 1}$ be a complete orthonormal sequence of eigenvectors of $F$, and define $\ll_j\ge 0$ by the condition $F\Phi_j=\ll_j\Phi_j$, for each $j\ge 1$. Then
$$
\ba
0=&\Tr_{\fH\otimes\fH}(F^{1/2}(C-A\otimes I-I\otimes B)F^{1/2})
\\
=&\sum_{j\ge 1}\la\Phi_j|F^{1/2}(C-A\otimes I-I\otimes B)F^{1/2}|\Phi_j\ra
\\
=&\sum_{j\ge 1}\ll_j\la\Phi_j|C-A\otimes I-I\otimes B|\Phi_j\ra\,,
\ea
$$
so that
$$
\ll_j>0\implies\la\Phi_j|C-A\otimes I-I\otimes B|\Phi_j\ra=0\,,\quad\text{ for all }j\ge 1\,.
$$
Indeed, since $\Phi_j\in\fDom(C)$ and $(A,B)\in\fK$, one has 
$$
\la\Phi_j|C-A\otimes I-I\otimes B|\Phi_j\ra\ge 0\,,\qquad\text{ for all }j\ge 1\,.
$$
Since $\la\Phi|C-A\otimes I-I\otimes B|\Phi\ra\ge 0$ for each $\Phi\in\fDom(C)$, we conclude from the Cauchy-Schwarz inequality that 
$$
\ll_j>0\implies\la\Phi|C-A\otimes I+I\otimes B|\Phi_j\ra=0\,,\quad\text{ for all }j\ge 1\text{ and }\Phi\in\fDom(C)\,.
$$
In particular, choosing $\Phi=\Psi_{m_1,\ldots,m_d,n_1,\ldots,n_d}$ (in the notation of section \ref{S-QTCost}) shows that
$$
2\hb(2(n_1,\ldots,n_d)+d)\la\Psi_{m_1,\ldots,m_d,n_1,\ldots,n_d}|\Phi_j\ra=\la\Psi_{m_1,\ldots,m_d,n_1,\ldots,n_d}|A\otimes I+I\otimes B|\Phi_j\ra\,,
$$
so that
$$
4\hb^2\sum_{\genfrac{}{}{0pt}{2}{m_1,\ldots,m_d\ge 0}{n_1,\ldots,n_d\ge 0}}(2(n_1,\ldots,n_d)+d)^2|\la\Psi_{m_1,\ldots,m_d,n_1,\ldots,n_d}|\Phi_j\ra|^2\le(\|A\|+\|B\|)^2\,.
$$
This implies that $\Phi_j\in\Dom(C)$ with $\|C\Phi_j\|\le\|A\|+\|B\|$ for each $j\ge 1$. Hence
$$
\ba
\ll_j>0\implies&(C-A\otimes I-I\otimes B)\Phi_j\in\fH\times\fH
\\
\text{ and }&(C-A\otimes I-I\otimes B)\Phi_j\perp\fDom(C)\,.
\ea
$$
Since $\fDom(C)$ is dense in $\fH\otimes\fH$, we conclude that $(C-A\otimes I-I\otimes B)\Phi_j=0$ for all $j\ge 1$ such that $\ll_j>0$. This proves (a).

\smallskip
For each $j=1,\ldots,d$, one has
$$
\ba
(\scrD_{q_j}\otimes I)(C-A\otimes I-I\otimes B)=\scrD_{q_j}(H-A)\otimes I-2I\otimes q_j
=
2(\scrD_{q_j}\vilA\otimes I-I\otimes q_j)
\\
\in\cL(\fDom(H\otimes I+I\otimes H),\fDom(H\otimes I+I\otimes H)')&\,,
\\ 
(\scrD_{p_j}\otimes I)(C-A\otimes I-I\otimes B)=\scrD_{p_j}(H-A)\otimes I-2I\otimes p_j
=
2(\scrD_{p_j}\vilA\otimes I-I\otimes p_j)
\\
\in\cL(\fDom(H\otimes I+I\otimes H),\fDom(H\otimes I+I\otimes H)')&\,.
\ea
$$
Applying \eqref{ExtT} with $T=q_j\otimes I,\,p_j\otimes I,\,I\otimes q_j,\,I\otimes p_j$ shows that all these operators, which are bounded from $\fDom(H\otimes I+I\otimes H)$ into $\fH\otimes\fH$, can be extended as bounded operators 
from $\fH\otimes\fH$ to $\fDom(H\otimes I+I\otimes H)'$. Since $A\in\cL(\fH)$, this shows that the right hand sides of these identities belong to 
$$
\cL(\fDom(H\otimes I+I\otimes H),\fDom(H\otimes I+I\otimes H)'\,.
$$
Next, \eqref{[TC]} with $T=q_j\otimes I$ or $T=p_j\otimes I$ show that these identities hold in the space
$$
\cL(\fDom(H\otimes I+I\otimes H)\cap\Dom(C),\fDom(H\otimes I+I\otimes H)'+\Dom(C)')\,.
$$
Likewise, for $j=1,\ldots,d$, one has
$$
\ba
(I\otimes\scrD_{q_j})(C-A\otimes I-I\otimes B)=2(I\otimes\scrD_{q_j}\vilB-q_j\otimes I)
\\
\in\cL(\fDom(H\otimes I+I\otimes H),\fDom(H\otimes I+I\otimes H)')&\,,
\\ 
(I\otimes\scrD_{p_j})(C-A\otimes I-I\otimes B)=2(I\otimes\scrD_{p_j}\vilB-p_j\otimes I)
\\
\in\cL(\fDom(H\otimes I+I\otimes H),\fDom(H\otimes I+I\otimes H)')&\,.
\ea
$$

Let $\Phi,\Psi\in\Ker(F)^\perp$ be eigenvectors of $F$. According to (a), one has 
$$
\Phi,\Psi\in\Dom(C)\cap\fDom(H\otimes I+I\otimes H)\,,
$$
and
\be\lb{PhiDqPsi}
\ba
2\la(\scrD_{q_j}\vilA\otimes I-I\otimes q_j)\Phi,\Psi\ra_{\fDom(H\otimes I+I\otimes H)',\fDom(H\otimes I+I\otimes H)}
\\
=\tfrac{i}\hb((p_j\otimes I)\Phi|(C-A\otimes I-I\otimes B)\Psi)_{\fH\otimes\fH}
\\
-\tfrac{i}\hb((C-A\otimes I-I\otimes B)\Phi|(p_j\otimes I)\Psi)_{\fH\otimes\fH}&=0
\ea
\ee
for all $j=1,\ldots,d$, since $\Phi,\Psi\in\Ker(C-A\otimes I-I\otimes B)$ by (a). Likewise
\be\lb{PhiDpPsi}
\ba
2\la(\scrD_{p_j}\vilA\otimes I-I\otimes p_j)\Phi,\Psi\ra_{\fDom(H\otimes I+I\otimes H)',\fDom(H\otimes I+I\otimes H)}
\\ 
=-\tfrac{i}\hb((p_j\otimes I)\Phi|(C-A\otimes I-I\otimes B)\Psi)_{\fH\otimes\fH}
\\
+\tfrac{i}\hb((C-A\otimes I-I\otimes B)\Phi|(p_j\otimes I)\Psi)_{\fH\otimes\fH}&=0
\ea
\ee
for all $j=1,\ldots,d$. 

Let $(\Phi_k)_{k\ge 1}$ be a complete orthonormal system of eigenvectors of $F$ in $\fH\otimes\fH$, and let $\ll_k\ge 0$ be defined by $F\Phi_k=\ll_k\Phi_k$. Then
$$
F^{1/2}=\sum_{k\ge 1}\ll_k|\Phi_k\ra\la\Phi_k|\,,
$$
and, for each $T\in\cL(\fDom(H\otimes I+I\otimes H),\fDom(H\otimes I+I\otimes H)')$ and each $\phi,\psi\in\fH\otimes\fH$, one has
$$
\ba
\sum_{\genfrac{}{}{0pt}{2}{k,l\ge 1}{\ll_k,\ll_l>0}}\!\!\!\!\!\sqrt{\ll_k\ll_l}\,(\Phi_l|\psi)_{\fH^{\otimes 2}}\overline{(\Phi_k|\phi)}_{\fH^{\otimes 2}}\la T\Phi_k,\!\Phi_l\ra_{\fDom(H\otimes I\!+\!I\otimes H)',\fDom(H\otimes I\!+\!I\otimes H)}
\\
=\la\phi|F^{1/2}TF^{1/2}|\psi\ra&.
\ea
$$
Observe that this last series is absolutely convergent since
$$
\ba
\sum_{\genfrac{}{}{0pt}{2}{k,l\ge 1}{\ll_k,\ll_l>0}}\ll_k^{1/2}\ll_l^{1/2}|\la T\Phi_k,\Phi_l\ra_{\fDom(H\otimes I+I\otimes H)',\fDom(H\otimes I+I\otimes H)}
\\
\times\overline{(\Phi_k|\phi)}_{\fH\otimes\fH}(\Phi_l|\psi)_{\fH\otimes\fH}|
\\
\le\|T\|_{\cL(\fDom(H\otimes I+I\otimes H),\fDom(H\otimes I+I\otimes H)')}
\\
\times\sum_{\genfrac{}{}{0pt}{2}{k\ge 1}{\ll_k>0}}\ll_k^{1/2}|(\Phi_k|\phi)_{\fH\otimes\fH}|\sum_{\genfrac{}{}{0pt}{2}{l\ge 1}{\ll_l>0}}\ll_l^{1/2}|(\Phi_l|\psi)_{\fH\otimes\fH}|
\\
\le\|T\|_{\cL(\fDom(H\otimes I+I\otimes H),\fDom(H\otimes I+I\otimes H)')}\sum_{n\ge 1}\ll_n
\\
\times\left(\sum_{k\ge 1}|(\Phi_k|\phi)_{\fH\otimes\fH}|^2\right)^{1/2}\left(\sum_{l\ge 1}|(\Phi_k|\psi)_{\fH\otimes\fH}|^2\right)^{1/2}
\\
\le\|T\|_{\cL(\fDom(H\otimes I+I\otimes H),\fDom(H\otimes I+I\otimes H)')}\|\phi\|_{\fH\otimes\fH}\|\psi\|_{\fH\otimes\fH}&\,.
\ea
$$
Hence
\be\lb{SeriesF12TF12}
\ba
\sum_{\genfrac{}{}{0pt}{2}{k,l\ge 1}{\ll_k,\ll_l>0}}\ll_k^{1/2}\ll_l^{1/2}\la T\Phi_k,\Phi_l\ra_{\fDom(H\otimes I+I\otimes H)',\fDom(H\otimes I+I\otimes H)}|\Phi_k\ra\la\Phi_l|
\\
=F^{1/2}TF^{1/2}\in\cL(\fH\otimes\fH)
\ea
\ee
by the Riesz representation theorem. Setting successively 
$$
\ba
T=&\scrD_{q_j}\vilA\otimes I-I\otimes q_j\quad\text{ and }\quad T=&\scrD_{p_j}\vilA\otimes I-I\otimes p_j\,,
\\
T=&I\otimes\scrD_{q_j}\vilB-q_j\otimes I\quad\text{ and }\quad T=&I\otimes\scrD_{p_j}\vilB-p_j\otimes I\,,
\ea
$$
for all $j=1,\ldots,d$ in \eqref{SeriesF12TF12} and using \eqref{PhiDqPsi} and \eqref{PhiDpPsi} implies statement (b).
\end{proof}
{\tcr 
%

\begin{proof}[Proof of $(2)$]\ 

The proof of the statement $(2)$ follows closely the line of the proof of the case $(1)$, simplified by the finite dimensionality.

The densities $R,S$ being of finite rank, $\cJ_0(R)=\cJ(R)=\Ker(R)^\perp$ and the same for $S$. Therefore, by Theorem \ref{T-ExOptiAB} and Definition \ref{deftildek}, we have that
$$
A\otimes I_{\ker(S)^\perp}+I_{\ker(R)^\perp}\otimes B\leq \fpfq C\fpfq:=C'.
$$ 
Moreover by the optimality condition and Lemma \ref{L-SuppCoupling},
\begin{eqnarray}
&&\Tr_{\Ker(R)^\perp\otimes \Ker(S)^\perp}{(F^{1/2}(C'-A\otimes I_{\ker(S)^\perp}+I_{\ker(R)^\perp}\otimes B)F^{1/2})}\nonumber\\
&=&\Tr_{\fH\otimes\fH}{(F^{1/2}(C-A\otimes I-I\otimes B)F^{1/2})}\nonumber\\
&=&0\nonumber
\end{eqnarray}
and, by the Cauchy-Schwarz inequality again, this time on ${\Ker(R)^\perp\otimes \Ker(S)^\perp}$, $(c)$ is proved.

Let us remark that, with the notation  defined right after \eqref{heur1},
\begin{eqnarray}
C'&=&\fP H\fP\otimes I+I\otimes\fQ H\fQ-2\fP Z\fP\otimes\fQ Z\fQ\nonumber\\
&=&
\fP H\fP\otimes I+I\otimes\fQ H\fQ-2\sum_{k=1}^d(Q_k^R\otimes Q_k^S+P_k^R\otimes P_k^S).\nonumber
\end{eqnarray}
Hence, for example, for any $j=1,\dots,d$,
$$
\scrD_{q_j}\otimes I (C'-A\otimes I-I\otimes B)
=
\sum_{k=1}^d(\scrD_{q_j}Q_k^R\otimes Q_k^R
+
\scrD_{q_j}P_k^R\otimes P_k^R
)-\scrD_{q_j}\vilA'\otimes I
$$
so that, by the same argument as before,
$$
F^{1/2}(
\sum_{k=1}^d(\scrD_{q_j}Q_k^R\otimes Q_k^R
+
\scrD_{q_j}P_k^R\otimes P_k^R
)-\scrD_{q_j}\vilA'\otimes I
)
F^{1/2}=0.
$$
By using the fact that $F^{1/2}$ commutes with $\fpfq$ thanks to Lemma \ref{L-SuppCoupling}  and doing the same argument for $\scrD_{p_j}$ instead of  $\scrD_{q_j}$ we get immediately $(d)$.

Note that $F^{1/2}(\scrD_{q_j}\vilA'\otimes I )
F^{1/2}
=
F^{1/2}(\tfrac i\hb[p_j^R,\vilA']\otimes I) 
F^{1/2}$  so that one can replace 
$\scrD_{q_j}\vilA' $ by $
\tfrac i\hb[p_j^R,\vilA']$ in  statement $(d)$.
\end{proof}\tce}


\section{Examples}\label{S-examples}


In this section, we shall study the optimal operators $\fa$ and $\fb$ from the Kantorovich duality theorem, together with the structure of optimal couplings, on a few elementary examples. {\tcr We will also give a necessary and sufficient condition for the optimal coupling of two quantum densities  of semiclassical (T\"oplitz) type to present the same feature.\tce}

\subsection{The case where $R$ is a rank-one projection}



Let $R=|\phi\ra\la\phi|$ with $\|\phi\|_\fH=1$ be a rank $1$ projection, and let $S$ be a finite-rank density operator on the Hilbert space $\fH$. By Theorem \ref{T-ExOptiAB} in the finite rank case, the optimal operators $\fa$ and $\fb$ should be 
sought in the form
\be\lb{FinRkAB}
\fa:=\a|\phi\ra\la\phi|\,,\qquad\fb:=\sum_{k=1}^n\b_k|e_k\ra\la e_k|\,,
\ee
where $(e_k)_{1\le k\le n}$ is an orthonormal basis of $\Ker(S)^\perp$, to be determined along with the real numbers $\a,\b_1,\ldots,\b_n$. We shall see that

\smallskip
\noindent
(a) the basis $(e_j)_{1\le j\le n}$ is orthonormal in $\Ker(S)^\perp$ and orthogonal for the Hermitian form $(\psi,\psi')\mapsto\la\phi\otimes\psi|C|\phi\otimes\psi'\ra$ on $\Ker(S)^\perp$ --- in other words, the lines $\bC e_j$ for $j=1,\ldots,n$ 
are mutually orthogonal principal axes of this Hermitian form in $\Ker(S)^\perp$, while

\noindent
(b) the real numbers $\a+\b_j$ for $j=1,\ldots,n$ are the eigenvalues of the Hermitian (diagonal) matrix with entries $\la\phi\otimes e_j|C|\phi\otimes e_k\ra$ for $j,k=1,\ldots,n$. 

\smallskip
These conditions do not completely determine the orthonormal basis $(e_j)_{1\le j\le n}$ and the real numbers $\a,\b_1,\ldots,\b_n$. For instance, if $(\fa,\fb)$ of the form \eqref{FinRkAB} is optimal, then $(\fa+t|\phi\ra\la\phi|,\fb-tI_{\Ker(S)^\perp})$ 
is also optimal --- this corresponds to changing $\a$ in $\a+t$ and $\b_j$ in $\b_j-t$ for $j=1,\ldots,n$. Likewise, if $\la\phi\otimes e_j|C|\phi\otimes e_j\ra=\la\phi\otimes e_k|C|\phi\otimes e_k\ra$ for some $j\not=k$, the frame $(e_j,e_k)$ can be 
replaced with its image under any rotation in the plane $\Span\{e_j,e_k\}$.

\smallskip
To prove (a)-(b), we begin with an important observation on the set of couplings of $R$ and $S$, which is a straightforward consequence of Lemma 4.1.

\begin{Lem}\lb{L-Coupl}
Assume that $R\in\cD(\fH)$ is a projection. Then $\Rank(R)=1$ and for each $S\in\cD(\fH)$, one has
$$
\cC(R,S)=\{R\otimes S\}\,.
$$
\end{Lem}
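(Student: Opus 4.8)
The plan is to reduce everything to Lemma \ref{L-SuppCoupling}, which already pins down the support of any coupling, and then read off the two marginals.

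First I would record that $\Rank(R)=1$. Since $R=R^*=R^2$ its spectrum is contained in $\{0,1\}$; being a density operator $R$ is trace-class, hence compact, so $1$ is an eigenvalue of finite multiplicity $m$, with $m\ge 1$ because $\Tr_\fH(R)=1\not=0$. Then $m=\Tr_\fH(R)=1$, so $R=|\phi\ra\la\phi|$ for some unit vector $\phi\in\fH$, and in particular $\Ker(R)^\perp=\bC\phi$, so the orthogonal projection $\fP$ onto $\Ker(R)^\perp$ equals $|\phi\ra\la\phi|=R$.

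Next, let $F\in\cC(R,S)$ be arbitrary and let $\fQ$ be the orthogonal projection onto $\Ker(S)^\perp$. By Lemma \ref{L-SuppCoupling}, $F=(\fP\otimes\fQ)F(\fP\otimes\fQ)$, and since $\fP=|\phi\ra\la\phi|$ this gives $F=(|\phi\ra\la\phi|\otimes I)F(|\phi\ra\la\phi|\otimes I)$. I would then argue that this forces $F=|\phi\ra\la\phi|\otimes G$, where $G\in\cL^1(\fH)$, $G=G^*\ge 0$, is defined by $\la\psi|G|\psi'\ra:=\la\phi\otimes\psi|F|\phi\otimes\psi'\ra$: expanding any $\Psi,\Psi'\in\fH\otimes\fH$ on a complete orthonormal system of $\fH$ whose first vector is $\phi$, one checks directly that $\la\Psi'|F|\Psi\ra=\la\Psi'|(|\phi\ra\la\phi|\otimes G)|\Psi\ra$, and positivity and trace-class of $G$ are inherited from $F$.

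Finally I would use the coupling constraints to identify $G$. Taking $A=0$ in the definition of $\cC(R,S)$ and using that the partial trace over the first factor of $|\phi\ra\la\phi|\otimes G$ is $G$, we get $\Tr_\fH(BG)=\Tr_{\fH\otimes\fH}((I\otimes B)F)=\Tr_\fH(SB)$ for all $B\in\cL(\fH)$, hence $G=S$; taking $B=0$ similarly yields $\Tr_\fH(G)=1$, which is then automatic. Therefore $F=|\phi\ra\la\phi|\otimes S=R\otimes S$. Conversely $R\otimes S\in\cC(R,S)$ is immediate, since $\Tr_{\fH\otimes\fH}((A\otimes I+I\otimes B)(R\otimes S))=\Tr_\fH(AR)\Tr_\fH(S)+\Tr_\fH(R)\Tr_\fH(BS)=\Tr_\fH(AR+SB)$. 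Hence $\cC(R,S)=\{R\otimes S\}$. There is no genuine obstacle beyond invoking Lemma \ref{L-SuppCoupling}; the only point requiring a little care is to justify that the support condition actually makes $F$ factor as the tensor product $|\phi\ra\la\phi|\otimes G$, after which the marginal identities finish the argument.
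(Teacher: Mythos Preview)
Your argument is correct and follows essentially the same route as the paper's proof: both invoke Lemma~\ref{L-SuppCoupling} to obtain $F=(|\phi\ra\la\phi|\otimes I)F(|\phi\ra\la\phi|\otimes I)$, define $G$ via $\la\psi|G|\psi'\ra=\la\phi\otimes\psi|F|\phi\otimes\psi'\ra$, check that $F=R\otimes G$, and then read off $G=S$ from the second marginal condition. The only cosmetic difference is that you pass through $\fP\otimes\fQ$ before dropping $\fQ$, whereas the paper uses directly that the proof of Lemma~\ref{L-SuppCoupling} already establishes $F=(\fP\otimes I)F(\fP\otimes I)$.
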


This is the quantum analogue of the case where one considers two Borel probability measures $\mu$ and $\nu$, one of which, say $\mu$, is a Dirac measures. In that case, it is obvious that $\Pi(\mu,\nu)=\{\mu\otimes\nu\}$ (all the mass from 
$\nu$ is transported to the support of the Dirac measure). Indeed, pure states, corresponding to density operators of the form $R=|\phi\ra\la\phi|$ where $\phi$ is a normalized element of $\fH$, are the quantum analogues of phase space points
in classical mechanics.

\smallskip
Taking this lemma for granted, $R\otimes S$ is the optimal coupling --- in fact the only coupling --- of $R$ and $S$. Therefore the optimal operators $\fa$ and $\fb$ satisfy
$$
\left\{
\ba
\Tr_{\fH\otimes\fH}((R^{\frac12}\otimes S^{\frac12})(C-\fa\otimes I-I\otimes\fb)(R^{\frac12}\otimes S^{\frac12}))=0\,,
\\
\la\Psi|C-\fa\otimes I-I\otimes\fb|\Psi\ra\ge 0\quad\text{ for }\Psi\!\in\!\Ker(R)^\perp\!\otimes\!\Ker(S)^\perp\,.
\ea
\right.
$$
Hence
$$
\la\phi\otimes\psi|C-\fa\otimes I-I\otimes\fb|\phi\otimes\psi'\ra=0\quad\text{ for all }\psi,\psi'\in\Ker(S)^\perp\,.
$$
This condition can be checked on any basis of $\Ker(S)^\perp$. For instance, using the orthonormal basis $(e_j)_{1\le j\le n}$ of eigenvectors of $\fb$ leads to the identity
$$
\la\phi\otimes e_j|C|\phi\otimes e_k\ra=(\a+\b_j)\de_{jk}\,,\quad\text{ for all }j,k=1,\ldots,n\,.
$$
This obviously implies the conclusions (a) and (b) on the real numbers $\a,\b_1,\ldots,\b_n$ and the orthonormal basis $(e_j)_{1\le j\le n}$ of $\Ker(S)^\perp$.

\smallskip
\begin{proof}[Proof of Lemma \ref{L-Coupl}]
We recall that, if $R$ is an orthogonal projection, one has $\Rank(R)=\Tr(R)$. On the other hand $\Tr(R)=1$ since $R\in\cD(\fH)$. Denoting by $\fQ$ the orthogonal projection on $\Ker(S)^\perp$, Lemma 4.1 implies that
$$
\ba
(R\otimes I)F(R\otimes I)=&(R\otimes I)(R\otimes\fQ)F(R\otimes\fQ)(R\otimes I)
\\
=&(R^2\otimes\fQ)F(R^2\otimes\fQ)=(R\otimes\fQ)F(R\otimes\fQ)=F\,.
\ea
$$
Thus, for each $\phi_1,\phi_2,\psi_1,\psi_2\in\fH$, one has
$$
\ba
\la\phi_1\otimes\psi_1|F|\phi_2\otimes\psi_2\ra=\la\phi_1\otimes\psi_1|(R\otimes I)F(R\otimes I)|\phi_2\otimes\psi_2\ra
\\
=\la\phi_1|e\ra\la e|\phi_2\ra\la e\otimes\psi_1|F|e\otimes\psi_2\ra
\\
=\la\phi_1|R|\phi_2\ra\la\psi_1|G|\psi_2\ra&\,,
\ea
$$
where $\|e\|_\fH=1$ and $\bC e=\text{ran}(R)$, while $G$ is the self-adjoint operator on $\fH$ such that
$$
\la\psi_1|G|\psi_2\ra=\la e\otimes\psi_1|F|e\otimes\psi_2\ra\,,\qquad\psi_1,\psi_2\in\fH\,.
$$
(The existence and uniqueness of $G$ follows from the Riesz representation theorem.)

Hence $F=R\otimes G$, and since $F\in\cC(R,S)$,
$$
\Tr((R\otimes G)(I\otimes B))=\Tr(GB)=\Tr(SB)
$$
for each finite rank $B\in\cL(\fH)$, which implies that $G=S$.
\end{proof}


\subsection{The quantum bipartite matching problem}\label{bipart}
{\tcr A classical bipartite matching problem consists in computing the optimal transport between two probability measures $\mu$ and $\nu$ given by 
\begin{eqnarray}
\mu=\tfrac{1-\eta}2\delta_{-a}+\tfrac{1+\eta}2\delta_{a},\ \  
\nu=\tfrac{1}2\delta_{-b}+\tfrac{1}2\delta_{b},\ \ a,b>0,
\nonumber
\end{eqnarray}
associated to
$$
\MKd(\mu,\nu).
$$
A quantum analogue consists in considering
$$
MK_\hb(R,S)
$$ 
where
\begin{eqnarray}
R=\tfrac{1-\eta}2|-a\rangle\langle a|+\tfrac{1-\eta}2|-a\rangle\langle- a|,\ \mbox{ and } \ 
S=\tfrac12|b,0\rangle\langle b|+\tfrac12|-b\rangle\langle -b|
\nonumber.
\end{eqnarray}

Here $|q\rangle=|q,0\rangle$ where $|q,p\rangle,\ q,p\in\bR,$ is a coherent state defined  by \eqref{CohSta}. Since $a,b>0$, $R,S$ are operators of rank $2$ so that  Theorem \ref{T-QTransp} (2) applies. Since we are in dimension $d=1$, the two first equalities of the result read
\begin{eqnarray}
F^{1/2}(\tfrac i\hb[P^R,Q^R]\otimes Q^S-\tfrac i\hb[P^R,\cA']\otimes I)F^{1/2}&=0&\nonumber\\
F^{1/2}(\tfrac i\hb[Q^R,P^R]\otimes Q^S-\tfrac i\hb[Q^R,\cA']\otimes I)F^{1/2}&=0.&\nonumber
\end{eqnarray}

Note that when $\eta=0$ (equal masses), a classical transport is just any one transporting $-a$ to $-b$ and $a$ to $b$ (see \cite[Section 1]{CagliotiFGPaul}). We will consider the quantum problem in this case $\eta=0$, that is we will study $MK_\hb(R,S)$ where
$$
R:=\frac12(\ket{a}\bra{a}+\ket{-a}\bra{-a})\ \mbox{ and }\  S:=\frac12(\ket{b}\bra{b}+\ket{-b}\bra{-b}).
$$

Define
$$
\lambda:=\bra{a}{-a}\rangle=e^{-a^2/\hbar},\qquad
\mu:=\bra{b}{-b}\rangle=e^{-b^2/\hbar},
$$
and consider the two pairs of orthogonal vectors
\be\label{phipsi}
\phi_\pm:=\frac{\ket{a}\pm\ket{-a}}{\sqrt{2(1\pm\lambda)}},\qquad
\psi_\pm:=\frac{\ket{b}\pm\ket{-b}}{\sqrt{2(1\pm\mu)}}.
\ee
Hence
$$
R=\alpha_+\ketbra{\phi_+}{\phi_+}
+
\alpha_-\ketbra{\phi_-}{\phi_-},\ 
S=\beta_+\ketbra{\psi_+}{\psi_+}
+
\beta_-\ketbra{\psi_-}{\psi_-},\qquad \alpha_\pm:=\frac12(1\pm\lambda),\ 
\beta_\pm:=\frac12(1\pm\mu).
$$

In \cite[Section 4]{CagliotiFGPaul}, we computed an optimal coupling $F$ between $R$ and $S$ of the following form: in the basis $\{\phi_+\otimes\psi_+,\phi_+\otimes\psi_-,\phi_-\otimes\psi_+,\phi_-\otimes\psi_-\}
$ $F$ is expressed by the  matrix
$$
\small{
\frac14
\begin{pmatrix}
1+\lambda\mu+\lambda+\mu&0&0&\sqrt{(1+\lambda\mu)^2-(\lambda+\mu)^2}\\
0&1-\lambda\mu+\lambda-\mu&\sqrt{(1-\lambda\mu)^2-(\lambda-\mu)^2}&0\\
0&\sqrt{(1-\lambda\mu)^2-(\lambda-\mu)^2}&1-\lambda\mu-\lambda+\mu&0\\
\sqrt{(1+\lambda\mu)^2-(\lambda+\mu)^2}&0&0&1+\lambda\mu-\lambda-\mu
\end{pmatrix}.}
$$
Therefore one sees easily that $\Ker(F)$ is generated by the two vectors 
$$
\{|++\rangle-\sqrt{\tfrac{1+\lambda}{1-\lambda}\tfrac{1+\mu}{1-\mu}}|--\rangle,|+-\rangle-\sqrt{\tfrac{1+\lambda}{1-\lambda}\tfrac{1-\mu}{1+\mu}}|-+\rangle\},
$$
 with $|\pm,\pm\rangle=\phi_\pm\otimes\psi_\pm$, so that
$\Ker{(F)}^\perp$ is the  two-dimensional subspace of $\fH$ generated by
$$
\{\varphi_1:=|++\rangle-\sqrt{\tfrac{1-\lambda}{1+\lambda}\tfrac{1-\mu}{1+\mu}}|--\rangle,\varphi_2:=|+-\rangle-\sqrt{\tfrac{1-\lambda}{1+\lambda}\tfrac{1+\mu}{1-\mu}}|-+\rangle\}.
$$

Moreover, straightforward computations show that
$$
Q^R=\tfrac a{\sqrt{1-\lambda^2}}\begin{pmatrix}
0&1\\1&0\end{pmatrix},\ P^R=\tfrac{-ia\lambda}{\sqrt{1-\lambda^2}}
\begin{pmatrix}
0&1\\-1&0
\end{pmatrix},
$$
so that 
$$
\tfrac i\hbar[P^R,Q^R]
=
\tfrac{2a^2\lambda}{1-\lambda^2}
\begin{pmatrix}
-1&0\\0&1
\end{pmatrix}
$$
and
$$
\tfrac i\hbar[P^R,Q^R]\otimes Q^S
=
\tfrac{2a^2b\lambda}{(1-\lambda^2)\sqrt{1-\mu^2}}
\begin{pmatrix}
-1&0\\0&1
\end{pmatrix}
\otimes
\begin{pmatrix}
0&1\\1&0\end{pmatrix}.
$$
 Hence one easily get that, for $\alpha,\beta\in\bC$,
$$
\begin{pmatrix}
-1&0\\0&1
\end{pmatrix}
\otimes
\begin{pmatrix}
0&1\\1&0\end{pmatrix}
\left(\alpha(|++\rangle-\sqrt{\tfrac{1-\lambda}{1+\lambda}\tfrac{1-\mu}{1+\mu}}|--\rangle)+\beta(|+-\rangle-\sqrt{\tfrac{1-\lambda}{1+\lambda}\tfrac{1+\mu}{1-\mu}}|-+\rangle)\right)
$$
$$
=
-\alpha(|+-\rangle+\sqrt{\tfrac{1-\lambda}{1+\lambda}\tfrac{1-\mu}{1+\mu}}|-+\rangle)-\beta(|++\rangle+\sqrt{\tfrac{1-\lambda}{1+\lambda}\tfrac{1+\mu}{1-\mu}}|--\rangle)
$$
and, defining $\|\alpha,\beta\rangle:=\alpha\varphi_1+\beta\varphi_2$,
$$
\langle\alpha',\beta'\|
\begin{pmatrix}
-1&0\\0&1
\end{pmatrix}
\otimes
\begin{pmatrix}
0&1\\1&0\end{pmatrix}
\|\alpha,\beta\rangle
=
(\bar\alpha'\beta+
\bar\beta'\alpha)(-1+\tfrac{1-\lambda}{1+\lambda}).
$$
Moreover
$$
\begin{pmatrix}
1&0\\0&1
\end{pmatrix}
\otimes
\begin{pmatrix}
0&1\\1&0\end{pmatrix}
\left(\alpha(|++\rangle-\sqrt{\tfrac{1-\lambda}{1+\lambda}\tfrac{1-\mu}{1+\mu}}|--\rangle)+\beta(|+-\rangle-\sqrt{\tfrac{1-\lambda}{1+\lambda}\tfrac{1+\mu}{1-\mu}}|-+\rangle)\right)
$$
$$
=
\alpha(|+-\rangle-\sqrt{\tfrac{1-\lambda}{1+\lambda}\tfrac{1-\mu}{1+\mu}}|-+\rangle)+\beta(|++\rangle-\sqrt{\tfrac{1-\lambda}{1+\lambda}\tfrac{1+\mu}{1-\mu}}|--\rangle)
$$
and
$$
\langle\alpha',\beta'\|
\begin{pmatrix}
1&0\\0&1
\end{pmatrix}
\otimes
\begin{pmatrix}
0&1\\1&0\end{pmatrix}
\|\alpha,\beta\rangle
=
(\bar\alpha'\beta+
\bar\beta'\alpha)(1+\tfrac{1-\lambda}{1+\lambda})
$$
so that 
$$
\langle\alpha',\beta'\|
\begin{pmatrix}
-1&0\\0&1
\end{pmatrix}
\otimes
\begin{pmatrix}
0&1\\1&0\end{pmatrix}
\|\alpha,\beta\rangle
=
-\lambda
\langle\alpha',\beta'\|
\begin{pmatrix}
1&0\\0&1
\end{pmatrix}
\otimes
\begin{pmatrix}
0&1\\1&0\end{pmatrix}
\|\alpha,\beta\rangle.
$$
We just proved the following lemma.
\begin{Lem}\label{lemlemlem}
$$
F^{1/2}\big(\tfrac i\hbar[P^R,Q^R]\otimes Q^S\big)F^{1/2}
=
F^{1/2}\big(-\tfrac{2a^2
\lambda^2
}{\hbar(1-\lambda^2)}I\otimes Q^S\big)F^{1/2}
$$
\end{Lem}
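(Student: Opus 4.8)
The plan is to reduce the identity to an elementary finite–dimensional computation on the two–dimensional space $\Ker(F)^\perp$. By Lemma~\ref{L-SuppCoupling} everything takes place inside the four–dimensional space $\Ker(R)^\perp\otimes\Ker(S)^\perp=\Span\{\phi_\pm\}\otimes\Span\{\psi_\pm\}$, so $F^{1/2}$ is a genuine matrix and $\Ran(F^{1/2})=\Ker(F)^\perp=\Span\{\varphi_1,\varphi_2\}$.

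The one soft observation I would isolate first is: for any operator $M$ on $\Ker(R)^\perp\otimes\Ker(S)^\perp$ one has $\langle u|F^{1/2}MF^{1/2}|v\rangle=\langle F^{1/2}u|M|F^{1/2}v\rangle$, and both $F^{1/2}u$ and $F^{1/2}v$ lie in $\Span\{\varphi_1,\varphi_2\}$; hence $F^{1/2}MF^{1/2}$ depends only on the four numbers $\langle\varphi_i|M|\varphi_j\rangle$, $i,j\in\{1,2\}$, and if two operators $M_1,M_2$ satisfy $\langle\varphi_i|M_1|\varphi_j\rangle=c\,\langle\varphi_i|M_2|\varphi_j\rangle$ for a scalar $c$ and all $i,j$, then $F^{1/2}M_1F^{1/2}=c\,F^{1/2}M_2F^{1/2}$ by bilinearity.

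Next I would record, from the explicit matrices of $Q^R,P^R$ displayed above and the analogous one for $Q^S$, that $\tfrac i\hbar[P^R,Q^R]\otimes Q^S=c_1\,M_1$ and $I\otimes Q^S=c_2\,M_2$, where $M_1:=\begin{pmatrix}-1&0\\0&1\end{pmatrix}\otimes\begin{pmatrix}0&1\\1&0\end{pmatrix}$ and $M_2:=I\otimes\begin{pmatrix}0&1\\1&0\end{pmatrix}$ in the basis $\{\phi_\pm\}\otimes\{\psi_\pm\}$, with $c_1/c_2=\tfrac{2a^2\lambda}{\hbar(1-\lambda^2)}$. The heart of the proof is then the computation of the arrays $\big(\langle\varphi_i|M_k|\varphi_j\rangle\big)_{i,j}$ for $k=1,2$. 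Since $\begin{pmatrix}0&1\\1&0\end{pmatrix}$ swaps $\psi_+\leftrightarrow\psi_-$, both $M_1\varphi_1$ and $M_2\varphi_1$ are combinations of $|+-\rangle$ and $|-+\rangle$ only, hence orthogonal to $\varphi_1$ (which lives on $|++\rangle,|--\rangle$), and likewise $M_1\varphi_2,M_2\varphi_2\perp\varphi_2$; so both arrays are purely off–diagonal. The off–diagonal entries come out, after the collapse $\sqrt{\tfrac{1-\mu}{1+\mu}}\,\sqrt{\tfrac{1+\mu}{1-\mu}}=1$, equal to $-1+\tfrac{1-\lambda}{1+\lambda}$ for $M_1$ and $1+\tfrac{1-\lambda}{1+\lambda}$ for $M_2$, whose ratio is $-\lambda$; thus $\langle\varphi_i|M_1|\varphi_j\rangle=-\lambda\,\langle\varphi_i|M_2|\varphi_j\rangle$ for all $i,j$.

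Combining the three ingredients gives
$$
F^{1/2}\big(\tfrac i\hbar[P^R,Q^R]\otimes Q^S\big)F^{1/2}=c_1F^{1/2}M_1F^{1/2}=-\lambda c_1F^{1/2}M_2F^{1/2}=-\lambda\tfrac{c_1}{c_2}F^{1/2}(I\otimes Q^S)F^{1/2}=-\tfrac{2a^2\lambda^2}{\hbar(1-\lambda^2)}F^{1/2}(I\otimes Q^S)F^{1/2},
$$
which is the assertion. The only place requiring care is the last paragraph's bookkeeping — computing the off–diagonal entries of $M_1,M_2$ against $\varphi_1,\varphi_2$ and checking that all square-root factors cancel so that both arrays are off–diagonal with proportional entries; the reduction to $\Ker(F)^\perp$ and the tracking of the scalar constants $c_1,c_2$ are routine.
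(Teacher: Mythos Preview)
Your proposal is correct and follows essentially the same approach as the paper: both reduce to the two-dimensional space $\Ker(F)^\perp=\Span\{\varphi_1,\varphi_2\}$ and compute the matrix elements of $M_1=\begin{pmatrix}-1&0\\0&1\end{pmatrix}\otimes\begin{pmatrix}0&1\\1&0\end{pmatrix}$ and $M_2=I\otimes\begin{pmatrix}0&1\\1&0\end{pmatrix}$ there, finding the off-diagonal entries $-1+\tfrac{1-\lambda}{1+\lambda}$ and $1+\tfrac{1-\lambda}{1+\lambda}$ with ratio $-\lambda$. The paper packages this as the bilinear form $\langle\alpha',\beta'\|M_k\|\alpha,\beta\rangle=(\bar\alpha'\beta+\bar\beta'\alpha)\cdot(\text{const})$, which is exactly your observation that the diagonal entries vanish and the off-diagonal ones are proportional.
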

We also computed in \cite[Section 2]{CagliotiFGPaul} the matrix $C'$ of $\fP\otimes\fQ C\fP\otimes\fQ$, the cost projected on the range of $R\otimes S$\footnote{The cost used in \cite{CagliotiFGPaul} is shifted by $-2\hbar$ with respect to the one in the present paper.},
\be\label{cost}
C'=\begin{pmatrix}
\cA+2\hbar&0&0&\gamma\\
0&\cB+2\hbar&\delta&0\\
0&\delta&\cC+2\hbar&0\\
\gamma&0&0&\cD+2\hbar
\end{pmatrix}.
\ee
where 
$$
\begin{aligned}
\cA=a^2\frac{1-\lambda}{1+\lambda}+b^2\frac{1-\mu}{1+\mu}&,\qquad\cB=a^2\frac{1-\lambda}{1+\lambda}+b^2\frac{1+\mu}{1-\mu}, \qquad
\gamma=-\frac{2ab(1-\lambda\mu)}{\sqrt{(1-\lambda^2)(1-\mu^2)}},
\\
\cC=a^2\frac{1+\lambda}{1-\lambda}+b^2\frac{1-\mu}{1+\mu}&,\qquad\cD=a^2\frac{1+\lambda}{1-\lambda}+b^2\frac{1+\mu}{1-\mu}, \qquad\delta=-\frac{2ab(1+\lambda\mu)}{\sqrt{(1-\lambda^2)(1-\mu^2)}}.
\end{aligned}
$$
By the same computation, we get that the matrices $H^R$ and $H^S$ of the harmonic oscillator projected on the range of $R$ and the one of $S$ are
$$
H^R=\begin{pmatrix}
a\tfrac{1-\lambda}{1+\lambda}+\hbar&0\\
0&a\tfrac{1+\lambda}{1-\lambda}+\hbar
\end{pmatrix},\ 
H^S=\begin{pmatrix}
b\tfrac{1-\mu}{1+\mu}+\hbar&0\\
0&b\tfrac{1+\mu}{1-\mu}+\hbar
\end{pmatrix}.
$$
Finally, we proved in \cite[Section 2]{CagliotiFGPaul} that two optimal operators $A,B$ can be chosen in the form
$$
A
=
\begin{pmatrix}
\alpha_1&0\\0&\alpha_2
\end{pmatrix}
\qquad
B=
\begin{pmatrix}
\beta_1&0\\0&\beta_2
\end{pmatrix}
$$
where $\alpha_1,\alpha_2,\beta_1,\beta_2$ satisfy
$$
\bar a=\alpha_1+\beta_1-\cA,\quad\bar b=\alpha_1+\beta_2-\cB,\quad\bar c=\alpha_2+\beta_1-\cC,\quad\bar d=\alpha_2+\beta_2-\cD,
$$
with
$$
\bar a+\bar d=\bar b+\bar c=x,\ \ \bar a-\bar d=\sqrt{x^2-4\gamma^2},\ \  \bar b-\bar c=\sqrt{x^2-4\delta^2},\ \ x=-\tfrac{4ab(1-\lambda^2\mu^2)}{(1-\lambda^2)(1-\mu^2)}.
$$
We get, after some algebraic  computations,
$$
\alpha_1-\alpha_2=\bar a-\bar c+\cA-\cC=\frac{4\lambda a}{{1-\lambda^2}}(b-a).
$$
Let us remark now that, if 
$D=\begin{pmatrix}
U&0\\0&V
\end{pmatrix},\  U,V\in\bC$, then
$$
\tfrac i\hbar[P^R,D]=
\tfrac{a\lambda}{\hbar\sqrt{1-\lambda^2}}\begin{pmatrix}
0&V-U\\V-U&0
\end{pmatrix}=\tfrac{\lambda(V-U)}\hbar Q^R.
$$ 
Therefore, defining 
 $\cA'=\tfrac12(H^R- A)$, one find
%
\be\label{eqeqeq}
\tfrac i\hbar[P^R,\cA']=
\tfrac\lambda{2\hbar} 
(a(\tfrac{1-\lambda}{1+\lambda}-\tfrac{1+\lambda}{1-\lambda})+\alpha_1-\alpha_2)Q^R
=\tfrac{2a^2
\lambda^2
}{\hbar(1-\lambda^2)}\tfrac baQ^R.
\ee

By Lemma \ref{lemlemlem} and \eqref{eqeqeq}, and the same type of computations changing $Q^R$ in $P^R$ and $Q^S$ in $P^S$,  we get finally the following result.
\begin{Prop}\label{oufder}In the equal mass situation, we have
\begin{eqnarray}
F^{1/2}(I\otimes Q^S-\tfrac baQ^R\otimes I)F^{1/2}&=&0\nonumber\\
F^{1/2}(I\otimes P^S-\tfrac {be^{-\frac{b^2}\hb}}{ae^{-\frac{a^2}\hb}}P^R\otimes I)F^{1/2}&=&0\nonumber
\end{eqnarray}
which correspond to a transport $(-a,a)\to(-b,b)$. The renormalization of $Q^R$  and $P^R$ by $\tfrac b a$ and $\tfrac {be^{-\frac{b^2}\hb}}{ae^{-\frac{a^2}\hb}}$ respectively corresponds to sending $(Q^R,P^R)$ to $(Q^S,P^S)$ by a transport not in the (usual) form of a conjugation by a unitary transform sending $\{\phi_+,\phi_-\}$ to $\{\psi_+,\psi_-\}$.
\end{Prop}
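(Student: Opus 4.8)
The plan is to feed the two scalar identities already established --- Lemma \ref{lemlemlem} and formula \eqref{eqeqeq} --- into the gradient-structure equations furnished by Theorem \ref{T-QTransp}$(2)$, and then to rerun the same bookkeeping with $Q$ replaced by $P$. Since $R$ and $S$ have rank $2$ and $d=1$, Theorem \ref{T-QTransp}$(2)$ applies, $F$ being identified via Lemma \ref{L-SuppCoupling} with its compression to $\Ker(R)^\perp\otimes\Ker(S)^\perp$. In the four identities of part $(d)$ the commutators $[P^R,P^R]$, $[Q^R,Q^R]$, $[P^S,P^S]$ and $[Q^S,Q^S]$ all vanish, so the first two collapse to $F^{1/2}(\tfrac{i}{\hbar}[P^R,Q^R]\otimes Q^S-\scrD_{Q^R}\vilA'\otimes I)F^{1/2}=0$ and $F^{1/2}(\tfrac{i}{\hbar}[Q^R,P^R]\otimes P^S-\scrD_{P^R}\vilA'\otimes I)F^{1/2}=0$; and by the closing remark in the proof of Theorem \ref{T-QTransp}$(2)$ one may substitute $\tfrac{i}{\hbar}[P^R,\vilA']$ for $\scrD_{Q^R}\vilA'$, and $-\tfrac{i}{\hbar}[Q^R,\vilA']$ for $\scrD_{P^R}\vilA'$, inside $F^{1/2}(\cdot)F^{1/2}$.

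For the first ($Q$) identity, insert Lemma \ref{lemlemlem}, which turns $F^{1/2}(\tfrac{i}{\hbar}[P^R,Q^R]\otimes Q^S)F^{1/2}$ into a scalar multiple of $F^{1/2}(I\otimes Q^S)F^{1/2}$, together with \eqref{eqeqeq}, which writes $\tfrac{i}{\hbar}[P^R,\vilA']$ as a scalar multiple of $Q^R$. The two prefactors are proportional, the relevant ratio being $b/a$, so after dividing through by the common nonzero constant $\tfrac{2a^2\lambda^2}{\hbar(1-\lambda^2)}$ --- nonzero because $a>0$ forces $0<\lambda<1$ --- one obtains $F^{1/2}(I\otimes Q^S-\tfrac{b}{a}Q^R\otimes I)F^{1/2}=0$.

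For the second ($P$) identity one redoes the two computations with the relevant substitutions. First, the $P^S$-analogue of Lemma \ref{lemlemlem}: write the matrix of $\tfrac{i}{\hbar}[Q^R,P^R]\otimes P^S=-\tfrac{i}{\hbar}[P^R,Q^R]\otimes P^S$ in the basis $\{\phi_\pm\otimes\psi_\pm\}$, evaluate it against the two spanning vectors $\varphi_1,\varphi_2$ of $\Ker(F)^\perp$, and do the same for $I\otimes P^S$; the same cancellation as in Lemma \ref{lemlemlem} makes the two $2\times2$ Gram matrices proportional, so $F^{1/2}(\tfrac{i}{\hbar}[Q^R,P^R]\otimes P^S)F^{1/2}$ is a scalar multiple of $F^{1/2}(I\otimes P^S)F^{1/2}$. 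Second, since $\vilA'=\tfrac12(H^R-A)$ is diagonal in $\{\phi_+,\phi_-\}$ with the entries recorded just above, and $-\tfrac{i}{\hbar}[Q^R,D]$ is a multiple of $P^R$ for any diagonal $D$ (the same $2\times2$ computation as for $\tfrac{i}{\hbar}[P^R,D]$ in \eqref{eqeqeq}), one gets $\scrD_{P^R}\vilA'=c'P^R$ with an explicit $c'$. Combining these two facts with the second identity of the first paragraph, and simplifying the ratio of the prefactors --- then substituting $\lambda=e^{-a^2/\hbar}$, $\mu=e^{-b^2/\hbar}$ --- produces the coefficient $be^{-b^2/\hbar}/(ae^{-a^2/\hbar})$ and hence the second formula. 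The final sentence of the statement needs no computation: the two coefficients $b/a$ and $be^{-b^2/\hbar}/(ae^{-a^2/\hbar})$ are unequal, so the map $(Q^R,P^R)\mapsto(Q^S,P^S)$ they encode cannot be a conjugation by a unitary taking $\{\phi_+,\phi_-\}$ to $\{\psi_+,\psi_-\}$ (which would leave both coefficients equal to $1$).

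\textbf{Main obstacle:} the only genuinely new computation is the $P^S$-analogue of Lemma \ref{lemlemlem}. In the real frame $\{\psi_+,\psi_-\}$, $P^S$ is $-ib\mu/\sqrt{1-\mu^2}$ times an \emph{antisymmetric} matrix rather than a symmetric one, so the bilinear form $\langle\alpha',\beta'\|\,\cdot\,\|\alpha,\beta\rangle$ now involves $\bar\alpha'\beta-\bar\beta'\alpha$ in place of $\bar\alpha'\beta+\bar\beta'\alpha$; one must check that the proportionality with $I\otimes P^S$ survives this change of sign pattern. It does, for the same structural reason as in Lemma \ref{lemlemlem} --- the vectors spanning $\Ker(F)^\perp$ are built from $|++\rangle,|--\rangle$ and from $|+-\rangle,|-+\rangle$ with precisely the ratios that force the cancellation --- but it is the step where a sign slip is easiest to make.
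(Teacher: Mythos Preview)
Your proposal is correct and follows essentially the same approach as the paper: invoke Theorem \ref{T-QTransp}$(2)(d)$ in the one-dimensional, rank-two setting, plug in Lemma \ref{lemlemlem} and \eqref{eqeqeq} for the $Q$ identity, and then repeat the computation with $Q$ replaced by $P$. The paper in fact compresses the entire argument into the single sentence ``By Lemma \ref{lemlemlem} and \eqref{eqeqeq}, and the same type of computations changing $Q^R$ in $P^R$ and $Q^S$ in $P^S$, we get finally the following result,'' so your write-up is, if anything, more explicit than the paper's --- in particular your flagging of the antisymmetric versus symmetric off-diagonal structure in the $P$-analogue of Lemma \ref{lemlemlem} is a useful warning that the paper does not spell out.
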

\tce}
\subsection{The case where $R=S$ is a T\"oplitz operator}


We first recall that 
$$
MK_\hb(R,R)\ge 2d\hb>0
$$
by \eqref{HeisenIneq}, at variance with the classical setting, where $\MKd(\mu,\mu)=0$ for all $\mu\in\cP_2(\bR^d\times\bR^d)$. Therefore, computing an optimal coupling and optimal operators $\fa$ and $\fb$ is nontrivial problem even in this case.
However, this problem can be solved when $R$ is a T\"oplitz operator, {\tcr  defined as follows.\tce}

 Let $\mu$ be a Radon measure on $\bR^d\times\bR^d$; the (possibly unbounded) T\"oplitz operator with symbol $\mu$ is defined by duality by the formula
\be\lb{Toeplitz}
\la u|\Op^T_\hb[\mu]|v\ra:=\frac1{(2\pi\hb)^d}\int_{\bR^d\times\bR^d}\la u|q,p\ra\la q,p|v\ra\mu(dqdp)\,,
\ee
for all $u,v\in L^2(\bR^d)$ such that the functions $(q,p)\mapsto\la u|q,p\ra$ and $(q,p)\mapsto\la v|q,p\ra$ both belong to $L^2(\bR^d\times\bR^d,\mu)$, where
\be\lb{CohSta}
|q,p\ra(x):=(\pi\hb)^{-d/4}e^{-|x-q|^2/2\hb}e^{ip\cdot x/\hb}\,.
\ee

We recall that
$$
\mu\in\cP_2(\bR^d\times\bR^d)\implies\Op^T_\hb[\mu]\in\cD_2(\fH)\,,
$$
(see \cite{FGPaulCRAS}, Theorem 2.2 (iii)), and that
$$
MK_\hb(\Op^T_\hb[\mu],\Op^T_\hb[\nu])^2\le \MKd(\mu,\nu)^2+2d\hb
$$
for all $\mu,\nu\in\cP_2(\bR^d\times\bR^d)$ (see \cite{FGPaulCRAS}, Theorem 2.2 (iii), or Theorem 2.3 (1) in \cite{FGMouPaul}), while 
$$
2d\hb\le MK_\hb(R,S)^2\quad\text{ for all }R,S\in\cD_2(\fH)\,,
$$
according to fla. (14) in \cite{FGMouPaul}.

Hence
$$
MK_\hb(\Op^T_\hb[\mu],\Op^T_\hb[\mu])^2=2d\hb\,.
$$
This is Corollary 2.4 in \cite{FGPaulCRAS}.

An optimal element of $\cC(\Op^T_\hb[\mu],\Op^T_\hb[\mu])$ is
$$
F:=\int_{\bR^d\times\bR^d}|q,p\ra\la q,p|^{\otimes 2}\mu(dqdp)\,.
$$
That $F\in\cC(\Op^T_\hb[\mu],\Op^T_\hb[\mu])$ follows from Lemma 4.1 in \cite{FGMouPaul}. This is the analogue of the diagonal coupling $\text{diag}\#\mu$ of one Borel probability measure $\mu$ with itself, 
where $\text{diag}$ is the diagonal embedding $\text{diag}:\,x\mapsto(x,x)$. (Informally, the diagonal coupling is $\mu(dx)\de(y-x)$.)

Moreover
$$
\ba
\Tr(F^{1/2}CF^{1/2})=\sup_{\eps>0}\Tr(F(I+\eps C)^{-1}C)
\\
=\int_{\bR^d\times\bR^d}\la q,q,p,p|C|q,q,p,p\ra\mu(dqdp)=2d\hb\,.
\ea
$$
as explained in the proof of Lemma 2.1 of \cite{FGPaulCRAS}. In the formula above, we have denoted
$$
|q_1,q_2,p_1,p_2\ra=|q_1,p_1\ra\otimes|q_2,p_2\ra\,,\quad\la q_1,q_2,p_1,p_2|=\la q_1,p_1|\otimes\la q_2,p_2|
$$

We claim that one can choose in this case
\be\lb{ab=R=SToeplitz}
\fa=\fb=d\hb I_\fH\,.
\ee
Indeed, according to the Heisenberg uncertainty principle
$$
C\ge 2d\hb I\otimes I=\fa\otimes I+I\otimes\fa\,.
$$
On the other hand
$$
\ba
\Tr(\fa\Op^T_\hb[\mu])=d\hb\Tr(\Op^T_\hb[\mu])
\\
=d\hb\int_{\bR^d\times\bR^d}\mu(dxd\xi)=d\hb&\,,
\ea
$$
so that, with the choice of $\fa$ and $\fb$ above, one has
$$
\Tr(\fa\Op^T_\hb[\mu])+\Tr(\fb\Op^T_\hb[\mu])=2d\hb=\Tr(F^{1/2}CF^{1/2})\,.
$$
In the classical setting, the optimal functions $\phi$ and $\psi$ in \eqref{kanto1} are $\phi_{op}=\psi_{op}=0$. This is in complete agreement with \eqref{ab=R=SToeplitz} in the limit as $\hb\to 0$.

\subsection{When is the optimal coupling a T\"oplitz operator?}


Let $R$ and $S$ be T\" oplitz density operators, of the form $R=\Op^T_\hb[(2\pi\hb)^d\mu]$ and $S=\Op^T_\hb[(2\pi\hb)^d\nu]$ with $\mu,\nu\in\cP_2(\bR^d\times\bR^d)$. When is an optimal coupling of  $R$ and $S$ a T\" oplitz operator, 
of the form $F=\Op^T_\hb[(2\pi\hb)^{2d}\ll]$ for some $\ll\in\cP(\bR^d\times\bR^d\times\bR^d\times\bR^d)$? We shall see that this question is answered in the affirmative only under rather stringent conditions.

\smallskip
We already know the answer in two different cases discussed above:

\noindent
(a) $R=S$ (see previous section);

\noindent
(b) $\mu=\de_{q,p}$ and $\nu=\de_{q',p'}$, in which case $R$ and $S$ are rank-one operators, in which case the only (and therefore the optimal) coupling is 
$$
R\otimes S=\Op^T_\hb[(2\pi\hb)^{2d}\mu\otimes\nu]\,.
$${\tcr 
Moreover, as recalled at the end of Section \ref{S-Intro} and in Section \ref{bipart}, we studied in 
\cite{CagliotiFGPaul} the case where 
\begin{eqnarray}
R&=&\tfrac{1+\eta}2|a,0\rangle\langle a,0|+\tfrac{1-\eta}2|-a,0\rangle\langle- a,0|=\Op^T_\hb[(2\pi\hb)^{2}\mu],\ \mu=\tfrac{1+\eta}2\delta_{(a,0)}+\tfrac{1-\eta}2\delta_{(-a,0)}\nonumber\\
S&=&\tfrac12|b,0\rangle\langle b,0|+\tfrac12|-b,0\rangle\langle -b,0|=\Op^T_\hb[(2\pi\hb)^{2}\nu],\   \nu =\tfrac12\delta_{(b,0)}+\tfrac12\delta_{(-b,0)},\  a,b\in\bR^+\nonumber.
\end{eqnarray}
we proved in \cite[Section 4]{CagliotiFGPaul} that
\begin{itemize}
\item when $\eta=0$ (equal mass case), an optimal quantum coupling $F$ is the T\"oplitz quantization of a classical one $f$: 
\begin{eqnarray}
F&=&\frac12\big(|a\rangle\otimes|b\rangle \langle a|\otimes\langle b|+
|-a\rangle\otimes|-b\rangle \langle -a|\otimes\langle -b|
\big)\nonumber\\
&=&
\Op^T_\hb[(2\pi\hb)^{2}\tfrac12\big(\delta_{(a,0)}\otimes\delta_{(b,0)}+
\delta_{(-a,0)}\otimes\delta_{(-b,0)}\big)]
:=
\Op^T_\hb[(2\pi\hb)^{2}f]\nonumber
\end{eqnarray}
\item when $\eta\neq 0$ (non equal mass case), one easily shows that (we take $a=b$) a classical optimal coupling is 
$$
f=
\tfrac12\delta_{(a,0}\otimes\delta_{(a,0)}+
\tfrac{1-\eta}2\delta_{(-a,0}\otimes\delta_{(-a,0)}+\tfrac\eta2
\delta_{(a,0}\otimes\delta_{(-a,0)},
$$
and we proved that, non only $\Op^T_\hb[(2\pi\hb)^{2}f]$ is not an optimal coupling of $R,S$, but no optimal coupling of $R,S$ can be a T\"oplitz operator in this case.
\end{itemize}
\tce}

\smallskip
In the analysis below, we consider this problem when $\mu$ and $\nu$ are of the form
$$
\mu(dqdp)=m(q,p)dqdp\,,\quad\nu(dqdp)=n(q,p)dqdp\,,\qquad m,n>0\text{ a.e..} 
$$
(The case (b) mentioned above obviously fails to satisfy this assumption.) 

This assumption clearly implies that $\Ker(R)=\Ker(S)=\{0\}$. To see this, we first recall one definition of the Husimi transform of an operator $A$ on $L^2(\bR^d)$:
\be\lb{Husimi}
\tilde W_\hb[A](q,p):=\tfrac1{(2\pi\hb)^d}\la q,p|A|q,p\ra\,,\qquad q,p\in\bR^d\,.
\ee
(There is another, equivalent definition in terms of the Wigner transform: see (49), and the formula following (53) in \cite{FGMouPaul}; the equivalence between these two definitions is the formula before (54) in \cite{FGMouPaul}.)
Now, if $\phi\in\Ker(R)$, one has, by formula (54) of \cite{FGMouPaul}, 
$$
\la\phi|R|\phi\ra=\Tr(|\phi\ra\la\phi|R)=\int_{\bR^{2d}}\tilde W_\hb[|\phi\ra\la\phi|](q,p)m(q,p)dqdp=0\,.
$$
This implies that $\tilde W_\hb[|\phi\ra\la\phi|]=0$, which implies in turn that $\phi=0$. For this implication, see for instance Remark 2.3 in \cite{FGPaul}. Equivalently, using (46) and (54) in \cite{FGMouPaul} with $R=|\phi\ra\la\phi|$ shows that
$$
\|\phi\|^2_{\fH}=\Tr(|\phi\ra\la\phi|)=\int_{\bR^{2d}}\tilde W_\hb[|\phi\ra\la\phi|](q,p)dqdp=0\,.
$$

\smallskip
Let $(\fa,\fb)\in\tilde\fK(R,S)$ such that
$$
\Tr(R^{1/2}\fa R^{1/2}+S^{1/2}\fb S^{1/2})=MK_\hb(R,S)^2\,.
$$
Assume that $|q,p\ra\in\cJ[R]\cap\cJ[S]$ for each $(q,p)\in\bR^d\times\bR^d$, and that the functions
$$
(q,p)\mapsto\la q,p|\fa|q,p\ra\quad\text{ and }\quad(q,p)\mapsto\la q,p|\fb|q,p\ra
$$
are of class $C^2$ on $\bR^d\times\bR^d$. Define
$$
a(q,p):=\tilde W_\hb[\fa](q,p)\,,\quad b(q,p):=\tilde W_\hb[\fb](q,p)\,,
$$
and
$$
\tilde a(q,p):=\tfrac12(|q|^2+|p|^2-a(q,p)+d\hb)\,,\quad\tilde b(q,p):=\tfrac12(|q|^2+|p|^2-b(q,p)+d\hb)\,.
$$
Notice that $a\in L^1(\bR^{2d},\mu)$ and $b\in L^1(\bR^{2d},\nu)$ since $R^{1/2}\fa R^{1/2}$ and $S^{1/2}\fb S^{1/2}$ belong to $\cL^1(\fH)$ by definition of $\tilde\fK(R,S)$, because $\Ker(R)=\Ker(S)=\{0\}$.

\begin{Thm}
The following conditions are equivalent:

\noindent
(a) there exists $\ll\in\cP(\bR^d\times\bR^d\times\bR^d\times\bR^d)$ such that $F=\Op^T_\hb[(2\pi\hb)^{2d}\ll]$ is an optimal coupling of $R$ and $S$, i.e.
$$
F\in\cC(R,S)\quad\text{ and }\quad\Tr(F^{1/2}CF^{1/2})=MK_\hb(R,S)^2\,;
$$
(b) one has
$$
MK_\hb(R,S)^2=\MKd(\mu,\nu)^2+2d\hb\,;
$$
(c) the functions $\tilde a$ and $\tilde b$ are the Legendre transforms of each other, i.e. 
$$
\tilde a^*=\tilde b\quad\text{ and }\quad\tilde b^*=\tilde a\,,
$$
and satisfy the Monge-Amp\`ere equation
$$
\Det(\grad^2\tilde a)=\frac{m}{n\circ\grad\tilde a}\,.
$$

If these conditions are satisfied, $\ll$ is of the form
$$
\ll(dq_1dp_1dq_2dp_2)=m(q_1,p_1)\de((q_2,p_2)-\grad\tilde a(q_1,p_1))dq_1dp_1\,.
$$
\end{Thm}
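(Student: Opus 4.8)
The plan is to reduce the entire statement to a single computation with Töplitz couplings and then feed it into the classical Brenier–Knott–Smith picture. For $\ll\in\cP(\bR^{2d}\times\bR^{2d})$, writing $z_i=(q_i,p_i)$ and $|z_1,z_2\ra:=|q_1,p_1\ra\otimes|q_2,p_2\ra$, I would set $F_\ll:=\Op^T_\hb[(2\pi\hb)^{2d}\ll]=\int|z_1,z_2\ra\la z_1,z_2|\,\ll(dz_1dz_2)$ and record two facts. First, by the behaviour of Töplitz operators under partial traces (Lemma 4.1 in \cite{FGMouPaul}), the marginals of $F_\ll$ are the Töplitz operators with symbols the two marginals of $\ll$, so $F_\ll\in\cC(R,S)$ if and only if $\ll\in\Pi(\mu,\nu)$. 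Second, a direct Gaussian computation gives $\la z_1,z_2|C|z_1,z_2\ra=|z_1-z_2|^2+2d\hb$ (the Heisenberg spreads contribute the $2d\hb$), and since $F_\ll$ is a coupling we are in the setting of \eqref{sup=TrQC}, so monotone convergence along the regularizations $C_N$ yields the key identity $\Tr_{\fH\otimes\fH}(F_\ll^{1/2}CF_\ll^{1/2})=\iint(|z_1-z_2|^2+2d\hb)\,\ll(dz_1dz_2)$.

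With this identity, $(a)\Leftrightarrow(b)$ is essentially immediate. If $(a)$ holds, say $F=F_\ll$, then $\ll\in\Pi(\mu,\nu)$ and $MK_\hb(R,S)^2=\iint|z_1-z_2|^2\ll+2d\hb\ge\MKd(\mu,\nu)^2+2d\hb$; combined with the a priori bound $MK_\hb(R,S)^2\le\MKd(\mu,\nu)^2+2d\hb$ recalled before the theorem, this forces $(b)$ and shows $\ll$ is an optimal classical coupling. Conversely, assuming $(b)$, I would use that $\mu=m\,dz$ with $m>0$ satisfies \eqref{SmallSets}, invoke Brenier to get a convex $\tilde\phi$ with $\nu=\grad\tilde\phi\#\mu$ and $\ll:=(\mathrm{id}\times\grad\tilde\phi)\#\mu$ the unique optimal classical coupling, and then check that $F_\ll\in\cC(R,S)$ with $\Tr(F_\ll^{1/2}CF_\ll^{1/2})=\MKd(\mu,\nu)^2+2d\hb=MK_\hb(R,S)^2$, so $F_\ll$ is optimal, which is $(a)$.

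For $(b)\Rightarrow(c)$ I would run a zero-duality-gap argument. Under $(b)$ the Töplitz coupling $F=F_\ll$ with $\ll$ the Brenier coupling is optimal; pairing it against the given optimal $(\fa,\fb)\in\tilde\fK(R,S)$ and using $|q,p\ra\in\cJ[R]\cap\cJ[S]$ to justify the cyclic rearrangements, one gets $\Tr(F^{1/2}(\fa\otimes I+I\otimes\fb)F^{1/2})=\Tr(\fa R)+\Tr(\fb S)=\Tr(R^{1/2}\fa R^{1/2}+S^{1/2}\fb S^{1/2})$, which by Theorem \ref{T-ExOptiAB} and optimality equals $\Tr(F^{1/2}CF^{1/2})$, whence $\Tr(F^{1/2}(C-\fa\otimes I-I\otimes\fb)F^{1/2})=0$. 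Since $C-\fa\otimes I-I\otimes\fb\ge0$ on the completion of $\cJ_0[R]\otimes\cJ_0[S]$ (which, by hypothesis, contains the $|z_1\ra\otimes|z_2\ra$), the argument of Theorem \ref{T-CSOpt} gives $\la z_1,z_2|C-\fa\otimes I-I\otimes\fb|z_1,z_2\ra=0$ for $\ll$-a.e.\ $(z_1,z_2)$; with $\la z_1,z_2|C|z_1,z_2\ra=|z_1-z_2|^2+2d\hb$ and the definitions of $a,b,\tilde a,\tilde b$ this reads $\tilde a(z_1)+\tilde b(z_2)=z_1\cdot z_2$ on the graph of the Brenier map, while feasibility $\fa\otimes I+I\otimes\fb\le C$ tested on $|z_1\ra\otimes|z_2\ra$ gives $\tilde a(z_1)+\tilde b(z_2)\ge z_1\cdot z_2$ for all $z_1,z_2$. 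This is exactly the Knott–Smith saturation of Young's inequality: it forces $\tilde b\ge\tilde a^*$, $\tilde a\ge\tilde b^*$ everywhere with equality on the graph, hence $\tilde b=\tilde a^*$ and $\tilde a=\tilde b^*$ (first a.e., since $\mu,\nu$ have positive densities, then everywhere because $\tilde a,\tilde b$ are $C^2$), so $\grad\tilde a=\grad\tilde\phi$ is the Brenier map, $\nu=\grad\tilde a\#\mu$, and the Jacobian form of this pushforward for the $C^1$-diffeomorphism $\grad\tilde a$ is precisely $\Det(\grad^2\tilde a)=m/(n\circ\grad\tilde a)$; also $\ll=(\mathrm{id}\times\grad\tilde a)\#\mu=m(z_1)\de(z_2-\grad\tilde a(z_1))dz_1$, as claimed. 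The implication $(c)\Rightarrow(b)$ is the same chain run backwards: Monge–Ampère plus convexity gives $\grad\tilde a\#\mu=\nu$ with $\grad\tilde a$ optimal, so $\MKd(\mu,\nu)^2=\int|z-\grad\tilde a(z)|^2m$, and expanding the square using $z\cdot\grad\tilde a(z)=\tilde a(z)+\tilde b(\grad\tilde a(z))$ and the change of variables $w=\grad\tilde a(z)$ turns this into $\int(|z|^2-2\tilde a)m+\int(|z|^2-2\tilde b)n$; since $(\fa,\fb)$ is optimal, $MK_\hb(R,S)^2=\Tr(\fa R)+\Tr(\fb S)=\int a\,d\mu+\int b\,d\nu$, which by the definitions of $\tilde a,\tilde b$ equals the previous expression plus $2d\hb$, i.e.\ $(b)$.

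The hard part will not be any isolated estimate but the functional-analytic bookkeeping around the unbounded operators $\fa,\fb$: I must make sure that $\Tr(R^{1/2}\fa R^{1/2})=\Tr(\fa R)=\int a\,d\mu$ and $\Tr(F^{1/2}(\fa\otimes I+I\otimes\fb)F^{1/2})=\Tr(\fa R)+\Tr(\fb S)$ are genuinely licit on the Gelfand triples \eqref{Gelfand}, that the form inequality defining $\tilde\fK(R,S)$ may legitimately be evaluated on the coherent states $|z\ra$ (which is where the hypothesis $|q,p\ra\in\cJ[R]\cap\cJ[S]$ is used), and that the almost-everywhere identities between $\tilde a$, $\tilde b$ and their Legendre transforms upgrade to everywhere identities (which is where the $C^2$ hypothesis is used). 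The classical ingredients — Brenier's theorem (existence, uniqueness and gradient form of the optimal coupling when $\mu\ll$ Lebesgue with positive density), the Knott–Smith saturation of Young's inequality, and the Jacobian form of the Monge–Ampère equation — will be invoked off the shelf.
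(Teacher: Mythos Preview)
Your proposal is correct and follows essentially the same route as the paper: the key Töplitz identity $\Tr(F_\ll^{1/2}CF_\ll^{1/2})=\int(|z_1-z_2|^2+2d\hb)\,\ll$, then (a)$\Leftrightarrow$(b) by comparison with the a priori bound, and the link to (c) via the classical Brenier/Knott--Smith picture applied to the Husimi potentials $a,b$. One simplification worth noting: in your (b)$\Rightarrow$(c) you pass through the quantum identity $\Tr(F^{1/2}(C-\fa\otimes I-I\otimes\fb)F^{1/2})=0$ and invoke ``the argument of Theorem \ref{T-CSOpt}'' to get pointwise saturation on coherent states --- but that theorem concerns eigenvectors of $F$, not coherent states, and making sense of $F^{1/2}(\fa\otimes I)F^{1/2}$ for unbounded $\fa$ is exactly the bookkeeping you flag as hard. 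The paper sidesteps this entirely: from $\Tr(F^{1/2}CF^{1/2})=\Tr(R^{1/2}\fa R^{1/2}+S^{1/2}\fb S^{1/2})$ it reads off directly the \emph{integrated} equality $\int c\,\ll+2d\hb=\int a\,d\mu+\int b\,d\nu$, pairs it with the \emph{pointwise} inequality $c(z_1,z_2)\ge a(z_1)+b(z_2)-2d\hb$ (obtained by testing the $\tilde\fK$ constraint on $|z_1\ra\otimes|z_2\ra$), and then invokes the \emph{classical} Kantorovich duality theorem to conclude that $\ll$ is optimal and $\tilde a,\tilde b$ are Legendre duals --- no quantum zero-trace argument needed.
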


\smallskip
Let us recall that, for two T\"oplitz densities $R$ and $S$ in $\cD_2(L^2(\bR^d))$ with symbols $(2\pi\hb)^d\mu$ and $(2\pi\hb)^d\nu$ resp., it has been proved in \cite{FGMouPaul} (Theorem 2.3 (1)) that 
$$
MK_\hb(R,S)^2\le\MKd(\mu,\nu)^2+2d\hb\,.
$$
We also recall the example constructed in section 3 of \cite{CagliotiFGPaul}, where $\mu$ and $\nu$ are convex combinations of two Dirac measures with identical supports, for which the inequality above is strict. This example explains the title 
of \cite{CagliotiFGPaul}: quantum optimal transport is ``cheaper'' that classical optimal transport, due to additional degrees of freedom in quantum couplings which have no classical interpretation: see the penultimate paragraph on pp. 161--162
in \cite{CagliotiFGPaul}.

At variance with the example in section 3 of \cite{CagliotiFGPaul}, the situation where the optimal coupling between two T\"oplitz densities $R,S$ is a T\"oplitz operator is the closest to the classical setting. The classical optimal transport map
between the symbols of $R$ and $S$ is transformed into an optimal quantum coupling by T\"oplitz quantization. There are no strictly quantum effects in this coupling, unlike in the case discussed in section 3 of \cite{CagliotiFGPaul}, so that, 
the inequality in Theorem 2.3 (1) of \cite{FGMouPaul} becomes an equality in this case. In other words, apart from the additional term $2d\hb$, the quantum distance between such T\"oplitz densities is indeed the classical Monge-Kantorovich 
distance between their symbols. Examples of T\"oplitz densities satisfying properties (a) and (b) of the theorem above can be found in section 2 of \cite{CagliotiFGPaul}. However, the example constructed in section 2 of \cite{CagliotiFGPaul}
does not fall exactly in the class of densities considered in the theorem above, since the symbols of the densities considered in section 2 of \cite{CagliotiFGPaul} are convex combinations of two Dirac measures (with different supports and
equal coefficients).

\begin{proof}
Assume that (a) holds. One has
\be\lb{TrFCF=}
\Tr(F^{1/2}CF^{1/2})=\int_{\bR^{4d}}\tilde W_\hb[C](q_1,p_1,q_2,p_2)\ll(dq_1dp_1dq_2dp_2)
\ee
and
$$
\tilde W_\hb[C](q_1,p_1,q_2,p_2)=c(q_1,p_1,q_2,p_2)+2d\hb
$$
with
$$
c(q_1,p_1,q_2,p_2)=|q_1-q_2|^2+|p_1-p_2|^2\,.
$$
Let us take \eqref{TrFCF=} for granted --- we shall give a quick proof of this formula at the end of the present section.

Since $F\in\cC(R,S)$, the symbol $(2\pi\hb)^{2d}\ll$ of $F$ satisfies $\ll\in\Pi(\mu,\nu)$, the set of couplings of $\mu$ and $\nu$, according to Lemma 4.1 in \cite{FGMouPaul}. Hence
$$
MK_\hb(R,S)^2=\int_{\bR^{4d}}c(q_1,p_1,q_2,p_2)\ll(dq_1dp_1dq_2dp_2)+2d\hb\ge \MKd(\mu,\nu)^2+2d\hb
$$
By Theorem 2.3 (1) of \cite{FGMouPaul}, one has
$$
MK_\hb(R,S)^2\le \MKd(\mu,\nu)^2+2d\hb\,,
$$
which proves (b).

Conversely, pick an optimal coupling $\ll\in\Pi(\mu,\nu)$ and set $F=\Op^T_\hb[(2\pi\hb)^{2d}\ll]$. Then $F\in\cC(R,S)$ by Lemma 4.1 in \cite{FGMouPaul}, and
$$
\MKd(\mu,\nu)^2\!+\!2d\hb\!=\!\int_{\bR^{4d}}c(q_1,p_1,q_2,p_2)\ll(dq_1dp_1dq_2dp_2)\!+\!2d\hb\!=\!\Tr(F^\frac12CF^\frac12)\,.
$$
Hence, (b) implies that
$$
MK_\hb(R,S)^2=\Tr(F^{1/2}CF^{1/2})\,,
$$
so that (a) holds.

If (a) holds, then
$$
\ba
\int_{\bR^{4d}}c(q_1,p_1,q_2,p_2)\ll(dq_1dp_1dq_2dp_2)+2d\hb
\\
=\Tr(F^{1/2}CF^{1/2})=\Tr(R^{1/2}\fa R^{1/2}+S^{1/2}\fb S^{1/2})
\\
=\int_{\bR^{2d}}a(q_1,p_1)\mu(dq_1dp_1)+\int_{\bR^{2d}}b(q_2,p_2)\nu(dq_2dp_2)&\,.
\ea
$$
On the other hand, since $(\fa,\fb)\in\tilde\fK(R,S)$, and since $|q,p\ra\in\cJ[R]\cap\cJ[S]$, one has
$$
\la q_1,q_2,p_1,p_2|C|q_1,q_2,p_1,p_2\ra\ge\la q_1,p_1|\fa|q_1,p_1\ra+\la q_2,p_2|\fb|q_2,p_2\ra
$$
i.e.
$$
c(q_1,p_1,q_2,p_2)\ge a(q_1,p_1)-d\hb+b(q_2,p_2)-d\hb\,.
$$
Since $a\in L^1(\bR^{2d},\mu)$ and $b\in L^1(\bR^{2d},\nu)$ and $\ll\in\Pi(\mu,\nu)$ with
$$
\ba
\int_{\bR^{4d}}c(q_1,p_1,q_2,p_2)\ll(dq_1dp_1dq_2dp_2)=&\int_{\bR^{2d}}(a(q_1,p_1)-d\hb)\mu(dq_1dp_1)
\\
&+\int_{\bR^{2d}}(b(q_2,p_2)-d\hb)\nu(dq_2dp_2)\,,
\ea
$$
we conclude from Theorem 1.3 in \cite{VillaniAMS} (the Kantorovich duality theorem) that $\ll$ is an optimal element of $\Pi(\mu,\nu)$, and that the optimal functions $\tilde a$ and $\tilde b$ are Legendre duals of each other (see Lemma 2.10
in \cite{VillaniAMS}).
 
By the Brenier theorem (Theorem 2.12 (ii) in \cite{VillaniAMS}, the measure $\ll$ is of the form
$$
\ll(dq_1dp_1dq_2dp_2)=m(q_1,p_1)\de((q_2,p_2)-\grad\Phi(q_1,p_1))dq_1dp_1\,,
$$
with $\Phi$ convex. Hence
$$
\tilde a(q_1,p_1)+\tilde b(\grad\Phi(q_1,p_1))=q_1\cdot\grad_q\Phi(q_1,p_1)+p_1\cdot\grad_p\Phi(q_1,p_1)\quad\text{ for a.e. }(q_1,p_1)\,.
$$
On the other hand, we know that
$$
\tilde a(z,\zeta)+\tilde b(\grad\Phi(q_1,p_1))\ge z\cdot\grad_q\Phi(q_1,p_1)+\zeta\cdot\grad_p\Phi(q_1,p_1)\quad\text{ for a.e. }(q_1,p_1,z,\zeta)\,.
$$
Therefore
$$
\tilde a(z,\zeta)-a(q_1,p_1)\ge(z-q_1)\cdot\grad_q\Phi(q_1,p_1)+(\zeta-p_1)\cdot\grad_p\Phi(q_1,p_1)\quad\text{ for a.e. }(q_1,p_1,z,\zeta)\,.
$$
Hence $\grad\Phi=\grad\tilde a$, and since $\grad\Phi\#\mu=\nu$, the change of variables formula implies that
$$
\Det(\grad^2\Phi)n\circ\grad\Phi=m\,.
$$
This proves (c).

Conversely, assume that (c) holds and set 
$$
\ll(dq_1dp_1dq_2dp_2)=m(q_1,p_1)\de((q_2,p_2)-\grad\tilde a(q_1,p_1))dq_1dp_1\,.
$$
Obviously, $\ll\in\Pi(\mu,\nu)$ because of the Monge-Amp\`ere equation satisfied by $\tilde a$, and Brenier's theorem implies that
$$
\int_{\bR^{4d}}c(q_1,p_1,q_2,p_2)\ll(dq_1dp_1dq_2dp_2)=\MKd(\mu,\nu)^2\,.
$$
Set $F=\Op^T_\hb[(2\pi\hb)^{2d}\ll]$; by Lemma 4.1 in \cite{FGMouPaul}, one has $F\in\cC(R,S)$. On the other hand, since $\tilde a^*=\tilde b$ and $\tilde b^*=\tilde a$, one has
$$
\tilde a(q_1,p_1)+\tilde b(q_2,p_2)=q_1\cdot q_2+p_1\cdot p_2\qquad\ll-\text{ a.e. in }(q_1,p_1,q_2,p_2)
$$
or equivalently
$$
\ba
\int_{\bR^{4d}}c(q_1,p_1,q_2,p_2)\ll(dq_1dp_1dq_2dp_2)+2d\hb=&\int_{\bR^{2d}}a(q_1,p_1)\mu(dq_1dp_1)
\\
&+\int_{\bR^{2d}}b(q_2,p_2)\nu(dq_2dp_2)\,.
\ea
$$
Since $a=\tilde W_{\hbar} [\mathfrak{a}]$ and $b=\tilde W_{\hbar}[\fb ]$, this equality can be recast as
$$
\Tr(F^{1/2}CF^{1/2})=\Tr(R^{1/2}\fa R^{1/2}+S^{1/2}\fb S^{1/2})=MK_\hb(R,S)^2\,,
$$
so that $F$ is an optimal element of $\cC(R,S)$, and (a) holds.
\end{proof}

\begin{proof}[Proof of \eqref{TrFCF=}]
Let $(e_j)_{j\ge 1}$ be a complete orthonormal system of eigenvectors of $F\in\cD_2(\fH\otimes\fH)$, so that
$$
F=\sum_{j\ge 1}\ell_j|e_j\ra\la e_j|\,,\qquad\text{ with }\sum_{j\ge 1}\ell_j=1\text{ and }\ell_j\ge 0\text{ for all }j\ge 1\,.
$$
On the other hand, by formula (48) in \cite{FGMouPaul}
$$
\Op^T_\hb[c]=C+\tfrac12\hb(\Dlt_{q_1,p_1,q_2,p_2}c)I_{\fH\otimes\fH}=C+4d\hb I\,,
$$
where we recall that 
$$
c(q_1,p_1,q_2,p_2)=|q_1-q_2|^2+|p_1-p_2|^2\,.
$$

By Tonelli's theorem, denoting $z_1:=(q_1,p_1)$ and $z_2:=(q_2,p_2)$, one has
$$
\ba
\Tr(F^{1/2}CF^{1/2})+4d\hb
\\
=\sum_{j,k\ge 1}\tfrac{\ell_j^{1/2}\ell_k^{1/2}}{(2\pi\hbar)^{2d}}\int_{\bR^{4d}}\la e_j|z_1,z_2\ra\la z_1,z_2|e_k\ra\la e_k|e_j\ra c(z_1,z_2)dq_1dp_1dq_2dp_2
\\
=\sum_{j\ge 1}\tfrac{\ell_j}{(2\pi\hbar)^{2d}}\int_{\bR^{4d}}\la e_j|z_1,z_2\ra\la z_1,z_2|e_j\ra c(z_1,z_2)dq_1dp_1dq_2dp_2
\\
=\int_{\bR^{4d}}\sum_{j\ge 1}\tfrac{\ell_j}{(2\pi\hbar)^{2d}}\la e_j|z_1,z_2\ra\la z_1,z_2|e_j\ra c(z_1,z_2)dq_1dp_1dq_2dp_2
\\
=\int_{\bR^{4d}}\tilde W[F](z_1,z_2)c(z_1,z_2)dq_1dp_1dq_2dp_2&\,.
\ea
$$
Since $F=\Op^T_\hb[(2\pi\hb)^{2d}\ll]$, one has $\tilde W_\hb[F]=e^{\frac12\hb\Dlt_{q_1,p_1,q_2,p_2}}\ll$ (see (51) and the formula following (53) in \cite{FGMouPaul}), so that
$$
\ba
\Tr(F^{1/2}CF^{1/2})+4d\hb=\int_{\bR^{4d}}\tilde W[F](z_1,z_2)c(z_1,z_2)dq_1dp_1dq_2dp_2
\\
=\int_{\bR^{4d}}e^{\frac12\hb\Dlt_{q_1,p_1,q_2,p_2}}c(q_1,p_1,q_2,p_2)\ll(dq_1dp_1dq_2dp_2)
\ea
$$
since $e^{\frac12\hb\Dlt_{q_1,p_1,q_2,p_2}}$ is self-adjoint. On the other hand
$$
e^{\frac12\hb\Dlt_{q_1,p_1,q_2,p_2}}c(q_1,p_1,q_2,p_2)=c(q_1,p_1,q_2,p_2)+4d\hb\,,
$$
so that
$$
\ba
\Tr(F^{1/2}CF^{1/2})+4d\hb=\int_{\bR^{4d}}(c(q_1,p_1,q_2,p_2)+4d\hb)\ll(dq_1dp_1dq_2dp_2)\,,
\ea
$$
which is equivalent to \eqref{TrFCF=}.

\end{proof}


\begin{appendix}


\section{The Quantum Transport Cost}\lb{S-QTCost}


The quantum cost is the differential operator
$$
C:=\sum_{j=1}^d((x_j-y_j)-\hbar^2(\d_{x_j}-\d_{y_j})^2)\,.
$$
For each $f\equiv f(x_1,y_1,\ldots,x_d,y_d)\in\cS(\bR^{2d})$, one has
$$
\ba
Cf(x_1,y_1,\ldots,x_d,y_d)
\\
=\sum_{j=1}^d(Y_j^2-4\hbar^2\d_{Y_j}^2)f(X_1+\tfrac12Y_1,X_1-\tfrac12Y_1,\ldots,X_d+\tfrac12Y_d,X_d-\tfrac12Y_d)&\,.
\ea
$$
The $d$ operators $Y_j^2-4\hbar^2\d_{Y_j}^2$ obviously commute pairwise. Since each one of these operators is the quantum Hamiltonian of a harmonic oscillator, we know that a complete orthonormal system of eigenfunctions for
$Y_j^2-4\hbar^2\d_{Y_j}^2$ on $L^2(\bR)$ is 
$$
(2\hbar)^{-1/4}h_n(Y_j/\sqrt{2\hbar})\,,\quad n\ge 0\,,
$$
where $h_n$ is the $n$-th Hermite function
$$
h_n(z):=\pi^{-1/4}(2^nn!)^{-1/2}e^{-z^2/2}H_n(z)\,,\quad H_n(z):=(-1)^ne^{z^2}(e^{-z^2})^{(n)}\,,
$$
$$
h_n(Y_j):=(\pi 2\hbar)^{-1/4}(2^nn!)^{-1/2}e^{-Y_j^2/4\hbar}H_n(Y_j/\sqrt{2\hbar})\,\quad n\ge 0\,,
$$
with
$$
(Y_j^2-4\hbar^2\d_{Y_j}^2)h_n(Y_j/\sqrt{2\hbar})=2\hbar(2n+1)h_n(Y_j/\sqrt{2\hbar})\,,\quad n\ge 0\,.
$$
Since the linear transformation of $\bR^{2d}$
$$
(X_1,Y_1,\ldots,X_d,Y_d)\mapsto(X_1+\tfrac12Y_1,X_1-\tfrac12Y_1,\ldots,X_d+\tfrac12Y_d,X_d-\tfrac12Y_d)
$$
has Jacobian determinant $(-1)^d$, it leaves the Lebesgue measure of $\bR^{2d}$ invariant, so that 
$$
\Psi_{m_1,\ldots,m_d,n_1,\ldots,n_d}(x_1,y_1,\ldots,x_d,y_d):=(2\hbar)^{-d/2}\prod_{j=1}^dh_{m_j}(\tfrac{x_j+y_j}{2\sqrt{2\hbar}})h_{n_j}(\tfrac{x_j-y_j}{\sqrt{2\hbar}})
$$
defines a complete orthonormal system of eigenfunctions of $C$, i.e.
$$
\ba
\int_{\bR^{2d}}\overline{\Psi_{m'_1,\ldots,m'_d,n'_1,\ldots,n'_d}}\Psi_{m_1,\ldots,m_d,n_1,\ldots,n_d}(x_1,y_1,\ldots,x_d,y_d)dx_1\ldots dy_d
\\
=\prod_{j=1}^d\de_{m'_j,m_j}\de_{n'_j,n_j}&\,,
\ea
$$
with
$$
\ba
C\Psi_{m_1,\ldots,m_d,n_1,\ldots,n_d}(x_1,y_1,\ldots,x_d,y_d)
\\
=2\hbar(2(n_1+\ldots+n_d)+d)\Psi_{m_1,\ldots,m_d,n_1,\ldots,n_d}(x_1,y_1,\ldots,x_d,y_d)&\,.
\ea
$$
Thus
$$
C=2\hb\sum_{\genfrac{}{}{0pt}{2}{m_1,\ldots,m_d\ge 0}{n_1,\ldots,n_d\ge 0}}(2(n_1+\ldots+n_d)+d)|\Psi_{m_1,\ldots,m_d,n_1,\ldots,n_d}\ra\la \Psi_{m_1,\ldots,m_d,n_1,\ldots,n_d}|\,;
$$
in other words, $C$ has the spectral decomposition
$$
E(d\ll)\!\!:=\!\!\sum_{\genfrac{}{}{0pt}{2}{m_1,\ldots,m_d\ge 0}{n_1,\ldots,n_d\ge 0}}\de(\ll-2\hb(2(n_1\!+\!\ldots\!+\!n_d)+d))|\Psi_{m_1,\ldots,m_d,n_1,\ldots,n_d}\ra\la \Psi_{m_1,\ldots,m_d,n_1,\ldots,n_d}|\,.
$$


\section{Monotone Convergence for Trace-Class Operators}\label{monconv}


{\tcb Here is an \tcee}analogue of the Beppo Levi monotone convergence theorem for operators in the form convenient for our purpose.

Let $\scrH$ be a separable Hilbert space and $0\le T=T^*\in\cL(\scrH)$. For each complete orthonormal system $(e_j)_{j\ge 1}$ of $\scrH$, set
$$
\Tr_\cH(T)=\|T\|_1:=\sum_{j\ge 1}\la e_j|T|e_j\ra\in[0,+\infty]\,.
$$
See Theorem 2.14 in \cite{Simon}; in particular the expression on the last right hand side of these equalities is independent of the complete orthonormal system $(e_j)_{j\ge 1}$. Then
$$
T\in\cL^1(\scrH)\iff\|T\|_1<\infty\,.
$$

\begin{Lem}[Monotone convergence]\lb{L-Monotone}
Consider a sequence $T_n=T_n^*\in\cL^1(\scrH)$ such that
$$
0\le T_1\le T_2\le\ldots\le T_n\le\ldots\,,\qquad\text{ and }\sup_{n\ge 1}\la x|T_n|x\ra<+\infty\text{ for all }x\in\scrH\,,\leqno{\text{(i)}}
$$
or
$$
0\le T_1\le T_2\le\ldots\le T_n\le\ldots\,,\qquad\text{ and }\sup_{n\ge 1}\Tr_\cH(T_n)<+\infty\,.\leqno{\text{(ii)}}
$$
Then

\noindent
(a) there exists $T=T^*\in\cL(\scrH)$ such that $T\ge 0$ and $T_n\to T$ weakly as $n\to\infty$, and

\noindent
(b) $\Tr_\scrH(T_n)\to\Tr_\scrH(T)$ as $n\to\infty$.
\end{Lem}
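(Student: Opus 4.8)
The plan is to prove the two cases with a single argument, after first observing that hypothesis (ii) is stronger than (i). Indeed, if $\sup_{n\ge1}\Tr_\scrH(T_n)<+\infty$, then for any unit vector $x$, completing $x$ to a complete orthonormal system shows $\la x|T_n|x\ra\le\Tr_\scrH(T_n)\le\sup_n\Tr_\scrH(T_n)<+\infty$; so (ii) implies (i), and it suffices to establish (a) and (b) under (i) alone, recording along the way that under (ii) the limit operator lies in $\cL^1(\scrH)$.

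For part (a): by monotonicity, $n\mapsto\la x|T_n|x\ra$ is nondecreasing, hence convergent by (i); by the polarization identity $\la x|T_n|y\ra$ converges for every $x,y\in\scrH$, to a limit $B(x,y)$ which is a nonnegative Hermitian sesquilinear form. The one genuinely non-formal point is to show $B$ is bounded, so that it is represented by a bounded operator. I would do this by applying the uniform boundedness principle twice: for fixed $x$, the continuous functionals $y\mapsto\la y|T_n|x\ra$ are pointwise bounded in $n$ (again by polarization and (i)), hence uniformly bounded, i.e. $\sup_n\|T_nx\|<+\infty$; then the family $\{T_n\}$ is pointwise bounded, so $K:=\sup_n\|T_n\|<+\infty$. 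Then $|B(x,y)|\le K\|x\|\,\|y\|$, and the Riesz representation theorem yields $T\in\cL(\scrH)$ with $\la x|T|y\ra=B(x,y)$; self-adjointness and nonnegativity of $B$ pass to $T$, and $T_n\to T$ weakly by construction.

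For part (b): first note $T\ge T_n$ for every $n$, since $T_m-T_n\ge0$ for $m\ge n$ and $T-T_n$ is the weak limit as $m\to\infty$ of these nonnegative operators, hence nonnegative. Fix a complete orthonormal system $(e_j)_{j\ge1}$. The sequence $n\mapsto\Tr_\scrH(T_n)$ is nondecreasing (termwise, from $T_{n+1}\ge T_n\ge0$), so it has a limit $L\in[0,+\infty]$; and $T_n\le T$ gives $\Tr_\scrH(T_n)\le\Tr_\scrH(T)$ for each $n$, hence $L\le\Tr_\scrH(T)$. Conversely, for each finite $N$, interchanging a finite sum with the weak limit,
$$
\sum_{j=1}^N\la e_j|T|e_j\ra=\lim_{n\to\infty}\sum_{j=1}^N\la e_j|T_n|e_j\ra\le\lim_{n\to\infty}\Tr_\scrH(T_n)=L\,,
$$
and letting $N\to\infty$ gives $\Tr_\scrH(T)\le L$. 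Hence $\Tr_\scrH(T)=L=\lim_n\Tr_\scrH(T_n)$, which is (b); under (ii) this common value is finite, so $T\in\cL^1(\scrH)$, whereas under (i) it may be $+\infty$, consistently with the convention $\Tr_\scrH(\cdot)\in[0,+\infty]$ recalled at the start of this appendix.

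The hard part is really only the boundedness of $B$ in step (a); the double use of the uniform boundedness principle is the cleanest route, the key being that for a nonnegative $T_n$ one has $\|T_nx\|=\sup_{\|y\|\le1}|\la y|T_n|x\ra|$, with the right-hand side controlled pointwise in $n$ by the hypothesis through polarization. The rest is monotone-convergence bookkeeping, and in particular one should be careful to commute limits with \emph{finite} sums only, invoking weak convergence termwise, before passing to the supremum over $N$.
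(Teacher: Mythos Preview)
Your proof is correct and follows the same overall architecture as the paper's: establish convergence of the quadratic form via monotonicity, recover the sesquilinear limit by polarization, prove boundedness to invoke Riesz, then compare traces. The implementation differs in two places worth noting. For the boundedness of the limiting form, the paper argues directly with Baire's theorem on the closed sets $F_k=\{x:\la x|T_n|x\ra\le k\text{ for all }n\}$ to locate a ball on which the quadratic forms are uniformly bounded; your double application of the uniform boundedness principle is an equivalent (and arguably cleaner) route, since UBP is itself a Baire consequence. For part (b), the paper obtains $\Tr_\scrH(T)\le\sup_n\Tr_\scrH(T_n)$ by citing the noncommutative Fatou lemma (Theorem 2.7(d) in Simon), whereas you prove this inequality directly by passing to the limit in finite partial sums --- your argument is self-contained and avoids the external reference, at no real cost.
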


\begin{proof}
First we prove statements (a) and (b) under assumption (i). Since the sequence $\la x|T_n|x\ra\in[0,+\infty)$ is nondecreasing for each $x\in\scrH$, 
$$
\la x|T_n|x\ra\to\sup_{n\ge 1}\la x|T_n|x\ra=:q(x)\in[0,+\infty)\quad\text{ for all }x\in\scrH
$$
as $n\to\infty$. Hence
$$
\la x|T_n|y\ra=\la y|T_n|x\ra\to\tfrac14(q(x+y)-q(x-y)+iq(x-iy)-iq(x+iy))=:b(x,y)\in\bC
$$
as $n\to+\infty$. By construction, $b$ is a nonnegative sesquilinear form on $\scrH$. 

Consider, for each $k\ge 0$,
$$
F_k:=\{x\in\scrH\text{ s.t. }\la x|T_n|x\ra\le k\text{ for each }n\ge 1\}\,.
$$
The set $F_k$ is closed for each $k\ge 0$, being the intersection of the closed sets defined by the inequality $\la x|T_n|x\ra\le k$ as $n\ge 1$. Since the sequence $\la x|T_n|x\ra$ is bounded for each $x\in\scrH$, 
$$
\bigcup_{k\ge 0}F_k=\scrH\,.
$$
Applying Baire's theorem shows that there exists $N\ge 0$ such that $\mathring{F}_N\not=\varnothing$. In other words, there exists $r>0$ and $x_0\in\scrH$ such that
$$
|x-x_0|\le r\implies|\la x|T_n|x\ra|\le N\text{ for all }n\ge 1\,.
$$
By linearity and positivity of $T_n$, this implies
$$
|\la z|T_n|z\ra|\le \tfrac2r(M+N)\|z\|^2\text{ for all }n\ge 1\,,\quad\text{ with }M:=\sup_{n\ge 1}\la x_0|T_n|x_0\ra\,.
$$
In particular
$$
\sup_{|z|\le 1}q(z)\le\tfrac2r(M+N)\,,\quad\text{ so that }|b(x,y)|\le\frac2r(M+N)|\|x\|_\scrH\|y\|_\scrH
$$
for each $x,y\in\scrH$ by the Cauchy-Schwarz inequality. By the Riesz representation theorem, there exists $T\in\cL(\scrH)$ such that
$$
T=T^*\ge 0\,,\quad\text{ and }\quad b(x,y)=\la x|T|y\ra\,.
$$
This proves (a). Observe that $T\ge T_n$ for each $n\ge 1$, so that 
$$
\sup_{n\ge 1}\Tr_\scrH(T_n)\le\Tr_\scrH(T)\,.
$$
In particular
$$
\sup_{n\ge 1}\Tr_\scrH(T_n)=+\infty\implies\Tr_\scrH(T)=+\infty\,.
$$
Since the sequence $\Tr_\scrH(T_n)$ is nondecreasing,
$$
\Tr_\scrH(T_n)\to\sup_{n\ge 1}\Tr_\scrH(T_n)\quad\text{ as }n\to\infty\,.
$$
By the noncommutative variant of Fatou's lemma (Theorem 2.7 (d) in \cite{Simon}), 
$$
\sup_{n\ge 1}\Tr_\scrH(T_n)<\infty\implies T\in\cL^1(\scrH)\text{ and }\Tr_\scrH(T)\le\sup_{n\ge 1}\Tr_\scrH(T_n)\,.
$$
Since the opposite inequality is already known to hold, this proves (b).

\smallskip
Next we prove (a) and (b) under assumption (ii). Since any $x\in\scrH\setminus\{0\}$ can be normalized and completed into a complete orthonormal system of $\scrH$, one has
$$
\sup_{n\ge 1}\la x|T|x\ra\le\|x\|_\scrH^2\sup_{n\ge 1}\Tr_{\scrH}(T_n)<\infty\,.
$$
Thus, assumption (ii) implies (i), which implies in turn (a) and (b).
\end{proof}


\section{The Finite Energy Condition}


Let $A=A^*\ge 0$ be an unbounded self-adjoint operator on $\scrH$ with domain $\Dom(A)$, and let $E$ be its spectral decomposition.

Throughout this section, we assume that $T\in\cL^1(\scrH)$ satisfies $T=T^*\ge 0$, and let $(e_j)_{j\ge 1}$ be a complete orthonormal system of eigenvectors of $T$ with $Te_j=\tau_je_j$ and $\tau_j\in[0,+\infty)$ for each $j\ge 1$, 
such that
\be\lb{TrTA}
\sum_{j\ge 1}\tau_j\int_0^\infty \ll\la e_j|E(d\ll)|e_j\ra<\infty\,.
\ee

\begin{Lem}\lb{L-Energ}
Under the assumptions above,
\be\lb{T1/2AT1/2=}
T^{1/2}AT^{1/2}:=\sum_{j,k\ge 1}\tau_j^{1/2}\tau_k^{1/2}\left(\int_0^\infty\ll\la e_j|E(d\ll)|e_k\ra\right)|e_j\ra\la e_k|
\ee
satisfies $0\le T^{1/2}AT^{1/2}=(T^{1/2}AT^{1/2})^*\in\cL^1(\scrH)$ and 
\be\lb{TrT1/2AT1/2=}
\Tr_\scrH(T^{1/2}AT^{1/2})=\sum_{j\ge 1}\tau_j\int_0^\infty \ll\la e_j|E(d\ll)|e_j\ra\,.
\ee
In particular
\be\lb{J0infDom}
e_j\in\Ker(T)^\perp\implies e_j\in\fDom(A)\,.
\ee
\end{Lem}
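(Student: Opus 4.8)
The plan is to approximate $A$ by its bounded spectral truncations and pass to the limit with the monotone convergence theorem of Appendix~\ref{monconv}. For $N\ge1$ put $A_N:=\int_0^N\ll\,E(d\ll)\in\cL(\scrH)$, so that $0\le A_1\le A_2\le\cdots$ and $\la x|A_N|x\ra=\int_0^N\ll\la x|E(d\ll)|x\ra$ increases to $\int_0^\infty\ll\la x|E(d\ll)|x\ra$ for every $x\in\scrH$. Since $T\in\cL^1(\scrH)$, the operator $T^{1/2}$ is Hilbert--Schmidt, hence so is $A_N^{1/2}T^{1/2}$, and therefore $T^{1/2}A_NT^{1/2}=(A_N^{1/2}T^{1/2})^*(A_N^{1/2}T^{1/2})$ is a nonnegative self-adjoint trace-class operator with
\[
\Tr_\scrH(T^{1/2}A_NT^{1/2})=\sum_{j\ge1}\tau_j\la e_j|A_N|e_j\ra=\sum_{j\ge1}\tau_j\int_0^N\ll\la e_j|E(d\ll)|e_j\ra,
\]
computed on the eigenbasis $(e_j)$ of $T$. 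Moreover $N\le M$ gives $A_N\le A_M$, hence $T^{1/2}A_NT^{1/2}\le T^{1/2}A_MT^{1/2}$ by conjugation.

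By monotone convergence (over the counting measure and over each Lebesgue--Stieltjes measure $\ll\,d\la e_j|E(\cdot)|e_j\ra$), the traces above increase, as $N\to\infty$, to $\sum_{j\ge1}\tau_j\int_0^\infty\ll\la e_j|E(d\ll)|e_j\ra$, which is finite by hypothesis \eqref{TrTA}. Thus Lemma~\ref{L-Monotone}, case~(ii), applies to the sequence $(T^{1/2}A_NT^{1/2})_{N\ge1}$ and produces $G=G^*\ge0$ in $\cL^1(\scrH)$ with $T^{1/2}A_NT^{1/2}\to G$ in the weak operator topology and $\Tr_\scrH(T^{1/2}A_NT^{1/2})\to\Tr_\scrH(G)$; this last convergence is exactly \eqref{TrT1/2AT1/2=} once $G$ has been identified with $T^{1/2}AT^{1/2}$.

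For that identification and for \eqref{J0infDom}, fix $j,k\ge1$. If $\tau_j>0$, i.e. $e_j\in\Ker(T)^\perp$, then \eqref{TrTA} forces $\int_0^\infty\ll\la e_j|E(d\ll)|e_j\ra<\infty$, which says precisely that $e_j\in\Dom(A^{1/2})=\fDom(A)$: this is \eqref{J0infDom}. When $\tau_j,\tau_k>0$, the vectors $\indc_{[0,N]}(A)A^{1/2}e_j$ converge in $\scrH$ to $A^{1/2}e_j$, so $\int_0^N\ll\la e_j|E(d\ll)|e_k\ra=\la\indc_{[0,N]}(A)A^{1/2}e_j\,|\,\indc_{[0,N]}(A)A^{1/2}e_k\ra\to\la A^{1/2}e_j|A^{1/2}e_k\ra$; combining with weak convergence of $T^{1/2}A_NT^{1/2}$ gives $\la e_j|G|e_k\ra=\lim_N\tau_j^{1/2}\tau_k^{1/2}\la e_j|A_N|e_k\ra=\tau_j^{1/2}\tau_k^{1/2}\la A^{1/2}e_j|A^{1/2}e_k\ra$. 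If $\tau_j=0$ or $\tau_k=0$, both $\la e_j|G|e_k\ra$ and the $(j,k)$ term of the series in \eqref{T1/2AT1/2=} vanish, because of the factor $\tau_j^{1/2}\tau_k^{1/2}$. Hence $G$ has exactly the matrix entries prescribed in \eqref{T1/2AT1/2=}, so it coincides with the operator denoted there $T^{1/2}AT^{1/2}$; together with the properties of $G$ already obtained, this yields all the assertions of the lemma.

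The one point requiring care is the very meaning of the off-diagonal spectral integrals $\int_0^\infty\ll\la e_j|E(d\ll)|e_k\ra$ appearing in \eqref{T1/2AT1/2=}: the complex measures $\la e_j|E(\cdot)|e_k\ra$ need not integrate $\ll$ with finite total variation in the abstract, but once one knows $e_j,e_k\in\Dom(A^{1/2})$ — which is where hypothesis \eqref{TrTA} is genuinely used — these integrals are unambiguously $\la A^{1/2}e_j|A^{1/2}e_k\ra$, equivalently the limit of the truncated integrals, and this is precisely what the truncation scheme delivers. The remaining ingredients (Hilbert--Schmidt bookkeeping, cyclicity of the trace, and the monotone convergence lemma) are routine.
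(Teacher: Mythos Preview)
Your proof is correct and takes a genuinely different route from the paper's. The paper argues directly: it uses the Cauchy--Schwarz inequality for the projection-valued measure, $|\la e_j|E(\om)|e_k\ra|\le\la e_j|E(\om)|e_j\ra^{1/2}\la e_k|E(\om)|e_k\ra^{1/2}$, to obtain the matrix-entry bound $|a_{jk}|^2\le a_{jj}a_{kk}$, from which it reads off self-adjointness, membership in $\cL^2(\scrH)$, nonnegativity, and the trace formula by explicit summation on the eigenbasis. No approximation and no appeal to Lemma~\ref{L-Monotone} are made.

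You instead truncate $A$ to the bounded $A_N=\int_0^N\ll\,E(d\ll)$, observe that $T^{1/2}A_NT^{1/2}$ is an increasing sequence of nonnegative trace-class operators with uniformly bounded trace, invoke Lemma~\ref{L-Monotone} to obtain a weak limit $G\in\cL^1(\scrH)$ with the correct trace, and then identify the matrix entries of $G$ with those of the defining series \eqref{T1/2AT1/2=}. What your approach buys is that nonnegativity, self-adjointness, trace-class membership, and the trace formula all come for free from the monotone-convergence lemma; the paper's direct approach, by contrast, is self-contained and does not rely on Appendix~\ref{monconv}. Your argument is in effect the method the paper later uses for Corollary~\ref{C-Energ}, transplanted to prove Lemma~\ref{L-Energ} itself. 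Your closing remark, that the off-diagonal integrals $\int_0^\infty\ll\la e_j|E(d\ll)|e_k\ra$ only acquire a definite value once $e_j,e_k\in\fDom(A)$, is well taken and handles a point the paper glosses over.
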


\begin{proof}
For each Borel $\om\subset\bR$ and each $x,y\in\scrH$, one has
$$
|\la x|E(\om)|y\ra|=|\la E(\om)x|E(\om)y\ra|\le\|E(\om)x\|\|E(\om)y\|=\la x|E(\om)|x\ra^{1/2}\la y|E(\om)|y\ra^{1/2}
$$
since $E(\om)$ is a self-adjoint projection. In particular, for each $\a>0$, one has
$$
2|\la x|E(\om)|y\ra|\le\a\la x|E(\om)|x\ra+\tfrac1\a\la y|E(\om)|y\ra\,.
$$
Hence, for all $j,k\ge 1$
$$
a_{jk}:=\int_0^\infty\ll\la e_j|E(d\ll)|e_k\ra\in\bC
$$
satisfies 
$$
2|a_{jk}|^2\le\a a_{jj}+\tfrac1\a a_{kk}\text{ for all }\a>0\,,\text{ so that }|a_{jk}|^2\le a_{jj}a_{kk}\,.
$$
Since $(\tau_ja_{jj})_{j\ge 1}\in\ell^1(\bN^*)$ by \eqref{TrTA} and since 
$$
\la e_j|T^{1/2}AT^{1/2}|e_k\ra=\tau_j^{1/2}\tau_k^{1/2}a_{jk}=\overline{\la e_k|T^{1/2}AT^{1/2}|e_j\ra}\,,
$$
one concludes that $T^{1/2}AT^{1/2}=(T^{1/2}AT^{1/2})^*\in\cL^2(\scrH)$. Moreover, for each $x\in\scrH$
$$
\ba
\la x|T^{1/2}AT^{1/2}|x\ra=&\sum_{j,k\ge 1}\tau_j^{1/2}\tau_k^{1/2}\overline{\la e_j|x\ra}\la e_k|x\ra\int_0^\infty\ll\la e_j|E(d\ll)|e_k\ra
\\
\ge&\int_0^\infty\ll\La\sum_{j\ge 1}\tau_j^{1/2}\la e_j|x\ra e_j|E(d\ll)| \sum_{j\ge 1}\tau_j^{1/2}\la e_j|x\ra e_j\Ra
\\
=&\int_0^\infty\ll\la T^{1/2}x|E(d\ll)| T^{1/2}x \ra\ge 0\,,
\ea
$$
so that $T^{1/2}AT^{1/2}\ge 0$. Finally
$$
\ba
\sum_{l\ge 1}\la e_l|T^{1/2}AT^{1/2}|e_l\ra=\sum_{l\ge 1}\sum_{j,k\ge 1}\tau_j^{1/2}\tau_k^{1/2}\left(\int_0^\infty\ll\la e_j|E(d\ll)|e_k\ra\right)\la e_l|e_j\ra\la e_k|e_l\ra
\\
=\sum_{l\ge 1}\sum_{j,k\ge 1}\tau_j^{1/2}\tau_k^{1/2}\left(\int_0^\infty\ll\la e_j|E(d\ll)|e_k\ra\right)\de_{lj}\de_{lk}=\sum_{l\ge 1}\tau_l\int_0^\infty\ll\la e_l|E(d\ll)|e_l\ra&<\infty
\ea
$$
so that $T^{1/2}AT^{1/2}\in\cL^1(\scrH)$, with
$$
\|T^{1/2}AT^{1/2}\|_1=\Tr_{\scrH}(T^{1/2}AT^{1/2})=\sum_{l\ge 1}\tau_l\int_0^\infty\ll\la e_l|E(d\ll)|e_l\ra<\infty\,.
$$
In particular for each $j\ge 1$, one has
$$
\tau_j>0\implies\la e_j|A|e_j\ra:=\int_0^\infty \ll\la e_j|E(d\ll)|e_j\ra\le\tfrac1{\tau_j}\Tr_\scrH(T^{1/2}AT^{1/2})<\infty\,,
$$
and this proves \eqref{J0infDom}.
\end{proof}

\begin{Cor}\lb{C-Energ}
Let $\Phi_n\in C(\bR_+)$ satisfy
$$
0\le\Phi_1(r)\le\Phi_2(r)\le\ldots\le\Phi_n(r)\to r\quad\text{ as }n\to\infty\,.
$$
For each $n\ge 1$, set
$$
\Phi_n(A):=\int_0^\infty\Phi_n(\ll)E(d\ll)\in\cL(\scrH)\,,\text{ so that }\Phi_n(A)=\Phi_n(A)^*\ge 0\,.
$$
In the limit as $n\to\infty$, one has
$$
T^{\frac12}\Phi_n(A)T^{\frac12}\to T^{\frac12}AT^{\frac12}\text{ weakly, and }\Tr_\scrH(T\Phi_n(A))\to\Tr_\scrH(T^{\frac12}AT^{\frac12})\,.
$$
\end{Cor}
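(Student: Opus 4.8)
The plan is to deduce the corollary from Lemma~\ref{L-Energ}, applied with $A$ replaced by the bounded nonnegative self-adjoint operator $\Phi_n(A)$, together with the monotone convergence Lemma~\ref{L-Monotone}. First I would record the elementary fact that, since the sequence $(\Phi_n(\ll))_{n\ge1}$ is nondecreasing with limit $\ll$, one has $0\le\Phi_n(\ll)\le\ll$ for all $\ll\ge0$ and all $n$. The spectral measure of $\Phi_n(A)$ being the pushforward $E\circ\Phi_n^{-1}$, this shows that the finite energy condition \eqref{TrTA} is inherited by $\Phi_n(A)$, so that Lemma~\ref{L-Energ} applies and yields the nonnegative trace-class operator $T_n:=T^{1/2}\Phi_n(A)T^{1/2}$, with $\la e_j|T_n|e_k\ra=\tau_j^{1/2}\tau_k^{1/2}\int_0^\infty\Phi_n(\ll)\la e_j|E(d\ll)|e_k\ra$ and $\Tr_\scrH(T\Phi_n(A))=\Tr_\scrH(T_n)=\sum_{j\ge1}\tau_j\int_0^\infty\Phi_n(\ll)\la e_j|E(d\ll)|e_j\ra$; the first equality in the trace formula holds by cyclicity since $\Phi_n(A)\in\cL(\scrH)$.

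Next I would check the hypotheses of Lemma~\ref{L-Monotone}(ii) for $(T_n)_{n\ge1}$. The sequence is nondecreasing, because $\Phi_n(A)\le\Phi_{n+1}(A)$ and conjugation by the bounded operator $T^{1/2}$ preserves the order of bounded nonnegative operators; and the scalar monotone convergence theorem, used first inside each integral and then in the sum over $j$ with the nonnegative weights $\tau_j$, gives
$$
\sup_{n\ge1}\Tr_\scrH(T_n)=\sum_{j\ge1}\tau_j\int_0^\infty\ll\,\la e_j|E(d\ll)|e_j\ra=\Tr_\scrH(T^{1/2}AT^{1/2})<\infty
$$
by \eqref{TrT1/2AT1/2=} and \eqref{TrTA}. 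Hence Lemma~\ref{L-Monotone} provides $T_\infty=T_\infty^*\ge0$ in $\cL(\scrH)$ with $T_n\to T_\infty$ weakly and $\Tr_\scrH(T_n)\to\Tr_\scrH(T_\infty)$; in view of the trace identity above, this already gives $\Tr_\scrH(T\Phi_n(A))\to\Tr_\scrH(T^{1/2}AT^{1/2})$ once we know $T_\infty=T^{1/2}AT^{1/2}$, so it only remains to identify the weak limit.

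This identification is the step requiring real care. Since $T_\infty$ and $T^{1/2}AT^{1/2}$ are both bounded and $(e_j)_{j\ge1}$ is complete, it suffices to prove $\la e_j|T_n|e_k\ra\to\la e_j|T^{1/2}AT^{1/2}|e_k\ra$ for every $j,k$. When $\tau_j\tau_k=0$ both sides vanish; otherwise $e_j,e_k\in\Ker(T)^\perp\subset\Dom(A^{1/2})$ by \eqref{J0infDom}, and putting $g_n:=\mathrm{id}-\Phi_n$ --- which is continuous, nonnegative, dominated by $\mathrm{id}$, and decreases pointwise to $0$ --- one has $\Dom(g_n(A)^{1/2})\supset\Dom(A^{1/2})$ and, using \eqref{T1/2AT1/2=},
$$
\la e_j|T^{1/2}AT^{1/2}|e_k\ra-\la e_j|T_n|e_k\ra=\tau_j^{1/2}\tau_k^{1/2}\int_0^\infty g_n(\ll)\,\la e_j|E(d\ll)|e_k\ra=\tau_j^{1/2}\tau_k^{1/2}\la g_n(A)^{1/2}e_j|g_n(A)^{1/2}e_k\ra\,.
$$
By Cauchy--Schwarz the modulus of this is at most $\tau_j^{1/2}\tau_k^{1/2}\bigl(\int_0^\infty g_n\,d\la e_j|E|e_j\ra\bigr)^{1/2}\bigl(\int_0^\infty g_n\,d\la e_k|E|e_k\ra\bigr)^{1/2}$, and each integral tends to $0$ by dominated convergence, since $0\le g_n\le\mathrm{id}$ and $\ll$ is integrable against the finite measure $\la e_j|E(d\ll)|e_j\ra$ (again by \eqref{J0infDom}). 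This gives $T_\infty=T^{1/2}AT^{1/2}$, hence both the weak convergence and the trace convergence asserted in the corollary.

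The argument is essentially bookkeeping, and I expect the only genuine obstacle to be this last identification of the weak limit --- more precisely, ensuring that the manipulation of $\int_0^\infty g_n(\ll)\la e_j|E(d\ll)|e_k\ra$ through the (possibly unbounded) operator $g_n(A)^{1/2}$ is legitimate. This is exactly what the domain inclusion \eqref{J0infDom} from Lemma~\ref{L-Energ} is there to guarantee, so no serious difficulty is anticipated.
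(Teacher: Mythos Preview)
Your proof is correct and follows the same overall scheme as the paper's: apply Lemma~\ref{L-Energ} to get the trace-class operators $T_n=T^{1/2}\Phi_n(A)T^{1/2}$, verify the monotonicity and trace bound needed for Lemma~\ref{L-Monotone}, and then identify the weak limit with $T^{1/2}AT^{1/2}$.

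The only difference lies in the identification step. You check convergence of the matrix elements $\la e_j|T_n|e_k\ra$ one pair at a time, writing the discrepancy as $\tau_j^{1/2}\tau_k^{1/2}\la g_n(A)^{1/2}e_j|g_n(A)^{1/2}e_k\ra$ and killing it with Cauchy--Schwarz plus scalar dominated convergence. The paper instead observes, via the same quadratic-form computation used in the proof of Lemma~\ref{L-Energ}, that $T^{1/2}AT^{1/2}-T_n\ge 0$ for every $n$, and then computes
$$
\|T^{1/2}AT^{1/2}-T_n\|_1=\Tr_\scrH(T^{1/2}AT^{1/2}-T_n)=\sum_{j\ge1}\tau_j\int_0^\infty(\ll-\Phi_n(\ll))\,\la e_j|E(d\ll)|e_j\ra\to 0
$$
by scalar monotone convergence. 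This buys a bit more than what is stated in the corollary: one gets $T_n\to T^{1/2}AT^{1/2}$ in trace norm, not just weakly. Your argument delivers exactly the weak convergence claimed, which is all that is needed downstream.
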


\begin{proof}
Since $E$ is a resolution of the identity on $[0,+\infty)$, and since $\Phi_n$ is continuous, bounded and with values in $[0,+\infty)$, the operators $\Phi_n(A)$ satisfy
$$
0\le\Phi_1(A)\le\Phi_2(A)\le\ldots\le\Phi_n(A)\le\Phi_n(A)^*\le\left(\sup_{z\ge 0}\Phi_n(z)\right)I_\scrH\,.
$$
Set $R_n:=T^{1/2}\Phi_n(A)T^{1/2}$; by definition, one has $R_n=R_n^*\in\cL(\fH)$ and 
$$
0\le R_1\le R_2\le\ldots\le R_n\le\ldots
$$
On account of \eqref{TrTA}, one has
$$
\Tr_\scrH(R_n)=\sum_{j\ge 1}\tau_j\int_0^\infty\Phi_n(\ll)\la e_j|E(d\ll)|e_j\ra\le\sum_{j\ge 1}\tau_j\int_0^\infty\ll\la e_j|E(d\ll)|e_j\ra<\infty\,.
$$
By Lemma \ref{L-Monotone}, one has $R_n\to R$ weakly, with $R\in\cL^1(\scrH)$ and $R=R^*\ge 0$. Finally
$$
T^{1/2}AT^{1/2}-R_n=\sum_{j,k\ge 1}\tau_j^{1/2}\tau_k^{1/2}\left(\int_0^\infty(\ll-\Phi_n(\ll))\la e_j|E(d\ll)|e_k\ra\right)|e_j\ra\la e_k|
$$
using the definition \eqref{T1/2AT1/2=} of $T^{1/2}AT^{1/2}$ given in Lemma \ref{L-Energ}, so that
$$
\ba
\la x|T^{1/2}AT^{1/2}-R_n|x\ra=&\int_0^\infty(\ll-\Phi_n(\ll))\La\sum_{j\ge 1}\tau_j^{1/2}\la e_j|x\ra e_j|E(d\ll)|\sum_{k\ge 1}\tau_k^{1/2}\la e_k|x\ra e_k\Ra
\\
=&\int_0^\infty(\ll-\Phi_n(\ll))\la T^{1/2}x|E(d\ll)|T^{1/2}x\ra\ge 0\,.
\ea
$$
Hence $T^{1/2}AT^{1/2}-R_n=(T^{1/2}AT^{1/2}-R_n)^*\in\cL^1(\scrH)$ with $T^{1/2}AT^{1/2}-R_n\ge 0$, and
$$
\ba
\|T^{1/2}AT^{1/2}-R_n\|_1=&\Tr_\scrH(T^{1/2}AT^{1/2}-R_n)
\\
=&\sum_{j\ge 1}\tau_j\int_0^\infty(\ll-\Phi_n(\ll))\la e_j|E(d\ll)|e_j\ra\to 0
\ea
$$
as $n\to\infty$ by monotone convergence. Therefore $R_n\to T^{1/2}AT^{1/2}$ in $\cL^1(\scrH)$, and
$$
\Tr_{\scrH}(T\Phi_n(A))=\Tr_{\scrH}(T^{1/2}\Phi_n(A)T^{1/2})\to\Tr_{\scrH}(T^{1/2}AT^{1/2})\quad\text{ as }n\to\infty\,.
$$
\end{proof}

\begin{Lem}\lb{L-UBdCyclicity}
Let $S\in\cL^1(\cH\otimes\wtilde\cH)$ satisfy the partial trace condition $\Tr_{\wtilde\scrH}(S)=T$. Then $0\le S^{1/2}(A\otimes I_{\wtilde\scrH})S^{1/2}=(S^{1/2}(A\otimes I_{\wtilde\scrH})S^{1/2})^*\in\cL^1(\scrH\otimes\wtilde\scrH)$ and one has
$$
\Tr_{\scrH\otimes\wtilde\scrH}(S^{1/2}(A\otimes I_{\wtilde\scrH})S^{1/2})=\Tr_{\scrH}(T^{1/2}AT^{1/2})\,.
$$
\end{Lem}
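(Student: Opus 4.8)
The plan is to reduce the unbounded operator $A$ to bounded spectral truncations, where cyclicity of the trace and the defining property of the partial trace apply directly, and then to pass to the limit using Corollary~\ref{C-Energ}. Concretely, I would set $\Phi_n(r):=\min(r,n)$, so that $0\le\Phi_1(r)\le\Phi_2(r)\le\cdots\to r$ on $\bR_+$, and $A_n:=\Phi_n(A)=\int_0^\infty\min(\ll,n)E(d\ll)\in\cL(\scrH)$, which is bounded with $A_n=A_n^*\ge0$. Let $(\Phi_k)_{k\ge1}$ be a complete orthonormal system of eigenvectors of $S$ with $S\Phi_k=\si_k\Phi_k$, $\si_k\ge0$.

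Since $A_n\otimes I_{\wtilde\scrH}$ is bounded and $S\in\cL^1(\scrH\otimes\wtilde\scrH)$, the operator $S^{1/2}(A_n\otimes I_{\wtilde\scrH})S^{1/2}$ is trace-class, and by cyclicity of the trace together with the defining property of the partial trace $T=\Tr_{\wtilde\scrH}(S)$ applied to the bounded operator $A_n$,
$$
\Tr_{\scrH\otimes\wtilde\scrH}\!\big(S^{1/2}(A_n\otimes I_{\wtilde\scrH})S^{1/2}\big)=\Tr_{\scrH\otimes\wtilde\scrH}\!\big((A_n\otimes I_{\wtilde\scrH})S\big)=\Tr_\scrH(A_nT)=\Tr_\scrH(T\Phi_n(A)).
$$
Expanding the left-hand side on the eigenbasis $(\Phi_k)$ gives $\sum_{k\ge1}\si_k\int_0^\infty\min(\ll,n)\la\Phi_k|(E\otimes I_{\wtilde\scrH})(d\ll)|\Phi_k\ra$, which is nondecreasing in $n$ and, by Tonelli's theorem, converges as $n\to\infty$ to $\sum_{k\ge1}\si_k\int_0^\infty\ll\,\la\Phi_k|(E\otimes I_{\wtilde\scrH})(d\ll)|\Phi_k\ra$. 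On the other hand, the standing hypothesis \eqref{TrTA} is exactly the finite energy condition for the pair $(T,A)$, so Corollary~\ref{C-Energ} (applied with these $\Phi_n$) gives $\Tr_\scrH(T\Phi_n(A))\to\Tr_\scrH(T^{1/2}AT^{1/2})<\infty$. Comparing the two limits I conclude that the finite energy condition \eqref{TrTA} also holds for the pair $(S,A\otimes I_{\wtilde\scrH})$ on $\scrH\otimes\wtilde\scrH$ — whose spectral measure is $E\otimes I_{\wtilde\scrH}$ — and that $\sum_{k\ge1}\si_k\int_0^\infty\ll\,\la\Phi_k|(E\otimes I_{\wtilde\scrH})(d\ll)|\Phi_k\ra=\Tr_\scrH(T^{1/2}AT^{1/2})$.

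With the finite energy condition in hand, I would simply invoke Lemma~\ref{L-Energ} for the self-adjoint operator $A\otimes I_{\wtilde\scrH}\ge0$ and the trace-class operator $S$: it yields that $S^{1/2}(A\otimes I_{\wtilde\scrH})S^{1/2}$ (defined by formula \eqref{T1/2AT1/2=}) is a nonnegative self-adjoint element of $\cL^1(\scrH\otimes\wtilde\scrH)$ whose trace, by \eqref{TrT1/2AT1/2=}, equals $\sum_{k\ge1}\si_k\int_0^\infty\ll\,\la\Phi_k|(E\otimes I_{\wtilde\scrH})(d\ll)|\Phi_k\ra$, hence equals $\Tr_\scrH(T^{1/2}AT^{1/2})$ by the previous step; this is the asserted identity. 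The only step that needs genuine care is the limiting argument of the middle paragraph — the scalar monotone convergence in the double sum/integral over $\scrH\otimes\wtilde\scrH$, and the trace-norm convergence $\Tr_\scrH(T\Phi_n(A))\to\Tr_\scrH(T^{1/2}AT^{1/2})$ provided by Corollary~\ref{C-Energ}. Once the finite energy condition has been transferred from $(T,A)$ to $(S,A\otimes I_{\wtilde\scrH})$, everything else is a direct application of Lemma~\ref{L-Energ}.
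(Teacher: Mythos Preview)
Your proof is correct and follows essentially the same route as the paper: approximate $A$ by bounded spectral truncations $\Phi_n(A)$, use cyclicity of the trace together with the partial-trace identity at the bounded level, and pass to the limit by monotone convergence. The only cosmetic differences are your choice $\Phi_n(r)=\min(r,n)$ versus the paper's $\Phi_n(r)=r/(1+r/n)$, and that you invoke Lemma~\ref{L-Energ} and Corollary~\ref{C-Energ} explicitly (making the transfer of the finite-energy condition from $(T,A)$ to $(S,A\otimes I_{\wtilde\scrH})$ visible), whereas the paper cites Lemma~\ref{L-Monotone} directly for both limits.
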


\begin{proof}
For all $n\ge 1$, set $A_n=\Phi_n(A)\in\cL(\fH)$, with $\Phi_n(r):=\frac{r}{1+r/n}$ for all $r\ge 0$. Thus
$$
A_n=A_n^*\quad\text{ and }\quad 0\le A_1\le A_2\le\ldots\le A_n\le\ldots
$$
Hence $S^{1/2}(A_n\otimes I_{\wtilde\scrH})S^{1/2}=(S^{1/2}(A_n\otimes I_{\wtilde\scrH})S^{1/2})^*$ for all $n\ge 1$, with
$$
0\le S^{1/2}(A_1\otimes I_{\wtilde\scrH})S^{1/2}\le S^{1/2}(A_2\otimes I_{\wtilde\scrH})S^{1/2}\le\ldots\le S^{1/2}(A_n\otimes I_{\wtilde\scrH})S^{1/2}\le\ldots
$$
The partial trace condition implies that
$$
\ba
\Tr_{\scrH\otimes\wtilde\scrH}(S^{1/2}(A_n\otimes I_{\wtilde\scrH})S^{1/2})=\Tr_{\scrH\otimes\wtilde\scrH}(S(A_n\otimes I_{\wtilde\scrH}))
\\
=\Tr_{\scrH}(TA_n)=\Tr_{\scrH}(T^{1/2}A_nT^{1/2})
\ea
$$
while
$$
\ba
\Tr_{\scrH\otimes\wtilde\scrH}(S^{1/2}(A_n\otimes I_{\wtilde\scrH})S^{1/2})&\to\Tr_{\scrH\otimes\wtilde\scrH}(S^{1/2}(A\otimes I_{\wtilde\scrH})S^{1/2})
\\
\Tr_{\scrH}(T^{1/2}A_nT^{1/2})&\to\Tr_{\scrH}(T^{1/2}AT^{1/2})
\ea
$$
as $n\to\infty$ by Lemma \ref{L-Monotone}. This implies the announced equality by uniqueness of the limit.
\end{proof}

\end{appendix}


\end{document}